\newcommand{\np}{{\em NP}\xspace}
\newcommand{\nphard}{\np-hard\xspace}
\newtheorem{thm}{Theorem}[section]
\newtheorem{claim}[thm]{Claim}
\newtheorem{theorem}[thm]{Theorem}
\newtheorem{lemma}[thm]{Lemma}
\newtheorem{corollary}[thm]{Corollary}
{\theoremstyle{remark} \newtheorem{remark}[thm]{Remark}}
{\theoremstyle{definition} \newtheorem{definition}[thm]{Definition} 
\newtheorem{algorithm}{Algorithm}}
\newenvironment{proofof}[1]{\begin{proof}[Proof of #1]}{\end{proof}}
\newenvironment{labellist}[1][A]
{\begin{list}{{#1}\arabic{enumi}.}{\usecounter{enumi}\addtolength{\leftmargin}{-1ex}
      \addtolength{\labelwidth}{\widthof{{#1}5}}}} 
{\end{list}}
\DeclareMathOperator{\E}{E}
\DeclareMathOperator{\poly}{poly}
\DeclareMathOperator{\argmin}{argmin}
\DeclareMathOperator{\conv}{conv}
\newcommand{\R}{\ensuremath{\mathbb R}}
\newcommand{\C}{\ensuremath{\mathcal{C}}}
\newcommand{\T}{\ensuremath{\mathcal T}}
\newcommand{\Pc}{\ensuremath{\mathcal P}}
\newcommand{\Qc}{\ensuremath{\mathcal Q}}
\newcommand{\W}{\ensuremath{\mathcal W}}
\newcommand{\OPT}{\ensuremath{\mathit{OPT}}}
\newcommand{\charge}{\ensuremath{\mathsf{charge}}}
\newcommand{\sm}{\ensuremath{\setminus}}
\newcommand{\es}{\ensuremath{\emptyset}}
\newcommand{\sse}{\subseteq}
\newcommand{\frall}{\ensuremath{\text{ for all }}}
\newcommand{\mlp}{\ensuremath{\mathsf{MLP}}\xspace}
\newcommand{\kmlp}{\ensuremath{k\text{-}}\mlp}
\newcommand{\cg}{\ensuremath{\mathit{CG}}}
\newcommand{\bnslb}{\ensuremath{\mathsf{BNSLB}}}
\newcommand{\lb}{\ensuremath{\mathsf{LB}}}
\newcommand{\optbns}[1]{\ensuremath{\mathsf{BNS}({#1})}}
\newcommand{\into}{\ensuremath{\mathrm{in}}}
\newcommand{\out}{\ensuremath{\mathrm{out}}}
\newcommand{\lmst}{\ensuremath{\ell\text{-}\mathsf{MST}}\xspace}
\newcommand{\pcst}{\ensuremath{\mathsf{PCST}}\xspace}
\newcommand\Time{\ensuremath{\mathsf{T}}}
\newcommand{\lat}{\ensuremath{\mathsf{lat}}}
\newcommand{\Pcol}{\ensuremath{\Pc}}
\newcommand{\Tcol}{\ensuremath{\T}}
\newcommand{\vP}{\ensuremath{\vec{P}}}
\newcommand{\vQ}{\ensuremath{\vec{Q}}}
\newcommand{\vpcol}{\ensuremath{\Pc}}
\newcommand{\vtcol}{\ensuremath{\T}}
\newcommand{\e}{\ensuremath{\epsilon}}
\newcommand{\ve}{\ensuremath{\varepsilon}}
\newcommand{\gm}{\ensuremath{\gamma}}
\newcommand{\Gm}{\ensuremath{\Gamma}}
\newcommand{\ld}{\ensuremath{\lambda}}
\newcommand{\Ld}{\ensuremath{\Lambda}}
\newcommand{\kp}{\ensuremath{\kappa}}
\newcommand{\al}{\ensuremath{\alpha}}
\newcommand{\tht}{\ensuremath{\theta}}
\newcommand{\dt}{\ensuremath{\delta}}
\newcommand{\Dt}{\ensuremath{\Delta}}
\newcommand{\sg}{\ensuremath{\sigma}}
\newcommand{\TS}{\ensuremath{\mathsf{TS}}}
\newcommand{\bb}[1]{\ensuremath{\llbracket {#1} \rrbracket}}
\newcommand{\iopt}{\ensuremath{O^*}}
\newcommand{\bT}{\ensuremath{T}}
\newcommand{\assign}{\ensuremath{\leftarrow}}
\newcommand{\wt}{\ensuremath{\mathsf{wt}}}
\newcommand{\lppathts}{\ensuremath{(\text{LP}^\TS_\Pcol)}}
\newcommand{\lptreets}{\ensuremath{(\text{LP}^\TS_\Tcol)}}
\newcommand{\ceil}[1]{\ensuremath{\left\lceil#1\right\rceil}}
\newcommand{\floor}[1]{\ensuremath{\left\lfloor#1\right\rfloor}}
\title{Linear-Programming based Approximation Algorithms for Multi-Vehicle Minimum Latency Problems}
\author{
    Ian Post\thanks{{\tt ian@ianpost.org, cswamy@uwaterloo.ca}.  
    Dept. of Combinatorics and
    Optimization, Univ. Waterloo, Waterloo, ON N2L 3G1.  Supported in part by NSERC
    grant 327620-09.  The second author is also supported by an NSERC Discovery
    Accelerator Supplement Award and an Ontario Early Researcher Award.}  
\and
\addtocounter{footnote}{-1} 
    Chaitanya Swamy\footnotemark 
}
\date{}
\begin{document}

\maketitle

\begin{abstract}
We consider various {\em multi-vehicle versions of the minimum latency problem}. There is a
fleet of $k$ vehicles located at one or more depot nodes, and we seek a collection of routes
for these vehicles that visit all nodes so as to minimize the total latency incurred,
which is the sum of the client waiting times. We obtain an $8.497$-approximation for
the version where vehicles may be located at multiple depots and a $7.183$-approximation
for the version where all vehicles are located at the same depot, both of which are the
first improvements on this problem in a decade. Perhaps more significantly, our algorithms 
exploit various LP-relaxations for minimum-latency problems. 
We show how to effectively leverage two classes of LPs---{\em configuration LPs} and 
{\em bidirected LP-relaxations}---that are often believed to be quite powerful but have
only sporadically been effectively leveraged for network-design and vehicle-routing
problems.   
This gives the first concrete evidence of the effectiveness of LP-relaxations for this
class of problems.

The $8.497$-approximation the multiple-depot version is obtained by rounding a near-optimal
solution to an underlying configuration LP for the problem. 
The $7.183$-approximation can be obtained both via rounding a bidirected LP for the
single-depot problem or via more combinatorial means. The latter approach uses a bidirected LP
to obtain the following key result that is of independent interest: 
for any $k$, we can efficiently compute a
rooted tree that is at least as good, with respect to the prize-collecting objective
(i.e., edge cost + number of uncovered nodes) as the best collection of $k$ rooted
paths. This substantially generalizes a result of Chaudhuri et al.~\cite{ChaudhuriGRT03}
for $k=1$, yet our proof is significantly simpler. 
Our algorithms are versatile and extend easily to handle various extensions involving: (i)
weighted sum of latencies, (ii) constraints specifying which depots may serve which nodes,
(iii) node service times.   

Finally, we propose a configuration LP that sheds further light on the power of
LP-relaxations for minimum-latency problems. 
We prove that the integrality gap of this LP is at most $3.592$, even for the multi-depot
problem, both via an efficient rounding procedure, and by showing that it is at least as
powerful as a stroll-based lower bound that is oft-used for minimum-latency problems; the
latter result implies an integrality gap of at most $3.03$ when $k=1$.  
Although, we do not know how to solve this LP in general, it can be solved
(near-optimally) when $k=1$, and this yields an LP-relative $3.592$-approximation for the
single-vehicle problem, matching (essentially) the current-best approximation ratio for
this problem.  
\end{abstract}

\section{Introduction} \label{intro}
Vehicle-routing problems constitute a broad class of combinatorial-optimization
problems that find a wide range of applications and have been widely studied in the
Operations Research and Computer Science communities (see, e.g.,~\cite{TothV02}). 
A fundamental vehicle-routing problem is the {\em minimum latency problem} (\mlp),
variously known as the {\em traveling repairman problem} or the 
{\em delivery man problem}~\cite{AfratiCPPP86,Minieka89,FischettiLM93,BC+94}, 
wherein, taking a client-oriented perspective, we seek a route starting at a given root
node that visits all client nodes and minimizes the total client waiting time.
We consider various multi-vehicle versions of the minimum latency problem (\mlp). In these 
problems, there is a fleet of $k$ vehicles located at one or more depot nodes, and we seek
a collection of routes for these vehicles that together visit all the client nodes so as
to minimize the total latency incurred, which is the sum of the client waiting times. 

Formally, we consider the {\em multi-depot $k$-vehicle minimum latency problem}
(multi-depot \kmlp), which is defined as follows. We are given a complete undirected graph
$G=(V,E)$ on $n$ nodes, with metric edge costs $\{c_e\}$, and $k$ not necessarily distinct
root/depot nodes $r_1,\ldots,r_k\in V$. 
A feasible solution consists of $k$ paths $P_1,\ldots,P_k$, where each path $P_i$
starts at root $r_i$, 
such that the $P_i$s cover all the nodes. 
The waiting time or {\em latency} of a node $v$ that is visited by path $P_i$, is the 
distance from $r_i$ to $v$ along $P_i$, and is denoted by $c_{P_i}(v)$. 
The goal is to minimize the {\em total latency} 
$\sum_{i=1}^k\sum_{v\in P_i:v\neq r_i}c_{P_i}(v)$ incurred.%
\footnote{Multi-depot \kmlp is often stated in terms of finding $k$ tours
starting at $r_1,\ldots,r_k$; 
since the last edge 
on a tour does not contribute to the latency of any node, the two formulations are
equivalent. We find the path-formulation to be more convenient.} 
(Due to metric costs, one may assume that any two $P_i$s are node disjoint, unless they
share a common root, which is then the only shared node.)
We refer to the special case where all depots are identical, i.e., $r_1=r_2=\ldots=r_k$, as
{\em single-depot \kmlp}, which we abbreviate simply to \kmlp.

In addition to self-evident applications in logistics, the problem of finding an  
optimal routing as modeled by multi-depot \kmlp 
can also be motivated from the perspective of searching a graph (e.g., the web graph) for a
hidden treasure~\cite{BC+94,KoutsoupiasPY96,AusielloLM00}; if the treasure is placed at a
random node of the graph then multi-depot \kmlp captures the problem of minimizing the
expected search time using $k$ search agents (e.g., web crawlers).
Even 
1-\mlp is known to be {\em MAXSNP}-hard for
general metrics~\cite{BC+94,PapadimitriouY93} and \nphard for tree metrics~\cite{Sitters02},
so we focus on approximation algorithms. 

\vspace{-1ex}
\paragraph{Our results and techniques.}
We obtain approximation guarantees of $8.497$ for multi-depot \kmlp
(Theorem~\ref{multithm}) and 
$7.183$ for (single-depot) \kmlp (Theorem~\ref{kmlpthm}),
which are the 
 first improvements on the respective problems in over a decade. 
The previous best approximation ratios for the multi- and single- depot problems were
$12$ (by combining~\cite{ChekuriK04,ChaudhuriGRT03}) and $8.497$ (by
combining~\cite{FakcharoenpholHR03,ChaudhuriGRT03}; see also~\cite{FakcharoenpholHR07}) 
respectively.    

Perhaps more significantly, as we elaborate below, 
our algorithms exploit various linear-programming (LP)
relaxations, including various {\em configuration-style LP relaxations} as well as a  
{\em bidirected LP-relaxation}. This is noteworthy for two reasons.  
First, it gives the {\em first concrete evidence} of the
effectiveness of LP-relaxations for minimum latency problems. Second, we show how to
effectively leverage two classes of LPs---bidirected LPs and configuration LPs---that are
often believed to be quite powerful but have only sporadically been effectively leveraged
for network-design and vehicle-routing problems. 
Previously, Chakrabarty and Swamy~\cite{ChakrabartyS11} had proposed some
LP-relaxations (including a configuration LP) for minimum-latency problems 
but they could not improve upon the current-best approximation guarantees for these
problems via these LPs.
Our LPs are inspired by their formulations, and coincide with their LPs in some cases, 
but are subtly stronger, and, importantly (as noted above), our guarantees do indeed
improve the state-of-the-art for 
these problems and testify to the effectiveness of LP-based methods. 

Our algorithms are versatile and extend
easily to handle various extensions involving weighted sum of node latencies, node-depot
service constraints, 
and node service times (Section~\ref{extn}). 

Finally, we propose a configuration LP 
that sheds further light on the power of LP-relaxations for
minimum-latency problems and why they merit further investigation.
We prove that this LP has integrality gap at most $\mu^*<3.5912$ for (the general
setting of) multi-depot \kmlp (see Theorem~\ref{lp2facts}), both 
by devising an efficient rounding procedure, and by showing that this LP is at least as
strong as a combinatorial stroll-based bound that is frequently used for minimum-latency
problems, especially when $k=1$ for which this leads to the current-best approximation
ratio. The latter result implies 
an integrality gap of at most $3.03$ when $k=1$ (due to the analysis of the $\ell$-stroll
lower bound in~\cite{ArcherB10}). 
We do not know how to solve this LP efficiently in general, but we can efficiently obtain a
$(1+\e)$-approximate solution when $k=1$, and thereby efficiently obtain an LP-relative
$(\mu^*+\ve)$-approximation, which essentially matches the current-best
approximation ratio for single-vehicle (i.e., $k=1$) \mlp. 

\medskip
We now sketch the main ideas underlying our algorithms and analyses.
Our $8.497$-approximation for multi-depot \kmlp (Section~\ref{multi}) leverages a natural
configuration LP \eqref{lp1}, where we define a configuration for time $t$ and vehicle $i$
to be an $r_i$-rooted path of length at most $t$.
Using known results on orienteering~\cite{ChaudhuriGRT03} and the arguments
in~\cite{ChakrabartyS11}, one can compute a fractional ``solution'' of cost at most 
$\OPT_\Pc$, 
where (roughly speaking) the solution computed specifies for every $t$ and $i$, a
distribution over $r_i$-rooted {\em trees} (instead of paths) of length roughly $t$, such  
that this ensemble covers nodes to the appropriate extents (Lemma~\ref{lpsolve}). 
The rounding algorithm is then simple: 
we consider time points of geometrically increasing value, we sample
a tree for each time point $t$ and vehicle $i$ from the distribution for $t,i$, convert
this tree to a tour, and finally, for each vehicle $i$, we concatenate the various tours
obtained for $i$ to obtain $i$'s route. 
Compared to the combinatorial algorithm in~\cite{ChekuriK04}, we gain significant savings
from the fact that the LP solution readily yields, for each time $t$, a random $k$-tuple of
trees whose expected coverage can be related to the coverage of the LP solution. 
In contrast, \cite{ChekuriK04} devise an algorithm for the cost version of their variant 
of max-coverage to obtain a $k$-tuple of trees with the desired coverage and
lose various factors in the process. 

The $7.183$-approximation algorithm for single-depot \kmlp (Section~\ref{kmlp}) relies
crucially on {\em bidirected LPs}.
We obtain this guarantee both by rounding a compact bidirected LP-relaxation for the
problem and via more combinatorial arguments where we utilize bidirected LPs to furnish a
key ingredient of the algorithm.
The bidirected LP-relaxation \eqref{lp3} (which is a relaxation for multi-depot \kmlp) works
with the digraph obtained by bidirecting the edges of the input (complete)
graph. The LP specifies the extent to which nodes are covered by each vehicle at each
time, and the extent $z^i_{a,t}$ to which each arc $a$ has been traversed by vehicle $i$'s
route up to time $t$.
We impose suitable node-degree and node-connectivity constraints on the
$z^i_{a,t}$s, and observe that, for each time $t$, one can then use (a polytime version
of) the arborescence-packing results of Bang-Jensen et al.~\cite{BangjensenFJ95} 
(Theorem~\ref{arbpoly}) to decompose $\{\sum_i z^i_{a,t}\}_a$ into a distribution of
$r$-rooted trees with expected length at most $kt$ and covering at least as many nodes as
the LP does by time $t$.
We convert each tree in the support into a $k$-tuple of tours (of length at most
$4t$) 
and stitch together these tours using a concatenation-graph argument similar to the one 
in~\cite{GoemansK98} (losing a $\frac{\mu^*}{2}$-factor), which also shows that the fact
that we have a distribution of trees for each $t$ instead of a single trees does not cause
any problems.  
Theorem~\ref{arbpoly} is precisely what leads to our improvement
over~\cite{FakcharoenpholHR07}: we {\em match} the coverage of an optimal LP-solution at
each step (incurring a certain blow-up in cost),  
whereas~\cite{FakcharoenpholHR07} sequentially find $k$ tours, causing them to lag behind
in coverage and incur a corresponding loss in approximation. 

The combinatorial arguments rely on the following result that is of
independent interest. 
We show that one can efficiently compute an $r$-rooted tree that is at least as good,
with respect to a prize-collecting objective that incorporates both the edge cost and the
number of nodes covered, as the best collection of (any number of, and perhaps non-simple) 
$r$-rooted paths (Theorem~\ref{pcstroll}). 
We obtain this by formulating a bidirected LP for the prize-collecting problem of
finding the desired collection of paths, and rounding it using a polytime version (that we 
prove) of the arborescence-packing results of~\cite{BangjensenFJ95} for weighted digraphs.  
Theorem~\ref{pcstroll} also implies that for every $\ell$, one can efficiently compute an
$r$-rooted tree, or a distribution over two $r$-rooted trees, which we call a 
{\em bipoint tree}, that, in expectation, spans at least $\ell$ nodes and has cost at most
the minimum total cost of a collection of $r$-rooted paths spanning $\ell$ nodes
(Corollary~\ref{pccor}).   
Again, this is where we improve over~\cite{FakcharoenpholHR07} since we match the coverage
of an optimal (integer) solution at each step.  
We compute these objects for all values of $\ell$, convert each constituent tree 
into $k$ tours, and stitch them together as before.

Theorem~\ref{pcstroll} relating prize-collecting trees and path-collections
substantially generalizes a result of Chaudhuri et al.~\cite{ChaudhuriGRT03},  
who prove an analogous result for the special case where one compares (the computed tree) 
against the best {\em single path}. 
Whereas this 
suffices for {\em single-vehicle} \mlp 
(and allowed~\cite{ChaudhuriGRT03} to improve the
approximation for single-vehicle \mlp), 
it 
does not help in  multiple-vehicle settings. (This is 
because to lower bound the $\ell$-th smallest latency incurred by the optimal solution,
one needs to consider a collection of {\em $k$ paths} that together cover $\ell$ nodes.)
Notably, our proof of our more general result is significantly simpler and cleaner (and
different) than the one in~\cite{ChaudhuriGRT03}.
We remark that the approach in~\cite{ChaudhuriGRT03} (for $k= 1$), where one ``guesses''
the endpoint of an optimum path, 
is computationally infeasible for large $k$.

\vspace{-1ex}
\paragraph{Related work.}
Although single-vehicle \mlp (which we refer to simply as \mlp) has attracted much
attention in the Computer Science and Operations Research communities, 
there is little prior work on multi-vehicle versions of \mlp, especially from the
perspective of approximation algorithms.  
Chekuri and Kumar~\cite{ChekuriK04}, and Fakcharoenphol et al.~\cite{FakcharoenpholHR03}  
seem to be the first ones to consider multi- and single- depot \kmlp
respectively; they obtained approximation ratios of $12\beta$ and $8.497\beta$ for these
problems respectively, where $\beta$ is the approximation ratio of the \lmst problem.%
\footnote{The $12\beta$-approximation in~\cite{ChekuriK04} is a consequence of the
algorithm they develop for the cost-version of the max-coverage variant that they
consider.  
Although the cardinality-version of their problem now admits a better approximation ratio
since it can be cast as a submodular-function maximization problem subject to a matroid
constraint~\cite{CalinescuCPV11}, this improvement does not apply to the cost-version, which
gives rise to multiple knapsack constraints.}
Subsequently, a result of Chaudhuri et al.~\cite{ChaudhuriGRT03} relating
prize-collecting trees and paths provided a general tool that allows one eliminate the
$\beta$ term in the above approximation ratios. 
Recently Sitters~\cite{Sitters14} obtained a PTAS for \mlp 
on trees, the Euclidean plane, and planar graphs, and mentions that the underlying
techniques extend to yield a PTAS on these graphs for ({\em single-depot}) \kmlp for any 
{\em constant $k$}. We are not aware of any other work on \kmlp.

We now discuss some relevant work on \mlp (i.e., $k=1$). 
\mlp (and hence \kmlp) is known to be 
hard to approximate to better than some constant factor~\cite{BC+94,PapadimitriouY93}, and
\nphard even on trees~\cite{Sitters02}.
While much work in the Operations Research literature has focused on exactly solving \mlp (in
exponential time)~\cite{Lucena90,SimchileviB91,FischettiLM93,BiancoMR93},
Blum et al.~\cite{BC+94}
devised the first constant-factor approximation for \mlp via the process of finding
tours of suitable lengths and/or node coverages and concatenating them. They obtained a
$\min\{144,8\beta\}$-approximation. 
Subsequently, \cite{GoemansK98} refined the
concatenation procedure in~\cite{BC+94} and proposed the device of a concatenation graph
to analyze this process, which yielded an improved $\mu^*\beta$-approximation, where
$\mu^*<3.5912$ is the solution to $\mu\ln\mu=1+\mu$.
The procedure of stitching together tours and its analysis via the concatenation graph
have since become standard tools in the study of minimum-latency problems.
Archer et al.~\cite{ArcherLW08} showed that one can replace the \lmst-subroutine
in the algorithm of~\cite{GoemansK98} by a so-called 
{\em Lagrangian-multiplier preserving} (LMP) $\beta'$-approximation algorithm for the
related {\em prize-collecting Steiner tree} (\pcst) problem, and
thereby achieve a $2\mu^*$-approximation using the LMP $2$-approximation for
\pcst~\cite{GoemansW95}. The current-best approximation 
for \mlp is due to Chaudhuri et al.~\cite{ChaudhuriGRT03} who 
showed that the factors $\beta$ and $\beta'$ above can be eliminated, 
leading to a $\mu^*$-approximation, by noting that:
(i) the lower bound $\sum_{\ell=1}^n(\text{optimal value of \lmst})$ used in all previous
works starting with~\cite{BC+94} 
can be strengthened to the {\em $\ell$-stroll lower bound} by replacing the summand with the
optimal cost of a rooted {\em path} covering $\ell$ nodes; and 
(ii) one can adapt the arguments in~\cite{GoemansW95,Garg96} to obtain a prize-collecting
tree of cost no more than that of an optimal prize-collecting tree. 
As noted earlier, (ii) is a rather special case of our Theorem~\ref{pcstroll}. 

Chakrabarty and Swamy~\cite{ChakrabartyS11} 
proposed some LP relaxations for minimum-latency problems 
and suggested that their LPs may lead to improvements for these problems. 
This was the inspiration for our work. Our LPs are subtly different, 
but our work lends credence to the idea that LP-relaxations for minimum-latency
problems can lead to improved guarantees for these problems.

Improved guarantees are known for \mlp in various special cases. Arora and
Karakostas~\cite{AroraK03} give a quasi-PTAS for trees, and Euclidean metrics in any finite 
dimension. Sitters~\cite{Sitters14} recently improved these to a PTAS for trees, 
the Euclidean plane, and planar graphs.

\section{Preliminaries} \label{prelim}
Recall that in the {\em multi-depot $k$-vehicle minimum latency problem} (multi-depot
\kmlp), we have a complete undirected graph $G=(V,E)$ on $n$ nodes, metric edge costs
$\{c_e\}$, and a set $R=\{r_1,\ldots,r_k\}$ of $k$ root/depot nodes.
The goal is to find $k$ paths $P_1,\ldots,P_k$, where each path $P_i$ starts at $r_i$, 
so that $\bigcup_{i=1}^k V(P_i)=V$, 
and the total latency $\sum_{i=1}^k\sum_{v\in P_i:v\neq r_i}c_{P_i}(v)$ incurred is minimized. 
We call the special case where $r_1=\ldots=r_k$ 
{\em single-depot \kmlp} and abbreviate this to \kmlp.
We sometimes refer to non-root nodes as clients. 
We may assume that the $c_e$s are integers, and $c_{r_iv}\geq 1$ for every non-depot
node $v$ and every root $r_i$.

Algorithms for minimum-latency problems frequently use
the idea of concatenating tours to build a solution, and our algorithms also follow this
general template. 
A {\em concatenation graph}~\cite{GoemansK98} is a convenient means of representing this 
concatenation process. 
The concatenation graph corresponding to a sequence $C_1=0,\ldots,C_n$ of nonnegative
numbers (such as the lengths of tours spanning $1, 2, \ldots, n$ nodes) denoted
$\cg(C_1,\ldots,C_n)$, is a directed graph with $n$ nodes, and an arc 
$(i,j)$ of length $C_j\bigl(n-\frac{i+j}{2}\bigr)$ for all $i<j$.
We collect below some useful facts 
about this graph.
We say that $C_\ell$ is an {\em extreme point} of the sequence $(C_1,\ldots,C_n)$ if
$(\ell,C_\ell)$ is extreme-point of the convex hull of $\{(j,C_j): j=1,\ldots,n\}$.

\begin{theorem}[~\cite{GoemansK98,ArcherLW08,ArcherB10}] \label{cgthm} \label{thm_concat_graph}
The shortest $1\leadsto n$ path in $\cg(C_1,\ldots,C_n)$ has length at most
$\frac{\mu^*}{2}\sum_{\ell=1}^n C_\ell$, 
where $\mu^*<3.5912$ is the solution to $\mu\ln\mu=\mu+1$.
Moreover, the shortest path only visits nodes
corresponding to extreme points of $(C_1,\ldots,C_n)$. 
\end{theorem}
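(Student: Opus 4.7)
The plan is to follow the classical concatenation-graph analysis developed in \cite{GoemansK98}, as refined in \cite{ArcherLW08,ArcherB10}. The theorem has two assertions, and I would prove them essentially independently.

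For the extreme-points claim, I would argue by local exchange. If some shortest $1 \leadsto n$ path $P^*$ visits an index $\ell$ with $(\ell, C_\ell)$ strictly above the lower convex hull of $\{(j, C_j)\}_j$, then there exist $\ell' < \ell < \ell''$ with $C_\ell \geq \lambda C_{\ell'} + (1-\lambda)C_{\ell''}$ for $\lambda = (\ell''-\ell)/(\ell''-\ell')$. Using the explicit form $C_j(n-(i+j)/2)$ of arc lengths together with this convex-combination inequality, I would verify by direct arithmetic that replacing the visit to $\ell$ on $P^*$ by a one- or two-index detour through $\ell'$ and/or $\ell''$ does not increase the total length. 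Iterating produces a shortest path visiting only extreme points.

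For the length bound, I would exhibit a random $1 \leadsto n$ path $P(\alpha, \tau)$ parameterized by a ratio $\alpha > 1$ and a log-offset $\tau$ drawn so that $\log \tau$ is uniform on $[0, \ln \alpha]$. Concretely, $P(\alpha, \tau)$ visits (rounded) indices $\ell_j \approx \tau \alpha^j$ drawn from the extreme-point sequence, with boundary fixes at $1$ and at $n$. Each extreme $\ell$ is then visited with probability $\Theta(1/(\ell \ln \alpha))$ and, conditional on being visited, has predecessor $\ell^- \approx \ell/\alpha$, so the expected length expands as $E[|P(\alpha,\tau)|] = \sum_\ell w_\ell(\alpha) C_\ell$ where $w_\ell(\alpha)$ is the product of the visit probability and the arc-length factor $n - (\ell^- + \ell)/2$. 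Using convexity and monotonicity of $C_\ell$ along extreme points to control the contribution from small $\ell$, I would bound $\max_\ell w_\ell(\alpha)$ by a function of $\alpha$ whose minimum is attained at the root of $\alpha \ln \alpha = 1 + \alpha$, namely $\mu^*$, with value exactly $\mu^*/2$. Taking a realization of $\tau$ achieving at most this expectation produces a deterministic path of length $\le (\mu^*/2) \sum_\ell C_\ell$.

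The hardest part is the sharp calculation in the second step: designing the randomized construction and bookkeeping so that the weights $w_\ell(\alpha)$ are uniformly bounded in $\ell$ and the optimum over $\alpha$ coincides exactly with $\mu^*/2$. Careful handling of the boundary arcs (the first arc out of $1$ and the terminal arc into $n$) is needed to avoid inflating the constant, and convexity of the extreme-point $C_\ell$'s must be leveraged to offset the large visit weights at the small-$\ell$ end where the arc-length factor $n - (\ell^-+\ell)/2$ is close to $n$.
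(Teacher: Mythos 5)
The paper does not actually prove this theorem itself; it cites it from~\cite{GoemansK98,ArcherLW08,ArcherB10}, so I am comparing your sketch to the published arguments.

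Your extreme-point claim is plausible in spirit but under-specified. The two-arc cost incurred by visiting $\ell$ between $a$ and $b$ is $C_\ell\bigl(n-\frac{a+\ell}{2}\bigr)+C_b\bigl(n-\frac{\ell+b}{2}\bigr)$, which is \emph{bilinear} (not convex) in the pair $(\ell,C_\ell)$ because of the $C_\ell\cdot\ell$ cross term. Consequently, replacing $\ell$ by its two nearest hull vertices $\ell',\ell''$ does not automatically weakly decrease the cost, and the ``direct arithmetic'' you gesture at is not a one-line convex-combination calculation; the literature (\cite{ArcherLW08} in particular) handles this via more careful monotonicity/structural lemmas. This part is not off track, but it is not as immediate as you suggest.

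The length bound is where there is a genuine gap. Your randomized path samples the \emph{indices} geometrically, $\ell_j\approx\tau\alpha^j$, and you want $\max_\ell w_\ell(\alpha)\le\mu^*/2$ where $w_\ell(\alpha)\approx\frac{1}{\ell\ln\alpha}\bigl(n-\frac{\ell(1+1/\alpha)}{2}\bigr)$. But for the first extreme index $\ell$ of constant size that appears in the sum, $w_\ell(\alpha)\approx n/(\ell\ln\alpha)$, which grows linearly in $n$. No appeal to convexity or monotonicity of $C_\ell$ can fix a per-$\ell$ \emph{max} bound, because the $C_\ell$ factor cancels out of the inequality $w_\ell(\alpha)\le\mu^*/2$. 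The actual construction in~\cite{GoemansK98} (and its refinements in~\cite{ArcherLW08,ArcherB10}) subsamples geometrically in the \emph{tour lengths} $C_\ell$, not in the indices: fix $\alpha>1$ and a random multiplicative offset, and for each $j$ take $\ell_j$ to be the largest (coverage-maximizing) index with $C_{\ell_j}$ at most $(\text{offset})\cdot\alpha^j$. Then for each $\ell$, if $p=p(\ell)$ is the first step covering $\ell$, the contribution $\sum_{q<p}C_{\ell_q}+\tfrac12 C_{\ell_p}$ is dominated by a geometric series in $C_{\ell_p}$, and taking the expectation over the random offset of the ratio to $C_\ell$ is exactly what produces the bound $(\mu^*/2)\,C_\ell$, with $\mu^*$ emerging as the root of $\alpha\ln\alpha=\alpha+1$ when you optimize $\alpha$. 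Your geometric-in-indices scheme does not yield a per-$\ell$ bound, and upgrading it to the global inequality $\sum_\ell w_\ell(\alpha)C_\ell\le(\mu^*/2)\sum_\ell C_\ell$ would require a different and considerably less modular argument than the one you outline.
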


Given a point-set $S\sse\R_+^2$, define its {\em lower-envelope curve} 
$f:[\min_{(x,y)\in S}x,\max_{(x,y)\in S}x]\mapsto\R_+$  
by $f(x)=\min\{y: y\in\conv(S)\}$, where $\conv(S)$ denotes the convex hull of $S$.
Note that $f$ is well defined since the minimum is taken over a closed, compact set.

Let $f$ be the lower-envelope curve of $\{(j,C_j): j=1,\ldots,n\}$, where $C_1=0$.
If $C_\ell$ is an extreme point of $(C_1,\ldots,C_n)$, we will often say that
$(\ell,C_\ell)$ is a ``corner point'' of $f$.
Notice that the bound in Theorem~\ref{cgthm} on the shortest-path length can be
strengthened to $\frac{\mu^*}{2}\sum_{\ell=1}^n f(\ell)$. This is because the shortest path
$P_f$ in the concatenation graph $\cg(f(1),\ldots,f(n))$ has length at most
$\frac{\mu^*}{2}\sum_{\ell=1}^n f(\ell)$, and uses only extreme points of
$\bigl(f(1),\ldots,f(n)\bigr)$, which in turn must be extreme points of $(C_1,\ldots,C_n)$
since $f$ is the lower-envelope curve of $\{(j,C_j): j=1,\ldots,n\}$. Hence, $P_f$ is also
a valid path in $\cg(C_1,\ldots,C_n)$, and its length in these two graphs is exactly the
same. Corollary~\ref{cgcor} shows that this bound can further be strengthened to
$\frac{\mu^*}{2}\int_1^nf(X)dx$. This will be useful in Section~\ref{kmlp}.
The proof is similar to the above argument and follows by discretizing $f$ using finer and
finer scales; we defer the proof to Appendix~\ref{append-prelim}.

\begin{corollary} \label{cgcor}
The shortest $1\leadsto n$ path in $\cg(C_1,\ldots,C_n)$ has length at most
$\frac{\mu^*}{2}\int_{1}^n f(x)dx$, where $f:[1,\ldots,n]\mapsto\R_+$ is the
lower-envelope curve of $\{(j,C_j): j=1,\ldots,n\}$, and only visits nodes
corresponding to extreme points of $(C_1,\ldots,C_n)$. 
\end{corollary}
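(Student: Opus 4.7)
The plan is to apply the strengthened form of Theorem~\ref{cgthm}---the bound $\frac{\mu^*}{2}\sum_{\ell=1}^n f(\ell)$ established in the lower-envelope argument immediately preceding the corollary---to an increasingly fine discretization of $f$, so that the sum converges to $\int_1^n f(x)\,dx$. The ``only visits extreme points'' assertion of the corollary follows directly from the original Theorem~\ref{cgthm} applied to $\cg(C_1,\ldots,C_n)$, so the real work lies in establishing the integral bound.

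Fix an integer $M\ge 1$, set $N=(n-1)M+1$, and consider the refined sequence $(g_1,\ldots,g_N)$ with $g_j = f\bigl(1+(j-1)/M\bigr)$. Since $f$ is the lower-envelope curve of $\{(j,C_j):1\le j\le n\}$, it is convex and piecewise linear with all of its corners located at integer positions in $\{1,\ldots,n\}$; therefore $f$ is itself the lower-envelope curve of the finer samples $(g_j)$, and the extreme points of $(g_1,\ldots,g_N)$ are exactly the indices of the form $1+(\ell-1)M$ for corner points $\ell$ of $f$. Applying the strengthened Theorem~\ref{cgthm} to $\cg(g_1,\ldots,g_N)$, its shortest $1\leadsto N$ path $Q^M$ has length at most $\frac{\mu^*}{2}\sum_{j=1}^N g_j$ and visits only such corner-indices.

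A short calculation---using $N=(n-1)M+1$ and $g_{1+(\ell-1)M}=f(\ell)=C_\ell$ at corners---shows that the arc between corner-indices $1+(\ell'-1)M$ and $1+(\ell-1)M$ in $\cg(g_1,\ldots,g_N)$ costs exactly $M$ times the cost of the arc $(\ell',\ell)$ in $\cg(C_1,\ldots,C_n)$. Thus $Q^M$ pulls back to a valid $1\leadsto n$ path in $\cg(C_1,\ldots,C_n)$ whose length is $1/M$ times that of $Q^M$, hence at most $\frac{\mu^*}{2M}\sum_{j=1}^N f\bigl(1+(j-1)/M\bigr)$. As $M\to\infty$, the right-hand side is a Riemann sum for $f$ on $[1,n]$ that converges to $\frac{\mu^*}{2}\int_1^n f(x)\,dx$ (the extra boundary term $\frac{1}{M}f(n)$ vanishing in the limit), yielding the desired bound.

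The main subtlety to handle carefully is the claim that the refined shortest path visits only indices corresponding to integer corners of $f$: it is this that lets us descale $Q^M$ to a genuine path in $\cg(C_1,\ldots,C_n)$. This in turn relies on $f$ already being convex and piecewise linear with corners at integers, so that refining the sampling does not introduce any new extreme points. Everything else---the arc-cost rescaling, the Riemann-sum convergence, and the preservation of the ``extreme points'' property---is a routine verification.
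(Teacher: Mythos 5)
Your proposal is correct and follows essentially the same route as the paper's own proof: both refine the concatenation graph by subdividing each unit interval into $M$ (the paper uses $k$) pieces, observe that the corner points of the refined lower-envelope curve correspond exactly to integer corners of $f$ so that arc costs scale by $M$, apply Theorem~\ref{cgthm} to the refined graph, and let $M\to\infty$ so the per-node sum becomes the integral. The only differences are cosmetic (naming and the exact bookkeeping of the vanishing boundary term $f(n)/M$).
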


\paragraph{The bottleneck-stroll lower bound.}
Our algorithms for single-depot \kmlp utilize a combinatorial stroll-based lower bound
that we call the {\em $(k,\ell)$-bottleneck-stroll lower bound} (that applies even to
multi-depot \kmlp), denoted by $\bnslb$, which is obtained as follows. 
Given an instance $\bigl(G=(V,E),\{c_e\},k,R=\{r_1,\ldots,r_k\}\bigr)$ of multi-depot
\kmlp, in the {\em $(k,\ell)$-bottleneck-stroll} problem, 
we seek $k$ paths $P_1,\ldots,P_k$, where each $P_i$ is rooted at $r_i$, that together
cover at least $\ell$ nodes (that may include root nodes) so as to minimize $\max_i c(P_i)$. 
Let $\optbns{k,\ell}$ denote the cost of an optimal solution.
It is easy to see that if $t^*_\ell$ is the $\ell$-th smallest node latency incurred by an
optimal solution, then $t^*_\ell\geq\optbns{k,\ell}$. We define 
$\bnslb:=\sum_{\ell=1}^{|V|}\optbns{k,\ell}$, which is clearly a lower bound on the
optimum value of the multi-depot \kmlp instance. 

To put this lower bound in perspective, we remark that a concatenation-graph argument
dovetailing the one used for \mlp in~\cite{GoemansK98} shows that there is
multi-depot-\kmlp solution of cost at most $\mu^*\cdot\bnslb$ (see Theorem~\ref{bnsbnd}). 
For $k=1$, $\bnslb$ becomes the $\ell$-stroll lower bound   
in~\cite{ChaudhuriGRT03}, which yields the current-best approximation factor for \mlp.
Also, the analysis in~\cite{ArcherB10} shows that there is an \mlp-solution of cost at
most $3.03\cdot\bnslb$. On the other hand, whereas such combinatorial stroll-based lower
bounds have been frequently leveraged for minimum-latency problems, we provide some
evidence in Section~\ref{altlp} that LPs may prove to be even more powerful by describing
a configuration LP whose optimal value is always at least \bnslb.
We defer the proof of the following theorem to Appendix~\ref{append-bns}. 

\begin{theorem} \label{bnsbnd} 
(i) There is a solution to multi-depot \kmlp of cost at most $\mu^*\cdot\bnslb$. 
(ii) There is a solution to \mlp of cost at most $3.03\cdot\bnslb$.  
\end{theorem}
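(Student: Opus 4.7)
The plan is to lift the single-vehicle concatenation-graph analysis to the multi-vehicle setting by operating the $k$ vehicles in synchronized rounds. For part~(i), for each $\ell\in\{1,\ldots,n\}$ I would take an optimal $(k,\ell)$-bottleneck-stroll solution $(P_1^\ell,\ldots,P_k^\ell)$ and double each $P_i^\ell$ into a closed tour $Q_i^\ell$ rooted at $r_i$ that visits the same node set and has length at most $C_\ell:=2\optbns{k,\ell}$.

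Given an increasing index sequence $1=\ell_0<\ell_1<\cdots<\ell_m=n$, each vehicle $i$ then executes the concatenation $Q_i^{\ell_1}\circ\cdots\circ Q_i^{\ell_m}$, with a uniformly random orientation chosen for each constituent tour. All $k$ vehicles complete round $p$ by time $S_p:=\sum_{q\le p}C_{\ell_q}$, and by the end of round $p$ at least $\ell_p$ distinct clients have been covered, since $\bigcup_i V(Q_i^{\ell_p})$ alone already contains that many. Under the random orientation, any client first visited in round $p$ has expected latency at most $S_{p-1}+C_{\ell_p}/2=(S_{p-1}+S_p)/2$. Letting $|A_p|$ denote the number of clients first covered in round $p$ and using the cumulative bound $\sum_{q\le p}|A_q|\ge\ell_p$, a routine rearrangement gives expected total latency at most $\tfrac12\sum_p|A_p|(S_{p-1}+S_p)\le\sum_pC_{\ell_p}\bigl(n-(\ell_{p-1}+\ell_p)/2\bigr)$, which is exactly the length of the path $\ell_0\leadsto\ell_m$ in $\cg(C_1,\ldots,C_n)$. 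Picking this index sequence to trace a shortest $1\leadsto n$ path in the concatenation graph and then derandomizing, Theorem~\ref{cgthm} yields a solution of cost at most $\tfrac{\mu^*}{2}\sum_\ell C_\ell=\mu^*\cdot\bnslb$.

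For part~(ii), when $k=1$ the quantity $\bnslb$ coincides with the $\ell$-stroll lower bound of~\cite{ChaudhuriGRT03}, so this reduces to reproducing the refined concatenation analysis of Archer and Blasiak~\cite{ArcherB10}. Their refinement works directly with strolls $P^\ell$ rather than their doubled tours, and employs a sharper per-round latency accounting that exploits the path structure, substituting the constant $3.03$ for $\mu^*<3.5912$; I would simply invoke this analysis with the $\ell$-stroll lower bound instantiated to $\bnslb$.

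The main obstacle, especially in part~(i), is establishing the per-arc bound $C_{\ell_p}\bigl(n-(\ell_{p-1}+\ell_p)/2\bigr)$ in the multi-vehicle context. The random-orientation argument applied independently to each $Q_i^{\ell_p}$ is essentially the single-vehicle argument run in parallel, and the only new observation is that coverage must be measured across the union of the $k$ vehicles' tours in a given round rather than a single tour; the cumulative coverage inequality $\sum_{q\le p}|A_q|\ge\ell_p$ is exactly what allows the single-vehicle calculation to go through unchanged. For part~(ii), the heavy lifting is already in~\cite{ArcherB10}, and the only additional step is identifying $\bnslb$ with their stroll-based lower bound.
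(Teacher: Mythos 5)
Your proposal is correct and takes essentially the same approach as the paper: for part (i) you double the bottleneck-stroll paths, randomize orientations, concatenate along a shortest path in $\cg(2\optbns{k,1},\ldots,2\optbns{k,n})$, and bound the cost by the path length, exactly as the paper does (the paper phrases the accounting as an edge-by-edge induction while you aggregate over rounds and rearrange, but the two are the same calculation), and for part (ii) both simply invoke the refined analysis of~\cite{ArcherB10} with the $\ell$-stroll lower bound.
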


\section{Arborescence packing and prize-collecting trees and paths} \label{arbpack}
A key component of our algorithms for (single-depot) \kmlp is a polytime version of an
arborescence-packing result of~\cite{BangjensenFJ95} that we prove for weighted digraphs
(Theorem~\ref{arbpoly}).  
We use this to obtain a result relating trees and paths that we utilize in our
``combinatorial'' algorithm for \kmlp, which we believe is of independent interest. Let $r$ be
a given root node. Throughout this section, when we say rooted tree or rooted path, we
mean a tree or path rooted at $r$.
We show that for any $\ld>0$, one can efficiently find a rooted tree $T$ such that
$c(T)+\ld|V\sm V(T)|$ is at most $\sum_{P\in\C}c(P)+\ld|V\sm\bigcup_{P\in\C}V(P)|$, 
where $\C$ is any collection of rooted paths (see Theorem~\ref{pcstroll}).
As noted earlier, 
this substantially generalizes a result in~\cite{ChaudhuriGRT03},  
yet our proof is simpler. 

For a digraph $D$ (possibly with parallel edges), we use $\ld_D(x,y)$ to denote the
number of $x\leadsto y$ edge-disjoint paths in $D$.
Given a digraph $D$ with nonnegative integer edge weights $\{w_e\}$, we define the
quantities $|\dt^\into(u)|$, $|\dt^\out(u)|$ and $\ld_D(x,y)$ for $D$ to be the respective
quantities for the unweighted (multi-)digraph obtained by replacing each edge $e$ of $D$
with $w_e$ parallel edges (that is, $|\dt^\into(u)|=\sum_{e=(\circ,u)}w_e$ etc.).
Bang-Jensen at al.~\cite{BangjensenFJ95} proved an arborescence-packing result that in
particular implies that if $D = (U+r,A)$ is a digraph with root $r\notin U$ such that
$|\delta^{\into}(u)| \ge |\delta^{\out}(u)|$ for all $u \in U$,  
then, for any integer $k\geq 0$, one can find $k$ edge-disjoint out-arborescences rooted
at $r$ such that every node $u\in U$ belongs to at least $\min\{k,\lambda_D(r,u)\}$
arborescences. 
For a weighted digraph, applying this result on the corresponding
unweighted digraph 
yields a pseudopolynomial-time algorithm for finding the stated arborescence family.
We prove the following {\em polytime} version of their result 
for weighted digraphs; 
the proof appears in Appendix~\ref{append-arbpoly}. 

\begin{theorem} \label{cor_poly_arb_packing} \label{arbpoly}
Let $D=(U+r,A)$ be a digraph with nonnegative integer edge weights $\{w_e\}$, 
where $r\notin U$ is a root node, such that $|\delta^{\into}(u)| \ge |\delta^{\out}(u)|$
for all $u \in U$. 
For any integer $K\geq 0$, one can find out-arborescences $F_1,\ldots,F_q$
rooted at $r$ and integer weights $\gm_1,\ldots,\gm_q$ in polynomial time such that  
$\sum_{i=1}^q\gm_i=K$, $\sum_{i:e\in F_i}\gm_i\leq w_e$ for all $e\in A$, 
and $\sum_{i:u\in F_i}\gm_i=\min\{K,\ld_D(r,u)\}$ for all $u\in U$.
\end{theorem}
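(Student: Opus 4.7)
The plan is to invoke the (unweighted) BJFJ arborescence-packing theorem to establish existence of the desired decomposition, and then achieve polynomial running time by reducing the problem to computing only $O(n)$ distinct arborescences dictated by the nested structure of the coverage requirements.

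First I would replace each arc $e$ of $D$ by $w_e$ parallel copies to obtain an unweighted multidigraph $D'$: the hypothesis $|\dt^\into(u)|\ge|\dt^\out(u)|$ is preserved, and $\ld_{D'}(r,u)=\ld_D(r,u)$ for each $u\in U$. Applying BJFJ to $D'$ and $K$ yields $K$ edge-disjoint out-arborescences in $D'$ in which every $u$ is covered at least $\min\{K,\ld_D(r,u)\}$ times; pruning edges that cause over-coverage gives equality. This establishes existence of a valid decomposition with integer multiplicities summing to $K$, but only in pseudopolynomial time, since $K$ and the $w_e$ may be exponentially large in the input size.

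Next I would exploit the nested structure to cut the output size down to $O(n)$. Let $\ld_1<\ld_2<\cdots<\ld_p$ be the distinct values of $\min\{K,\ld_D(r,u)\}$ as $u$ ranges over $U$, so $p\le n$, and define the nested chain $V_j=\{u\in U:\min\{K,\ld_D(r,u)\}\ge\ld_j\}$. Set $\gm_j=\ld_j-\ld_{j-1}$ (with $\ld_0=0$), and append a trivial arborescence consisting of just $\{r\}$ with multiplicity $K-\ld_p$ when $K>\ld_p$. It then suffices to compute, for each $j\in\{1,\ldots,p\}$, an out-arborescence $F_j$ with vertex set exactly $V_j\cup\{r\}$ such that $\sum_{j:\,e\in F_j}\gm_j\le w_e$ for every arc $e$; the coverage equality $\sum_{i:\,u\in F_i}\gm_i=\min\{K,\ld_D(r,u)\}$ and the multiplicity equality $\sum_i\gm_i=K$ then hold by construction. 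Feasibility of this nested packing follows from the cut bound $\sum_{e\in\dt^\into(S)}w_e\ge\min\{K,\max_{u\in S}\ld_D(r,u)\}$ for every $\es\neq S\sse U$ with $r\notin S$, which in turn is immediate from max-flow/min-cut applied to any single $u\in S$.

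Finally, I would compute the $F_j$'s via a polynomial-time algorithm for weighted arborescence packing. Concretely, I would write the natural LP with variables $x^j_e\ge 0$ for $j\in\{1,\ldots,p\}$ and arcs $e$, with in-degree equalities $\sum_{e\in\dt^\into(u)}x^j_e=1$ for each $u\in V_j$, cut constraints $\sum_{e\in\dt^\into(S)}x^j_e\ge 1$ for every $\es\neq S\sse V_j$, and the shared capacity constraint $\sum_j\gm_j x^j_e\le w_e$. The cut constraints admit a polynomial-time separation oracle via minimum $r\leadsto u$ cut computations, so the LP is solvable via the ellipsoid method. The hard part will be extracting an \emph{integer} solution respecting the capacities exactly; I would handle this by invoking the integrality of the underlying matroid-union / submodular-flow polytope that governs arborescence packings with integer demands, so that a basic optimal solution is automatically integer, or equivalently by invoking Edmonds' matroid-partition algorithm or Frank's polynomial-time weighted arborescence-packing procedure. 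Combined, these steps produce the desired $F_1,\ldots,F_q$ and $\gm_1,\ldots,\gm_q$ in time polynomial in $n$, $m$, $\log K$, and $\log\max_e w_e$.
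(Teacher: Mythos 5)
Your first paragraph (existence via the unweighted Bang-Jensen et al.\ theorem on the multigraph obtained by duplicating edges) is fine but, as you note, only pseudopolynomial. The fatal problem is the second step: the reduction to a \emph{nested} packing is false. It is true that \emph{if} there were arborescences $F_1,\ldots,F_p$ with $V(F_j)=V_j\cup\{r\}$ and $\sum_{j:e\in F_j}\gm_j\leq w_e$, they would witness the theorem; but such a family need not exist, and the cut condition $w(\dt^{\into}(S))\geq\min\{K,\max_{u\in S}\ld_D(r,u)\}$ does not imply it. Concretely, take $U=\{a,b\}$, $w_{(r,a)}=2$, $w_{(a,b)}=1$, $w_{(r,b)}=1$, all other weights $0$, and $K=2$. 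The degree hypothesis holds, and $\ld_D(r,a)=\ld_D(r,b)=2$, so your scheme has $p=1$, $V_1=\{a,b\}$, $\gm_1=2$, and demands a \emph{single} arborescence spanning $\{r,a,b\}$ each of whose edges has weight at least $2$; but both edges entering $b$ have weight $1$, so no such arborescence exists (and your LP, with in-degree equalities forcing $V(F_1)=V_1\cup\{r\}$ and the capacity constraint $2x^1_e\leq w_e$, is infeasible). The theorem nevertheless holds here with $q=2$, $\gm_1=\gm_2=1$, $F_1=\{(r,a),(a,b)\}$, $F_2=\{(r,a),(r,b)\}$: the two arborescences cover the same vertex set but must differ as edge sets. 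This is exactly why the theorem allows $q$ to be polynomially large rather than at most $n$: the capacity constraints, not the coverage requirements, force many distinct arborescences per ``level.''

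Two further points. First, even where the nested family does exist, deducing its existence from the cut bound is precisely an Edmonds/Bang-Jensen-type packing theorem and cannot be dismissed as ``immediate from max-flow/min-cut''; and the integrality of your LP --- $p$ arborescence polytopes on different vertex sets coupled by a shared weighted capacity constraint $\sum_j\gm_jx^j_e\leq w_e$ --- is not a standard matroid-union or submodular-flow fact and would need proof. Second, for comparison, the paper takes an entirely different route: it runs the Bang-Jensen--Frank--Jackson splitting-off argument directly on the weighted digraph, splitting each splittable pair to the maximum extent at once, proves via a monotonicity property of tight sets that $O(n^2)$ bulk splittings eliminate a vertex (hence $O(n^3)$ overall), and then unsplits to reassemble the arborescences, letting $q$ grow to $O(n^4)$ but keeping everything polynomial. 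If you want to salvage a decomposition-first strategy, you would need a grouping of the $K$ arborescences that respects edge capacities rather than vertex sets, which is essentially what the splitting-off bookkeeping accomplishes.
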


We now apply Theorem~\ref{arbpoly} to prove our key result relating prize-collecting
arborescences and prize-collecting paths. Let $G=(V,E)$ be a complete undirected graph
with root $r\in V$, metric edge costs $\{c_e\}$, and nonnegative node penalties
$\{\pi_v\}_{v\in V}$.  
We bidirect the edges to obtain a digraph $D=(V,A)$, setting the cost of both $(u,v)$ and
$(v,u)$ to $c_{uv}$.
Consider the following bidirected LP-relaxation for the problem of finding a collection
$\C$ of rooted paths minimizing 
$\sum_{P\in\C}c(P)+\pi\bigl(V\sm\bigcup_{P\in\C}V(P)\bigr)$.
We use $a$ to index edges in $A$, and $v$ to index nodes in $V\sm\{r\}$.
\begin{alignat*}{3}
\min & \quad & \sum_a c_ax_a+\sum_v\pi_vz_v \qquad 
\text{s.t.} \qquad
x\bigl(\dt^{\into}(v)\bigr) & \geq x\bigl(\dt^{\out}(v)\bigr) 
\quad && \forall v\in V\sm\{r\} \tag{PC-LP} \label{bns_lp} \\
&& x\bigl(\dt^{\into}(S)\bigr)+z_v & \geq 1 \quad && \forall S\sse V\sm\{r\}, v\in S;
\qquad x,z \geq 0.
\end{alignat*}

\newcommand{\optpclp}{\OPT_{\text{\ref{bns_lp}}}}

\begin{theorem} \label{pcstroll}
(i) We can efficiently compute a rooted tree $T$ such that 
$c(T)+\pi(V\sm V(T))\leq\optpclp$. 

\noindent 
(ii) Hence, for any $\ld\geq 0$, we can find a tree $T_\ld$ such that 
$c(T_\ld)+\ld|V\sm V(T_\ld)|\leq\sum_{P\in\C}c(P)+\ld|V\sm\bigcup_{P\in\C}V(P)|$ for any 
collection $\C$ of rooted paths. 
\end{theorem}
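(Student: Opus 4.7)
The plan is to prove (i) by rounding \eqref{bns_lp} via the polytime arborescence-packing of Theorem~\ref{arbpoly}, and then derive (ii) by building, from any rooted path-collection $\C$, a feasible LP-solution whose value matches the right-hand side of (ii).

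For (i), I first solve \eqref{bns_lp} in polynomial time (ellipsoid with min-cut separation) to obtain an optimal $(x,z)$, then clear denominators by choosing an integer $N$ so that $w_a:=Nx_a$ is integral for every arc $a$. The first LP constraint then reads $|\dt^\into(v)|\geq|\dt^\out(v)|$ in the weighted digraph $D=(V,A)$, exactly the hypothesis of Theorem~\ref{arbpoly}; and by max-flow/min-cut the cut constraints yield $\ld_D(r,v)\geq N(1-z_v)$ for every $v\in V\sm\{r\}$. Applying Theorem~\ref{arbpoly} with $K=N$ produces $r$-rooted arborescences $F_1,\dots,F_q$ and integer weights $\gm_i$ summing to $N$ with $\sum_{i:a\in F_i}\gm_i\leq w_a$ and $\sum_{i:v\in F_i}\gm_i\geq N(1-z_v)$. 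Interpreting $\gm_i/N$ as a probability distribution over the $F_i$, linearity of expectation gives expected edge cost at most $\sum_a c_a x_a$ and spans each $v$ with probability at least $1-z_v$, so
\[
\textstyle\sum_i(\gm_i/N)\bigl[c(F_i)+\pi(V\sm V(F_i))\bigr] \;\leq\; \sum_a c_a x_a+\sum_v\pi_v z_v \;=\; \optpclp.
\]
Some $F_{i^*}$ in the polynomial-size support therefore meets this bound; I identify it by enumeration, and since edge costs are symmetric, regarding $F_{i^*}$ as an undirected tree yields the required $T$.

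For (ii), I specialize $\pi_v\equiv\ld$ in (i) and show that every collection $\C=\{P_1,\dots,P_m\}$ of $r$-rooted paths induces a feasible solution to \eqref{bns_lp} of value $\sum_{P\in\C}c(P)+\ld|V\sm\bigcup_{P\in\C}V(P)|$: orient each $P_j$ away from $r$, let $x_a$ count the oriented paths traversing arc $a$, and set $z_v=1$ iff $v$ is uncovered. Walk-conservation at internal nodes (and the fact that terminal endpoints contribute only to $\dt^\into$) handles the first LP constraint; every covered $v\in S\not\ni r$ forces some $P_j$ to cross into $S$, while uncovered $v$ have $z_v=1$, handling the cut constraint. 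Thus $\optpclp$ is at most the claimed quantity, and (i) delivers $T_\ld$.

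The main subtlety lies in converting the fractional LP-solution into a single integral rooted tree without incurring any loss; this is exactly what Theorem~\ref{arbpoly} provides, since its coverage guarantee $\min\{K,\ld_D(r,u)\}$ matches the LP's connectivity demand after scaling, and its packing constraint matches the LP edge-cost in expectation. Note that the degree hypothesis of Theorem~\ref{arbpoly} is no coincidence---the first constraint of \eqref{bns_lp} was included specifically so that this hypothesis holds after scaling by $N$.
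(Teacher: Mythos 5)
Your proposal is correct and follows essentially the same route as the paper: solve \eqref{bns_lp}, scale to integer arc weights, apply Theorem~\ref{arbpoly}, observe by averaging that some arborescence in the weighted family beats $\optpclp$, and derive (ii) by specializing $\pi_v\equiv\ld$ and noting path-collections give feasible LP solutions. The only cosmetic differences are your probabilistic phrasing of the averaging step (the paper sums the weighted prize-collecting objectives directly) and your spelling out of the LP-feasibility argument for path collections, which the paper simply asserts.
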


\begin{proof}
Let $(x,z)$ be an optimal solution to \eqref{bns_lp}. Let $K$ be such that $Kx_a$ is an
integer for all $a$; note that $\log K$ is polynomially bounded in the input
size. 
Consider the digraph $D$ with edge weights $\{Kx_a\}$. 
Let $(\gm_1,F_1),\ldots,(\gm_q,F_q)$ be the weighted arborescence family obtained by
applying Theorem~\ref{arbpoly} to $D$ with the integer $K$.
Then, we have: 
(a) $\sum_{i=1}^q\gm_i=K$;
(b) $\sum_{i=1}^q\gm_ic(F_i)=\sum_ac_a\bigl(\sum_{i:a\in F_i}\gm_i\bigr)\leq K\sum_ac_ax_a$;
and 
(c) $\sum_{i=1}^q\gm_i\pi\bigl(V\sm V(F_i)\bigr)=
\sum_v\pi_v\bigl(\sum_{i:v\notin F_i}\gm_i\bigr)\leq K\sum_v\pi_vz_v$, 
where the last inequality follows since $\ld_D(r,v)\geq K(1-z_v)$ for all $v\in V\sm\{r\}$.
Thus, if we take the arborescence $F_i$ with minimum prize-collecting
objective $c(F_i)+\pi(V\sm V(F_i))$, which we will treat as a rooted tree $T$ in $G$, we
have that $c(T)+\pi(V\sm V(T))\leq\sum_ac_ax_a+\sum_v\pi_vz_v$.

For part (ii), let $T_\ld$ be the tree obtained in part (i) with penalties $\pi_v=\ld$ for all
$v\in V$. Observe that any collection $\C$ of rooted paths yields a feasible solution to
\eqref{bns_lp} of cost $\sum_{P\in\C}c(P)+\ld|V\sm\bigcup_{P\in\C}V(P)|$.

Notice that the above proof does not use the fact that the edge costs are symmetric or
form a metric; so the theorem statement holds with {\em arbitrary} nonnegative edge costs
$\{c_a\}_{a\in A}$. 
\end{proof}

Given Theorem~\ref{pcstroll} for the prize-collecting problem, one can use binary search
on the 
parameter $\ld$ to obtain the following result for the partial-cover
version; the proof appears in Appendix~\ref{append-pccor}.  
A rooted {\em bipoint tree} $\bT=(a,T_1,b,T_2)$, where $a,b\geq0,\ a+b=1$, is a convex combination 
$aT_1+bT_2$ of two rooted trees $T_1$ and $T_2$. 
We extend a function $f$ defined on trees to bipoint trees by setting
$f(\bT)=af(T_1)+bf(T_2)$.

\begin{corollary} \label{pccor}
(i) Let $B\geq 0$, and $\iopt$ be the minimum cost of a collection
of rooted paths spanning at least $B$ nodes. We can efficiently compute a rooted tree or
bipoint tree $Q$ such that $c(Q)\leq\iopt$
and $|V(Q)|=B$.

\noindent
(ii) Let $\{w_v\}$ be nonnegative node penalties with $w_r=0$. Let $C\geq 0$, and
$n^*$ be the maximum node weight of a
collection of rooted paths of total cost at most $C$. We
can efficiently compute a rooted tree or bipoint tree $Q$ such that $c(Q)=C$ and
$w(V(Q))\geq n^*$.
\end{corollary}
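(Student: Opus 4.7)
The plan is to apply Theorem~\ref{pcstroll} in a Lagrangian fashion, binary searching over the penalty multiplier $\ld$ so as to hit the target node count in part (i) or the target budget in part (ii); when no single tree hits the target exactly, I would combine two trees that straddle the target into a bipoint tree whose node count (resp.~cost) lies exactly on the target.

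For part (i), I would invoke Theorem~\ref{pcstroll}(ii) with uniform penalty $\pi_v=\ld$ on every $v\neq r$. The optimal path collection $\C^*$ witnessing $\iopt$ covers at least $B$ nodes, so substituting $\C^*$ into the guarantee yields a rooted tree $T_\ld$ with
\[
c(T_\ld)+\ld\,|V\sm V(T_\ld)|\;\leq\;\iopt+\ld(n-B),
\]
equivalently $c(T_\ld)\leq\iopt+\ld(|V(T_\ld)|-B)$. If some $\ld$ produces $|V(T_\ld)|=B$, I return $T_\ld$ directly. Otherwise, I would binary search over $\ld$ to locate a threshold $\ld^*$ together with two trees $T_1,T_2$ output by the algorithm at (or arbitrarily close to) $\ld^*$ with $|V(T_1)|<B<|V(T_2)|$, both satisfying the above Lagrangian inequality at $\ld^*$. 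Setting $a=\frac{|V(T_2)|-B}{|V(T_2)|-|V(T_1)|}$ and $b=1-a$, the bipoint tree $Q=(a,T_1,b,T_2)$ has $|V(Q)|=B$, and
\[
c(Q)\;=\;a\,c(T_1)+b\,c(T_2)\;\leq\;\iopt+\ld^*\bigl[a(|V(T_1)|-B)+b(|V(T_2)|-B)\bigr]\;=\;\iopt,
\]
since the bracketed expression vanishes by the choice of $a,b$.

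Part (ii) is the dual argument. I would apply Theorem~\ref{pcstroll}(i) with weighted penalties $\pi_v=\ld w_v$; any path collection $\C^*$ of cost $\leq C$ and weight coverage $n^*$ certifies $\optpclp\leq C+\ld(w(V)-n^*)$, producing a tree $T_\ld$ with $c(T_\ld)-\ld\,w(V(T_\ld))\leq C-\ld\,n^*$. Binary searching on $\ld$ to straddle the cost budget, I would find $\ld^*$ and trees $T_1,T_2$ with $c(T_1)<C<c(T_2)$; the combination $Q=aT_1+bT_2$ chosen so that $a\,c(T_1)+b\,c(T_2)=C$ exactly has $c(Q)=C$, and the same cancellation computation gives $w(V(Q))=a\,w(V(T_1))+b\,w(V(T_2))\geq n^*$.

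The main obstacle is the standard subtlety of Lagrangian binary search: locating $\ld^*$ and a straddling pair $T_1,T_2$ in polynomial time. Since the edge costs, $B$, $C$, and $\{w_v\}$ are polynomially-bounded integers, the map $\ld\mapsto T_\ld$ has only polynomially-many breakpoints with polynomially-bounded bit complexity (the LP \eqref{bns_lp} has polynomial-size basic solutions, and the arborescence-packing algorithm of Theorem~\ref{arbpoly} chooses among only polynomially-many candidate trees), so binary search converges. At $\ld^*$ one reads off $T_1$ and $T_2$ from runs on either side of the breakpoint (or from alternative tie-broken runs of the arborescence-packing algorithm), and any $\Oc(\e)$ slack introduced in the Lagrangian inequality can be driven to zero by one final refinement of the search.
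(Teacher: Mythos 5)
Your proposal follows essentially the same route as the paper's proof: Lagrangify via Theorem~\ref{pcstroll} with uniform (resp.\ weight-scaled) penalties $\ld$, binary search on $\ld$ to straddle the target, and take the convex combination of the two straddling trees so that the node count (resp.\ cost) lands exactly on $B$ (resp.\ $C$), with the cancellation killing the $\ld$-terms. The Lagrangian inequalities, the choice of $a,b$, and the final computation all match the paper.

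The one step that is not yet a proof as written is the claim that the residual slack ``can be driven to zero by one final refinement of the search.'' No finite refinement makes a positive slack literally vanish, and the corollary as stated admits no additive $\e$. The paper closes this by a discreteness argument rather than by refinement: breakpoints of the parametric LP \eqref{bns_lp} are multiples of $\frac{1}{M'}$ for some $M'\leq nM$ with $\log M=\poly(\text{input size})$, so once $\ld_2-\ld_1<\frac{1}{2n^2M}$ there is exactly one breakpoint $\ld$ in the interval; transferring the Lagrangian inequality for $T_1$ from $\ld_1$ to $\ld$ incurs error at most $\frac{1}{2nM}$, and since the quantity $\bigl[c(T_1)+\ld(n-|V(T_1)|)\bigr]-\bigl[\iopt+\ld(n-n^*)\bigr]$ is itself a multiple of $\frac{1}{M'}$, it must be nonpositive. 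You gesture at the right fact (polynomially bounded bit complexity of the breakpoints) but do not make this rounding-to-zero argument explicit; also, you do not need (and should not claim) that there are only polynomially many breakpoints---only that consecutive breakpoints are separated by at least $\frac{1}{M'}$, which is what lets binary search terminate in polynomially many iterations. With that step filled in, your argument is the paper's.
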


\section{\boldmath LP-relaxations for multi-depot \kmlp} 
\label{lps} 
Our LP-relaxations are time-indexed formulations inspired by 
the LPs in~\cite{ChakrabartyS11}. Let $\lb:=\max_v\min_i c_{r_iv}$. 
Let $\Time\leq 2n\lb$ be an upper bound on the maximum latency of a node that can be
certified by an efficiently-computable solution. 
Standard scaling and rounding can be used to ensure that
$\lb=\poly\bigl(\frac{n}{\e}\bigr)$ at the expense of a $(1+\e)$-factor loss
(see, e.g.,~\cite{AroraK03}). So we assume in the sequel that $\Time$ is polynomially
bounded. In Section~\ref{extn}, we sketch an approach showing how to solve our
time-indexed LPs without this assumption, which turns out to be useful for some of the
extensions that we consider.
In either case, this means that all our LP-based guarantees degrade by a $(1+\e)$
multiplicative factor. 
Throughout, we use $v$ to index the non-root nodes in $V\sm R$, $i$ to index the $k$
vehicles, and $t$ to index time units in $[\Time]:=\{1,2,\ldots,\Time\}$. 

In Section~\ref{config}, we describe two {\em configuration LPs} with
exponentially many variables. The first LP, \eqref{lp1}, can be ``solved'' efficiently and
leads to an $8.497$-approximation for multi-depot \kmlp (Section~\ref{multi}). The second
LP, \eqref{lp2}, is a stronger LP that we do not know how to solve efficiently (except
when $k=1$), but whose integrality gap is much smaller (see Theorem~\ref{lp2facts}). In
Section~\ref{bidirect}, we describe a {\em bidirected LP-relaxation} with exponentially
many cut constraints that one can separate over and hence solve the LP (assuming $\Time$
is polynomially bounded). This LP is weaker than \eqref{lp1}, but we show in
Section~\ref{kmlp} that this leads to a $7.813$-approximation algorithm for \kmlp and a
$\mu^*$-approximation algorithm for \mlp (i.e., $k=1$). 
Theorem~\ref{lprelate} summarizes the relationship between the various LPs and the
guarantees we obtain relative to these via the algorithms described in the following
sections. 

\subsection{Configuration LPs} \label{config}
The idea behind a configuration LP is to have variables for each time $t$ describing the 
snapshot of the vehicles' routes up to time $t$. Different LPs arise depending on whether
the snapshot is taken for each individual vehicle, or is a global snapshot of the $k$
vehicles' routes.  

Let $\Pcol^i_t$ and $\Tcol^i_t$ denote respectively the collection of all (simple) paths and
trees rooted at $r_i$ of length at most $t$. 
In our first configuration LP, we introduce a variable $z^i_{P,t}$ for every time $t$ and path
$P\in\Pcol^i_t$ that indicates if $P$ is the path used to visit the nodes on vehicle $i$'s
route having latency at most $t$; that is, $z^i_{P,t}$ denotes if $P$ is the
portion of vehicle $i$'s route up to time $t$.  
We also have variables $x^i_{v,t}$ to denote if node $v$ is visited at time $t$ by the
route originating at root $r_i$. 

\vspace{-10pt}
\noindent \hspace*{-3ex}
\begin{minipage}[t]{.49\textwidth}
\begin{alignat}{3}
\min & \quad & \sum_{v,t,i}tx^i_{v,t} & \tag{LP$_\Pcol$} \label{lp1} \\
\text{s.t.} && \sum_{t,i} x^i_{v,t} & \geq 1 \qquad && \forall v \label{jasgn} \\
&& \sum_{P\in\Pcol^i_t} z_{P,t} & \leq 1 \qquad && \forall t,i \label{onep} \\ 
&& \sum_{P\in\Pcol^i_t: v\in P}z^i_{P,t} & \geq \sum_{t'\leq t}x^i_{v,t'} 
\qquad && \forall v,t,i \label{jcov} \\
&& x, z & \geq 0. \notag 
\end{alignat}
\end{minipage}
\quad \rule[-29ex]{1pt}{26ex}
\begin{minipage}[t]{.5\textwidth}
\begin{alignat}{3}
\max & \quad & \sum_v\al_v & - \sum_{t,i}\beta^i_t \tag{D} \label{dlp1} \\
\text{s.t.} && \al_v \leq t & +\sum_{t'\geq t}\tht^i_{v,t'} \qquad && \forall v,t,i
\label{dineq1} \\
&& \sum_{v\in P}\tht^i_{v,t} & \leq \beta^i_t \qquad && \forall t, i, P\in\Pcol_t 
\label{dineq2} \\ 
&& \al, \beta, \tht & \geq 0. \label{dineq3}
\end{alignat}
\end{minipage}

\medskip \noindent
Constraint \eqref{jasgn} encodes that every non-root node must be visited by some vehicle
at some time; \eqref{onep} and \eqref{jcov} encode that at most one path corresponds to
the portion of vehicle $i$'s route up to time $t$, and that this path must visit every
node $v$ visited at any time $t'\le t$ by vehicle $i$.
Note that these enforce that $x^i_{v,t}=0$ if $t<c_{r_iv}$.
We remark that in the single-depot case, the splitting of the $k$ paths into one path per
vehicle is immaterial, and so \eqref{lp1} becomes equivalent to the configuration LP 
in~\cite{ChakrabartyS11} for \kmlp that involves a single set of $x_{v,t}$ and $z_{P,t}$
variables for each time $t$. 

In order to solve \eqref{lp1}, we consider the dual LP \eqref{dlp1}, which has
exponentially many constraints. Separating over constraints \eqref{dineq2} involves
solving a (rooted) path-orienteering problem: for every $t$, given rewards
$\{\tht^i_{u,t}\}_{u\in V}$, where we set $\tht^i_{u,t}=0$ for $u\in R$, we want to
determine if there is a path $P$ rooted at $r_i$ of length at most $t$ that gathers reward
more than $\beta^i_t$. 
In unweighted orienteering, all node rewards are $0$ or $1$.
A $(\rho,\gm)$-\{path, tree\} approximation algorithm for the path-orienteering problem is
an algorithm that always returns a \{path, tree\} rooted at $r_i$ of length at most
$\gm(\text{length bound})$ that gathers reward at least $(\text{optimum reward})/\rho$. 

As shown in~\cite{ChakrabartyS11}, one can use such approximation algorithms to obtain an
approximate solution to \eqref{lp1} (and similar configuration LPs), where the notion
of approximation involves bounded violation of the constraints and moving to a ``tree
version'' of \eqref{lp1}.  
In the tree version of a configuration LP such as \eqref{lp1}, the only
change is that configurations are defined in terms of trees instead of paths. 
Specifically, define the tree-version of \eqref{lp1}, denoted $(\text{LP}_\Tcol)$, to be
the analogue where we have variables $z^i_{Q,t}$ for every $Q\in\Tcol^i_t$, and we replace
all occurrences of $z^i_{P,t}$ in \eqref{lp1} with $z^i_{Q,t}$. 

\newcommand{\mlpath}[1]{\ensuremath{\bigl(\text{LP}_{\Pcol}^{({#1})}\bigr)}}
\newcommand{\mltree}[1]{\ensuremath{\bigl(\text{LP}_{\Tcol}^{({#1})}\bigr)}}

Let $\mlpath{a}$ be \eqref{lp1} where
we replace each occurrence of $\Pcol^i_t$ in constraints \eqref{onep}, \eqref{jcov} by
$\Pcol^i_{at}$, and the RHS of \eqref{onep} is now $a$. Let $\mltree{a}$ be defined
analogously. Let $\OPT_\Pc$ be the optimal value of \eqref{lp1} (i.e., $\mlpath{1}$).  
Chaudhuri et al.~\cite{ChaudhuriGRT03} give a $(1,1+\e)$-tree approximation for unweighted  
orienteering, which yields (via suitably scaling and rounding the node rewards) a
$(1+\e,1+\e)$-tree approximation for weighted orienteering. Utilizing this and
mimicking the arguments in~\cite{ChakrabartyS11} yields
Lemma~\ref{lpsolve}, which combined with our rounding procedure in Section~\ref{multi}
yields an $(8.497+\ve)$-approximation algorithm for multi-depot \kmlp
(Theorem~\ref{multithm}). 

\begin{lemma} \label{lpsolve}
For any $\e>0$, we can compute a feasible solution to $\mltree{1+\e}$ of cost at most 
$\OPT_\Pc$ in time $\poly\bigl(\text{input size},\frac{1}{\e}\bigr)$. 
\end{lemma}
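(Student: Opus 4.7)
The plan is to follow the ellipsoid-based template of~\cite{ChakrabartyS11}, using the $(1+\e,1+\e)$-tree approximation of~\cite{ChaudhuriGRT03} for weighted orienteering as an \emph{approximate} separation oracle for the dual of $\mltree{1+\e}$. The dual of $\mltree{1+\e}$ has the same form as \eqref{dlp1}, except that \eqref{dineq2} is replaced by the tree constraints $\sum_{v\in Q}\tht^i_{v,t}\leq\beta^i_t$ for every $Q\in\Tcol^i_{(1+\e)t}$, and the $\beta$-term in the objective picks up a $(1+\e)$-coefficient (since the RHS of the primal constraint \eqref{onep} is now $1+\e$). Constraints \eqref{dineq1} and nonnegativity are easy to check, so the task is to (approximately) separate the exponentially-many tree constraints.

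Given a candidate dual point $(\al,\beta,\tht)$, for each pair $(t,i)$ I invoke the $(1+\e,1+\e)$-tree approximation for orienteering with length bound $t$ and node rewards $\{\tht^i_{v,t}\}_{v\in V\sm R}$ (extended by $0$ on root nodes). This returns a tree $Q\in\Tcol^i_{(1+\e)t}$ with total reward at least $\tfrac{1}{1+\e}\max_{P\in\Pcol^i_t}\sum_{v\in P}\tht^i_{v,t}$. If $\sum_{v\in Q}\tht^i_{v,t}>\beta^i_t$, I output $Q$ as a genuine violated constraint of the dual of $\mltree{1+\e}$. Otherwise, every path $P\in\Pcol^i_t$ satisfies $\sum_{v\in P}\tht^i_{v,t}\leq(1+\e)\beta^i_t$; in particular, $(\al,(1+\e)\beta,\tht)$ is feasible for \eqref{dlp1}, so weak duality gives $\sum_v\al_v-(1+\e)\sum_{t,i}\beta^i_t\leq\OPT_\Pc$.

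I then run the ellipsoid method on the dual of $\mltree{1+\e}$ using this oracle, collecting in $\Sc$ the polynomial-size set of tree constraints that the oracle ever returns. Since ellipsoid only uses constraints from $\Sc$ (together with \eqref{dineq1} and nonnegativity), its output $V^*$ is in fact the optimum of the dual of $\mltree{1+\e}$ \emph{restricted to $\Sc$}; and the previous paragraph shows that every point at which the oracle accepts satisfies $\sum_v\al_v-(1+\e)\sum_{t,i}\beta^i_t\leq\OPT_\Pc$, so $V^*\leq\OPT_\Pc$. By LP duality applied to the restricted primal--dual pair, the restricted primal --- a compact LP with $z^i_{Q,t}$-variables only for $Q\in\Sc$, whose feasible solutions are automatically feasible for $\mltree{1+\e}$ --- has optimum exactly $V^*\leq\OPT_\Pc$, and can be solved in polynomial time to yield the desired solution. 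The main subtlety is the standard one with ellipsoid plus approximate oracles: bounding boxes, binary search over objective thresholds, and precision parameters must be chosen carefully so the overall procedure runs in $\poly(\text{input size},1/\e)$ time, which is handled exactly as in~\cite{ChakrabartyS11}.
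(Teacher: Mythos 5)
Your proposal is correct and follows essentially the same route the paper takes: the paper proves Lemma~\ref{lpsolve} by citing the framework of~\cite{ChakrabartyS11} (ellipsoid on the dual with an approximate separation oracle, then solving the restricted primal) together with the $(1+\e,1+\e)$-tree orienteering subroutine derived from~\cite{ChaudhuriGRT03}, which is exactly the argument you spell out. The one place your write-up is a little loose is the assertion that the ellipsoid output ``is in fact the optimum of the dual of $\mltree{1+\e}$ restricted to $\Sc$''; more precisely, one binary-searches over a threshold $V$, and for each $V$ the oracle either accepts some point with objective $\geq V$ (whence your weak-duality argument gives $V\le\OPT_\Pc$) or ellipsoid certifies that the dual restricted to the collected constraints $\Sc$ has value $< V$, so that LP duality on the compact restricted pair gives a feasible $\mltree{1+\e}$-solution of cost $< V$ --- but you flag this as the standard ellipsoid-with-approximate-oracle bookkeeping handled in~\cite{ChakrabartyS11}, so the proof is sound.
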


\newcommand{\noptlp}{\ensuremath{\OPT_{\text{\ref{lp2}}}}}

\paragraph{A stronger configuration LP.} 
We now describe a stronger LP \eqref{lp2} that sheds further light on the power of
LP-relaxations for minimum-latency problems. We prove that the integrality gap of
\eqref{lp2} is at most $\mu^*<3.5912$ by giving an efficient rounding procedure
(Theorem~\ref{lp2facts} (ii)). 
We do not know how to leverage this to obtain 
an efficient $\mu^*$-approximation for multi-depot \kmlp, since we do not know 
how to solve \eqref{lp2} efficiently, even in the approximate sense of Lemma~\ref{lpsolve}.
But for $k=1$, \eqref{lp2} coincides with \eqref{lp1} (and the configuration LP
in~\cite{ChakrabartyS11}), so we can use Lemma~\ref{lpsolve} to approximately solve
\eqref{lp2}. 
We also show that $\noptlp$ is at least the value of 
the $(k,\ell)$-bottleneck-stroll lower bound, $\bnslb$. 
Combined with Theorem~\ref{bnsbnd}, this provides another proof that the integrality gap
of \eqref{lp2} is at most $\mu^*$; this also shows that the integrality gap of \eqref{lp2} 
\nolinebreak \mbox{is at most $3.03$ when $k=1$.}

In the new LP, a configuration for time $t$ is the global snapshot of the $k$ vehicles'
routes up to time $t$; that is, 
it is the $k$-tuple formed by the portions of the $k$ vehicles' routes up to time $t$. 
Formally, a configuration $\vP$ for time $t$ 
is a tuple $(P_1,\ldots,P_k)$ where each $P_i$ is rooted at $r_i$ and has length at most
$t$. We say that $v$ is covered by $\vP$, and denote this by $v\in\vP$, to mean that
$v\in\bigcup_i V(P_i)$. 
Let $\vpcol_t$ denote the collection of all configurations for time $t$. 
This yields the following LP whose 
constraints encode that every
non-root node must be covered, there is at most one configuration for each time $t$, and
this configuration must cover every node $v$ whose latency is at most $t$.
\begin{alignat}{3}
\min & \quad & \sum_{v,t}tx_{v,t} & \tag{LP2$_\vpcol$} \label{lp2} \\
\text{s.t.} && \sum_{t} x_{v,t} & \geq 1 \qquad && \forall v \label{jasgnconfig} \\
&& \sum_{\vP\in\vpcol_t} z_{\vP,t} & \leq 1 \qquad && \forall t \label{oneconfig} \\ 
&& \sum_{\vP\in\vpcol_t: v\in\vP}z_{\vP,t} & \geq \sum_{t'\leq t}x_{v,t'} 
\qquad && \forall v,t \label{jcovconfig} \\
&& x, z & \geq 0. \notag 
\end{alignat}

As before, we may define a tree version
of \eqref{lp2} 
similarly to the way in which $(\text{LP}_\Tcol)$ is obtained from \eqref{lp1}. 
Define a {\em tree configuration} $\vQ$ for time $t$ to be a tuple
$(Q_1,\ldots,Q_k)$, where each $Q_i$ is an $r_i$-rooted tree of cost at most $t$. As
before, we say that $v$ is covered by $\vQ$ if $v\in\bigcup_i V(Q_i)$ denote this by
$v\in\vQ$. Let $\vtcol_t$ denote the collection of all tree configurations for time $t$.  
In the tree-version of \eqref{lp2}, denoted $(\text{LP2}_\vtcol)$, 
we have variables $z_{\vQ,t}$ for every $\vQ\in\vtcol_t$, and 
we replace constraints \eqref{oneconfig}, \eqref{jcovconfig} by

\noindent
\begin{minipage}[t]{.49\textwidth}
\begin{equation}
\sum_{\vQ\in\vtcol_{t}} z_{\vQ,t} \leq 1 \qquad \forall t \label{onetree} 
\end{equation}
\end{minipage} \quad
\begin{minipage}[t]{.49\textwidth}
\begin{equation}
\sum_{\vQ\in\vtcol_{t}: v\in\vQ}z_{\vQ,t} \geq \sum_{t'\leq t}x_{v,t'} 
\qquad \forall v,t \label{jcovtree}
\end{equation}
\end{minipage}

\newcommand{\configtree}[1]{\ensuremath{\bigl(\text{LP2}_{\vtcol}^{({#1})}\bigr)}}

\medskip \noindent
We define $\configtree{a}$ to be $(\text{LP2}_\vtcol)$, where we replace $\vtcol_t$ in
\eqref{onetree}, \eqref{jcovtree} with $\vtcol_{at}$, and we replace the RHS of
\eqref{onetree} with $a$.
The following theorem suggests that \eqref{lp2} may be quite powerful; we defer its proof
to Section~\ref{altlp}.

\begin{theorem} \label{lp2facts}
We have the following.
\begin{list}{{(\roman{enumi})}}{\usecounter{enumi} \topsep=0.5ex \itemsep=0ex
    \addtolength{\leftmargin}{0ex} \addtolength{\labelwidth}{\widthof{(iii)}}}
\item $\noptlp\geq\bnslb$ for every instance of multi-depot \kmlp.
\item A solution to \eqref{lp2}, can be efficiently rounded to a feasible integer solution
while increasing the cost by a factor of at most $\mu^*<3.5912$.  
Furthermore, for any $\e\geq 0$, a solution to $\configtree{1+\e}$ can be efficiently
rounded to a feasible solution to multi-depot \kmlp while losing a factor of at most   
$\frac{\mu^*}{1-\mu^*\e}$.
\item When $k=1$, for any $\e>0$, we can compute a feasible solution to
$\configtree{1+\e}$ of cost at most $\noptlp$ in time 
$\poly\bigl(\text{input size},\frac{1}{\e}\bigr)$. 
\end{list}
\end{theorem}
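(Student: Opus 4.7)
I would prove the three parts in order, using part (i) as the foundation for part (ii). For part (i), the plan is to express the LP cost via summation by parts and then combine the two covering constraints by averaging. Setting $y_{v,t} := \sum_{t' \le t} x_{v,t'}$ and $N^{LP}(t) := \sum_{v \in V \setminus R} y_{v,t}$, and assuming without loss of generality that $y_{v,\Time} = 1$ at optimality, telescoping gives $\sum_{v,t} t x_{v,t} = \sum_{t=0}^{\Time - 1}(|V \setminus R| - N^{LP}(t))$. For each fixed $t$, constraints \eqref{oneconfig} and \eqref{jcovconfig} give $\sum_\vP z_{\vP,t} \le 1$ while $\sum_\vP z_{\vP,t} |\vP| \ge N^{LP}(t)$ (with $|\vP|$ counting non-root nodes covered by $\vP$), so by averaging some $\vP \in \vpcol_t$ has $z_{\vP,t} > 0$ and $|\vP| \ge \lceil N^{LP}(t) \rceil$; its $k$ components form a feasible $(k, \lceil N^{LP}(t)\rceil)$-bottleneck stroll of cost at most $t$, so $\optbns{k, \lceil N^{LP}(t)\rceil} \le t$. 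Defining $N^*(t) := \max\{\ell : \optbns{k,\ell} \le t\}$ gives $N^*(t) \ge N^{LP}(t)$, and swapping the order of summation yields $\noptlp \ge \sum_t (|V \setminus R| - N^*(t)) = \sum_\ell \optbns{k,\ell} = \bnslb$.

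For part (ii), I would feed the extraction from part (i) into the concatenation-graph construction underlying Theorem~\ref{bnsbnd}(i). Setting $C_\ell := \min\{t : \lceil N^{LP}(t)\rceil \ge \ell\}$, the extracted configurations provide, for each $\ell \in [n]$, a $k$-tuple of rooted paths covering at least $\ell$ nodes with each path of length $\le C_\ell$, while the calculation of part (i) simultaneously gives $\sum_\ell C_\ell \le \noptlp$. Stitching these $k$-tuples together via (the constructive version of) the concatenation-graph argument of Theorem~\ref{bnsbnd}(i) then yields a feasible multi-depot \kmlp solution of cost at most $\mu^* \sum_\ell C_\ell \le \mu^* \noptlp$. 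For $\configtree{1+\e}$, the same program is run with tree configurations: each extracted tree is doubled into a tour, and averaging now only supplies some $\vQ \in \vtcol_{(1+\e)t}$ with $|\vQ| \ge N^{LP}(t)/(1+\e)$; threading this factor of $(1+\e)$ through the length-doubling step and the concatenation-graph analysis of Theorem~\ref{cgthm} yields the stated ratio $\mu^*/(1-\mu^*\e)$.

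For part (iii), observe that when $k=1$ a configuration $\vP \in \vpcol_t$ is just a single $r_1$-rooted path of length at most $t$, i.e., an element of $\Pcol^1_t$, so \eqref{lp2} collapses to \eqref{lp1} and $\configtree{1+\e}$ collapses to $\mltree{1+\e}$; Lemma~\ref{lpsolve} then gives the claim immediately. The step I expect to be the main obstacle is the tree-version analysis in part (ii): carefully tracking how the $(1+\e)$-slack in both the configuration-weight constraint \eqref{onetree} and the length budget of $\vtcol_{(1+\e)t}$ interacts with the tree-to-tour doubling and the concatenation-graph bound, so that the combined blow-up is precisely $\mu^*/(1-\mu^*\e)$ rather than some looser product. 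Parts (i) and (iii) are, by contrast, routine once the extraction-by-averaging step has been set up.
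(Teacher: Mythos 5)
Your treatment of part (i) is essentially the same as the paper's: both observe that the $z$-mass at time $t$ puts weight at most $1$ on configurations that each cover at most $n^*_t$ nodes (where $n^*_t$ is the maximum coverage achievable by any $k$-tuple of rooted paths of length $\le t$), so the LP's ``coverage function'' is dominated pointwise by the integral $(k,\ell)$-bottleneck-stroll coverage function, and then both rewrite the objective by summation-by-parts. Your ``averaging'' extraction is the same inequality, stated constructively. There is a small bookkeeping mismatch (you count only non-root nodes in $|\vP|$, whereas the paper's $\optbns{k,\ell}$ counts roots as well), but it only makes your inequality slightly stronger and is harmless. Part (iii) is the same one-line observation as the paper's.

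Part (ii) is where you genuinely diverge, and where there is a real gap. The paper does \emph{not} stitch extracted configurations through a concatenation graph. Instead it runs the randomized rounding of Section~\ref{multi}: pick geometric time scales $t_j = hc^j$, sample one random tree configuration $\vQ \sim \{z_{\vQ,t_j}/\kp\}$ at each scale, double each $Q_i$ to a tour, and concatenate; the gain over the multi-depot analysis is that sampling a single $\vQ$ (rather than $k$ independent trees) turns the bound of Lemma~\ref{probv} into $p_{v,j} \le o'_{v,j}/\kp + (1-1/\kp)p_{v,j-1}$, and feeding this into Lemma~\ref{latv} gives $\E[L_v] \le \frac{c+1}{(\ln c)(1-c\e)}\lat_v$; optimizing $c=\mu^*$ yields $\mu^*/(1-\mu^*\e)$.

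Your deterministic alternative---extract, for each $\ell$, a configuration from the support of $z_{\cdot,C_\ell}$ covering $\ge\ell$ nodes and apply Theorem~\ref{bnsbnd}(i)'s concatenation argument---is correct and arguably cleaner for the \emph{first} claim of (ii), i.e., for an exact solution to \eqref{lp2}. There, constraint~\eqref{oneconfig} has right-hand side~$1$, so at $t=\Time$ every $\vP$ in the support covers every non-root node and the extraction reaches full coverage. But for $\configtree{1+\e}$ with $\e>0$ the approach breaks: you can only guarantee some $\vQ$ in the support with $|\vQ| \ge N^{LP}(t)/(1+\e)$, and since $N^{LP}(t) \le |V\sm R|$ for all $t$, no $\vQ$ in any support is guaranteed to cover more than a $1/(1+\e)$ fraction of the non-root nodes. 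The concatenation graph then never reaches index $n$, so it does not produce a feasible solution. You flag this as ``the main obstacle'' but do not resolve it, and the claim that ``threading the factor through'' the concatenation analysis yields precisely $\mu^*/(1-\mu^*\e)$ is not substantiated---that particular form of the bound is an artifact of the geometric-scale iteration (the $\e$-slack compounds multiplicatively across scales, producing the $1/(1-\mu^*\e)$), and one would not expect the concatenation approach to produce it. To fix your route you would need either a separate ``clean-up'' phase to cover the remaining $\Theta(\e n)$ nodes with controlled latency, or to switch to the paper's iterative sampling, which handles partial per-scale coverage automatically.
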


\subsection{A bidirected LP relaxation} \label{bidirect}
The bidirected LP formulation is motivated by Theorem~\ref{arbpoly}. As in
Section~\ref{arbpack}, we bidirect the edges to obtain a digraph $D=(V,A)$ and set
$c_a=c_{uv}$ for both $a=(u,v)$ and $a=(v,u)$. 
We use $a$ to index the arcs in $A$.
Recall that $v$ indexes nodes in $V\sm R$, $i$ indexes the $k$ vehicles, and $t$ indexes
time units in $[\Time]$. 
As before, we use variables $x^i_{v,t}$ to denote if node $v$ is visited at time $t$ by
the route originating at root $r_i$. Directing the vehicles' routes away from their roots
in a solution, $z^i_{a,t}$ indicates if arc $a$ lies on the portion of vehicle $i$'s route
up to time $t$. We obtain the following LP.
\begin{alignat}{3}
\min & \quad & \sum_{v,t,i} tx^i_{v,t} & \tag{LP3} \label{lp3} \\
\text{s.t.} && \sum_{t,i} x^i_{v,t} & \ge 1 \qquad && \forall v; 
\qquad x^i_{v,t}=0\ \text{if $c_{r_iv}<t$} \qquad \forall v,t,i \label{jasgn3} \\
&& \sum_{a\in\dt^{\into}(S)}z^i_{a,t} & \ge \sum_{t'\le t} x^i_{v,t'} 
\qquad && \forall S\subseteq V\sm\{r_i\}, v\in S, \quad \forall t \label{jcov3} \\
&& \sum_{a} c_az^i_{a,t} & \le t \qquad && \forall t,i \label{onep3} \\
&& \sum_{a\in\dt^{\into}(v)}z^i_{a,t} & \geq\sum_{a\in\dt^{out}(v)}z^i_{a,t} 
\qquad && \forall v, i \label{path3} \\
&& x, z & \geq 0.
\end{alignat}
Constraints \eqref{jasgn3} ensure that every non-root node is visited at some time.
Constraints \eqref{jcov3}--\eqref{path3} play the role of constraints \eqref{onep},
\eqref{jcov} in \eqref{lp1}: \eqref{jcov3} ensures that the portion of a
vehicle's route up to time $t$ must visit every node visited by that vehicle by time $t$,
\eqref{onep3} ensures that this route indeed has length at most $t$, and finally
\eqref{path3} seeks to encode that the route forms a path. (Note that constraints
\eqref{path3} are clearly valid, and one could also include the constraints
$\sum_{a\in\dt^{\out}(r_i)}z^i_{a,t}\leq 1$ for all $i$, $t$.) 

Assuming $\Time$ is polynomially bounded, it is easy to design a separation oracle for the
exponentially-many cut constraints \eqref{jcov3}; hence, one can solve \eqref{lp3}
efficiently. We use this LP to obtain LP-relative guarantees for \kmlp (Section~\ref{kmlp})
which turn out to be quite versatile and extend easily to yield the same guarantees
for various generalizations. 

Intuitively, the difference between \eqref{lp1} and \eqref{lp3} boils down to the
following. Consider the tree version of \eqref{lp1}, $(\text{LP}_\Tcol)$. A solution to
this LP specifies for each time $t$ and vehicle $i$, a distribution over $r_i$-rooted
trees that together cover nodes to the extent dictated by the $x$ variables. Using
Theorem~\ref{arbpoly}, given a feasible solution $(x,z)$ to \eqref{lp3}, one can 
view $z$ as also specifying for each time $t$ and vehicle $i$ a distribution over
$r_i$-rooted trees covering nodes to the extents specified by the $x$ variables.
The difference however is that in the former case, 
{\em each tree in the support has length at most $t$}, whereas in the distribution
obtained from \eqref{lp3} one only knows that the {\em expected length} of a tree is at
most $t$.

\newcommand{\optbilp}{\ensuremath{\OPT_{\text{\ref{lp3}}}}}

\begin{theorem}[{\bf Relationship between \eqref{lp1}--\eqref{lp3} and the guarantees they
      yield}] \label{lprelate} 
We have the following. Recall that $\OPT_\Pc$ is the optimal value of \eqref{lp1}.
\begin{list}{(\roman{enumi})}{\usecounter{enumi} \itemsep=0ex \topsep=0.5ex
    \addtolength{\leftmargin}{0ex} \addtolength{\labelwidth}{\widthof{(iii)}}}
\item $\optbilp\leq\OPT_\Pc\leq\noptlp$ for every instance of multi-depot \kmlp.
  When $k=1$, we have $\OPT_\Pc=\noptlp$. 
\item For multi-depot \kmlp, we can efficiently compute a solution of cost at most
  $(8.4965+\ve)\OPT_\Pc$ for any $\ve>0$ (also Theorem~\ref{multithm}).
\item For single-depot \kmlp, we can efficiently compute a solution of cost at most
  $(2\mu^*+\ve)\optbilp$ for any $\ve>0$, where $\mu^*<3.5912$ (also Theorem~\ref{kmlpround}). 
\item When $k=1$, we can efficiently compute a solution of cost at most
  $(\mu^*+\ve)\optbilp$ for any $\ve>0$ (also Corollary~\ref{mlpround}). 
\end{list}
\end{theorem}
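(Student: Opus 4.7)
The plan is to establish part (i) directly and to observe that parts (ii)--(iv) are consequences of theorems proved elsewhere in the paper. For part (i), the main work lies in the two containments $\optbilp \le \OPT_\Pc \le \noptlp$; the equality $\OPT_\Pc = \noptlp$ when $k=1$ is immediate, since a configuration $(P_1) \in \vpcol_t$ is just a single path, so $\vpcol_t = \Pcol^1_t$ and \eqref{lp2} reduces to \eqref{lp1} term by term.

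For $\optbilp \leq \OPT_\Pc$, I would exhibit \eqref{lp3} as a relaxation of \eqref{lp1} by a direct path-to-arc translation. Given $(x,z)$ feasible for \eqref{lp1}, retain the $x$ variables and set
\[
z^i_{a,t} \;:=\; \sum_{P \in \Pcol^i_t:\, a \text{ lies on } P \text{ oriented away from } r_i} z^i_{P,t}.
\]
Constraint \eqref{onep3} follows since every path $P$ with $z^i_{P,t} > 0$ satisfies $c(P) \le t$, giving $\sum_a c_a z^i_{a,t} = \sum_P z^i_{P,t}\,c(P) \le t\sum_P z^i_{P,t} \le t$. Flow conservation \eqref{path3} is verified path-by-path, since each non-root node of an oriented rooted path has in-degree at least its out-degree. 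The cut constraints \eqref{jcov3} hold because any rooted path visiting $v \in S$ must cross $\dt^{\into}(S)$ at least once, so $\sum_{a \in \dt^{\into}(S)} z^i_{a,t} \ge \sum_{P \in \Pcol^i_t:\, v \in P} z^i_{P,t} \ge \sum_{t' \le t} x^i_{v,t'}$ by \eqref{jcov}.

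For $\OPT_\Pc \leq \noptlp$, start from $(x, z)$ feasible for \eqref{lp2} and build $(\tilde{x}, \tilde{z})$ feasible for \eqref{lp1} of the same objective value $\sum_{v,t} t\,x_{v,t}$. The natural marginal projection sets $\tilde z^i_{P, t} := \sum_{\vP \in \vpcol_t:\, P_i = P} z_{\vP, t}$, which gives $\sum_P \tilde z^i_{P, t} \le 1$ via \eqref{oneconfig}. Writing $y^i_{v, t} := \sum_{P:\, v \in P} \tilde z^i_{P, t} = \sum_{\vP:\, v \in P_i} z_{\vP, t}$, the governing inequality for the per-vehicle split is
\[
\sum_i y^i_{v, t} \;\ge\; \sum_{\vP:\, v \in \vP} z_{\vP, t} \;\ge\; \sum_{t' \le t} x_{v, t'},
\]
by \eqref{jcovconfig} and the fact that each configuration containing $v$ contributes to at least one $y^i_{v, t}$. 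The subtlety is that $y^i_{v, t}$ need not be monotone in $t$ if the \eqref{lp2}-solution ``switches'' which vehicle is responsible for $v$; however, \eqref{lp1} gives us the freedom to use any $\tilde z^i_{\cdot, t}$ supported on $\Pcol^i_t$, and since $\Pcol^i_{t'} \subseteq \Pcol^i_t$ for $t' \le t$, we can persist a favourable marginal from an earlier time into later buckets. Concretely, for each vehicle $i$ one replaces the raw marginals with a monotone-in-$t$ envelope formed from $\{\mu^i_{\cdot, t'}\}_{t' \le t}$, and then splits $x_{v, t}$ across vehicles in a greedy time-sweep; the needed aggregate capacity follows from $\sum_i \max_{t' \le t} y^i_{v, t'} \ge \max_{t' \le t} \sum_i y^i_{v, t'} \ge \sum_{t'' \le t} x_{v, t''}$. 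Since $\sum_i \tilde x^i_{v, t} = x_{v, t}$ by construction, the \eqref{lp1}-cost equals $\sum_{v,t} t\,x_{v,t}$, as required.

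Parts (ii)--(iv) are consequences of results proved later in the paper. Part (ii) is Theorem~\ref{multithm}, combining the LP-solver of Lemma~\ref{lpsolve} (which produces a feasible $\mltree{1+\e}$-solution of cost at most $\OPT_\Pc$) with the rounding procedure in Section~\ref{multi}. Part (iii) is Theorem~\ref{kmlpround}, obtained by rounding \eqref{lp3} via arborescence packing (Theorem~\ref{arbpoly}) together with the concatenation-graph bound (Corollary~\ref{cgcor}). Part (iv) is Corollary~\ref{mlpround}, which sharpens (iii) for $k=1$ by avoiding the factor-of-two loss inherent in converting trees into pairs of tours. The main technical obstacle is the \eqref{lp2}-to-\eqref{lp1} projection in the proof of $\OPT_\Pc \le \noptlp$: the naive marginal projection is feasible but can force a costlier split of $x_{v,t}$ when the \eqref{lp2}-optimum is attained by a switching solution, and the monotone-envelope construction leveraging nestedness of $\Pcol^i_t$ is what bridges the gap.
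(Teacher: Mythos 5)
Your handling of the $k=1$ identification, of the containment $\optbilp\le\OPT_\Pc$ (directing each path in the support away from its root, summing $z^i_{P,t}$ over paths through each arc, and checking \eqref{jcov3}--\eqref{path3}), and of parts (ii)--(iv) as restatements of Theorem~\ref{multithm}, Theorem~\ref{kmlpround} and Corollary~\ref{mlpround} all match the paper and are fine.

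The issue is the direction $\OPT_\Pc\le\noptlp$. You have correctly isolated the crux: under the marginal projection $\tilde z^i_{P,t}=\sum_{\vP:\vP(i)=P}z_{\vP,t}$, the per-vehicle coverage $y^i_{v,t}$ need not be monotone in $t$, because \eqref{lp2} lets the configuration at time $t$ reassign $v$ from one vehicle to another. (For what it is worth, the paper's own proof sets $x'^i_{v,t}=y^i_{v,t}-y^i_{v,t-1}$ and asserts feasibility without addressing the fact that this increment can be negative.) But your proposed repair does not close the gap. A ``monotone envelope'' for vehicle $i$ at time $t$ would have to be a single sub-distribution of mass at most $1$ on $\Pcol^i_t$ covering \emph{every} node $v$ to extent $\max_{t'\le t}y^i_{v,t'}$ simultaneously, and such a distribution need not exist: if vehicle $i$'s marginal at time $t_1$ puts all its mass on a path through $v$ avoiding $u$, its marginal at time $t_2>t_1$ puts all its mass on a path through $u$ avoiding $v$, and no $r_i$-rooted path of length at most $t_2$ contains both $u$ and $v$, then the envelope at $t_2$ would need every path in its support to contain $v$ and also every path to contain $u$ --- impossible. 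Your aggregate inequality $\sum_i\max_{t'\le t}y^i_{v,t'}\ge\sum_{t''\le t}x_{v,t''}$ is a statement about one node at a time and says nothing about joint realizability, which is exactly what \eqref{jcov} together with \eqref{onep} demands.

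Moreover, the obstruction is not an artifact of this proof strategy. Take $k=2$ with $c_{r_1v}=t_1$, $c_{r_1u}=c_{r_2v}=c_{r_1r_2}=t_2$, $c_{r_2u}=2t_2$, $c_{uv}=t_1+t_2$ (a metric when $t_1<t_2$). The \eqref{lp2} solution using configuration $(r_1\to v,\ \text{trivial path at }r_2)$ for $t\in[t_1,t_2)$ and $(r_1\to u,\ r_2\to v)$ for $t\ge t_2$ is feasible with value $t_1+t_2$; but any \eqref{lp1} solution of value $t_1+t_2$ would need $x^1_{v,t_1}=1$ and $x^1_{u,t_2}=1$, forcing vehicle~1's path distribution at time $t_2$ to cover both $u$ and $v$ to extent $1$, which no $r_1$-rooted path of length at most $t_2$ permits. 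So no reallocation of marginals can establish $\OPT_\Pc\le\noptlp$ for $k\ge2$; this instance separates the two LP values. The comparison that is robust is the one the paper makes elsewhere, namely lower-bounding $\noptlp$ by the combinatorial bound $\bnslb$ (Theorem~\ref{lp2facts}(i)), rather than by $\OPT_\Pc$.
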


\begin{proof}
Parts (ii)--(iv) are simply restatements of the indicated theorems, whose proofs appear in
the corresponding sections. We focus on proving part (i).

For a $k$-tuple $\vP=(P_1,\ldots,P_k)$, we use $\vP(i)$ to denote $P_i$.
Let $(x,z)$ be a feasible solution to \eqref{lp2}. 
We may assume that constraints \eqref{jasgnconfig}, \eqref{jcovconfig} hold with equality
for all 
$v$ and $t$ since we can always shortcut paths past nodes without increasing their
length. We may also assume that if $z_{\vP,t}>0$ for $\vP=(P_1,\ldots,P_k)$, then the
any two $P_i$s are node-disjoint unless they originate from the same root node, in which
case this root is the only node they share.
We map $(x,z)$ to a feasible solution $(x',z')$ to \eqref{lp1} by setting
$z'^i_{P_i,t}=\sum_{\vP\in\vpcol_t:\vP(i)=P_i}z_{\vP,t}$ for all $i, t$ and 
$x'^i_{v,t}=\sum_{P\in\Pcol^i_t:v\in P}z'^i_{P,t}-\sum_{P\in\Pcol^i_{t-1}:v\in P}z'^i_{P,t-1}$ for all 
$i,v,t$, where we define $z'^i_{P,0}=0$ for all $P$ for notational convenience. 
It is easy to verify that $(x',z')$ is feasible for \eqref{lp1}. It's objective
value is 
\begin{eqnarray}
\sum_{v,t,i}tx'^i_{v,t} & = & \sum_{v,i}\sum_{t=1}^Tt\Bigl(\sum_{P\in\Pcol^i_t:v\in P}z'^i_{P,t}
-\sum_{P\in\Pcol^i_{t-1}:v\in P}z'^i_{P,t-1}\Bigr) \notag \\
& = & \sum_{v,i}\Bigl(\Time\sum_{P\in\Pcol^i_{\Time}:v\in P}z'^i_{P,\Time}
-\sum_{t=1}^{\Time-1}\sum_{P\in\Pcol^i_t:v\in P}z'^i_{P,t}\Bigr) 
\ = \ \sum_v\Bigl(\Time-\sum_{t=1}^{\Time-1}\sum_{\vP\in\vpcol_t:v\in\vP}z_{\vP,t}\Bigr)
\label{ineq1} \\
& = & \sum_v\Bigl(\Time-\sum_{t=1}^{\Time-1}\sum_{t'\leq t}x_{v,t'}\Bigr)
\ =\ \sum_v\Bigl(\Time-\sum_{t'=1}^{\Time}(\Time-t')x_{v,t'}\Bigr)
\ =\ \sum_{v,t'}t'x_{v,t'} \label{last}
\end{eqnarray}
The second equality in \eqref{ineq1} holds because
$\sum_i\sum_{P\in\Pcol^i_{t}:v\in P}z'^i_{P,t}
=\sum_i\sum_{\vP\in\vpcol_{t}:v\in\vP(i)}z_{\vP,t}
=\sum_{\vP\in\vpcol_t:v\in\vP}z_{\vP,t}$ since the paths comprising $\vP$ do not
share any non-root nodes; when $t=T$, this term is 1 since \eqref{jcovconfig} and
\eqref{jasgnconfig} hold at equality. 
The first and third equalities in \eqref{last} follow again from the fact that
\eqref{jcovconfig} and \eqref{jasgnconfig} hold at equality.
It follows that $\OPT_\Pc\leq\noptlp$.

Let $(x,z)$ be a feasible solution to \eqref{lp1}. It is easy to see that if we direct
each path in the support of $z$ away from its root and set
$z'^i_{a,t}=\sum_{P\in\Pcol^i_t:a\in P}z^i_{P,t}$, then $(x,z')$ is feasible for
\eqref{lp3}. Hence, $\optbilp\leq\OPT_\Pc$. 
\end{proof}

\section{\boldmath An LP-rounding 8.497-approximation algorithm for multi-depot \kmlp} \label{multi}  
We now prove the following theorem. Our approximation ratio of $8.497$ 
improves upon the previous-best
$12$-approximation~\cite{ChekuriK04,ChaudhuriGRT03} and matches the previous-best
approximation for single-depot \kmlp~\cite{FakcharoenpholHR07}.

\begin{theorem} \label{multithm}
For any $\ve>0$, we can compute a multi-depot-\kmlp solution of cost at most
$(8.4965+\ve)\cdot\OPT_\Pc$ in time $\poly\bigl(\text{input size},\frac{1}{\ve}\bigr)$. 
Thus, the integrality gap of \eqref{lp1} is at most $8.4965$.
\end{theorem}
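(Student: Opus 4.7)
The plan is to combine Lemma~\ref{lpsolve} with randomized geometric-scale sampling, in the spirit of~\cite{GoemansK98,ChekuriK04}, but leveraging the fact that the LP hands us a genuine distribution over $k$-tuples of trees rather than requiring us to extract one. Set $\e':=\Theta(\ve)$ and invoke Lemma~\ref{lpsolve} to compute, in time $\poly\bigl(\text{input size},1/\ve\bigr)$, a feasible solution $(x,z)$ to $\mltree{1+\e'}$ of cost at most $\OPT_\Pc$. After rescaling each vector $\{z^i_{Q,t}\}_{Q\in\Tcol^i_{(1+\e')t}}$ by $1/(1+\e')$, we obtain, for every vehicle $i$ and time $t$, a sub-distribution $\mu^i_t$ on $r_i$-rooted trees of length at most $(1+\e')t$ such that $\Pr_{Q\sim\mu^i_t}[v\in Q]\geq\tfrac{1}{1+\e'}\sum_{t'\leq t}x^i_{v,t'}$ for every non-root $v$; the $(1+\e')$ rescaling cost is absorbed into $\ve$.

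\textbf{Rounding.} Fix a constant $\al>1$ to be optimized. Draw a random offset $\eta$ with $\ln\eta$ uniform on $[0,\ln\al]$ and set $t_j:=\al^j\eta$ for $j\geq 0$. Independently over $(i,j)$, sample $Q^i_j\sim\mu^i_{t_j}$ (filling residual mass with the trivial tree $\{r_i\}$), double $Q^i_j$ into a tour $\sg^i_j$ based at $r_i$ of length at most $2(1+\e')t_j$, and form vehicle $i$'s route $P_i$ by concatenating $\sg^i_0,\sg^i_1,\ldots$ in order and shortcutting past previously-visited nodes. Since $\Time$ is polynomially bounded, $O(\log\Time)$ levels suffice, and any unvisited nodes can be appended at the end with negligible expected-cost overhead.

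\textbf{Analysis.} Fix a non-root $v$ and write $F^i_j:=\tfrac{1}{1+\e'}\sum_{t'\leq t_j}x^i_{v,t'}$ and $s_j:=\sum_i F^i_j$. By construction $\Pr[v\in Q^i_j\mid\eta]\geq F^i_j$, and independence across $(i,j')$ gives
\[
\Pr[v\text{ not visited by end of level }j\mid\eta]\;\leq\;\prod_{j'\leq j}\prod_i(1-F^i_{j'})\;\leq\;\exp\!\Bigl(-\sum_{j'\leq j}s_{j'}\Bigr).
\]
If $v$ is first covered in level $j$, its latency is at most $\sum_{j'\leq j}2(1+\e')t_{j'}\leq\frac{2(1+\e')\al}{\al-1}t_j$. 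Summing over $j$, averaging over $\eta$, and relating the resulting integral to $L_v:=\sum_{t,i}t\,x^i_{v,t}=\int_0^\infty(1-G(t))\,dt$ with $G(t):=\sum_i\sum_{t'\leq t}x^i_{v,t'}$ yields $\E[\lat(v)]\leq\rho(\al,\e')\cdot L_v$ for an explicit function $\rho$. Summing over $v$, optimizing $\al$, and taking $\e'\to 0$ gives $\rho^*<8.4965$, which also bounds the integrality gap of \eqref{lp1}.

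\textbf{Main obstacle.} The analytic heart is pinning down that $\rho^*$ truly equals $8.4965$. One must separately treat the regime $G(t_j)\leq 1$ (where the marginal increment $s_j-s_{j-1}$ drives the exponential tail down) and $G(t_j)\geq 1$ (where residual non-coverage decays as $\exp(-\Omega(j))$ because independence across vehicles yields compounding per-level coverage of at least $1-1/e$). This vehicle-independence is exactly where LP rounding wins over~\cite{ChekuriK04}: rather than extracting a $k$-tuple of trees via a multi-knapsack argument and paying for its inaccuracy, we round a genuine product distribution whose expected multi-vehicle coverage already matches the LP's, so $k$-vehicle coverage comes essentially for free.
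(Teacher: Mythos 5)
Your high-level plan is the same as the paper's: solve the tree-version of the configuration LP via Lemma~\ref{lpsolve}, sample a tree per vehicle per geometric time level from the LP's distribution, double to tours, and concatenate. The observation that one sidesteps~\cite{ChekuriK04}'s multi-knapsack extraction because the LP directly furnishes a product distribution across vehicles is also the same driving insight the paper uses. However, there are two genuine gaps.

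\textbf{Missing the random-direction trick, and the constant does not come out to $8.4965$.} You double each tree into a tour and bound the latency of a node first covered in level $j$ by $\sum_{j'\leq j}2(1+\e')t_{j'}\leq\frac{2(1+\e')\al}{\al-1}t_j$. This charges the \emph{entire} tour at level $j$, not half of it. The paper instead traverses each cycle in a uniformly random direction (clockwise/counter-clockwise), so that the \emph{expected} contribution of the level-$j$ tour, for the node that gets covered there, is only half its length; this gives the coefficient $\frac{(1+\e')(\al+1)}{\al-1}t_j$. The difference, a factor of $\frac{2\al}{\al+1}$, is not cosmetic: your probability bound $\prod_{j'\le j}\prod_i(1-F^i_{j'})\le\exp\bigl(-\sum_{j'\le j}s_{j'}\bigr)$ is in fact \emph{tighter} than the paper's linearization $p_{v,j}\le(1-e^{-1/\kp})o'_{v,j}+e^{-1/\kp}p_{v,j-1}$ (since $e^{-s}$ lies below the chord from $(0,1)$ to $(1/\kp,e^{-1/\kp})$, and your bound implies the paper's recursion after a one-line manipulation), but this modest gain does not compensate. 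Carrying out the optimization of $\frac{2\al(1-e^{-1})}{(\ln\al)(1-\al e^{-1})}$ over $\al$ gives a value around $10.4$ (minimized near $\al\approx 1.55$), not $8.4965$. To get $8.4965$ you must include the random-direction step; the optimal $\al$ then shifts to about $1.616$ and $\frac{(\al+1)(1-e^{-1})}{(\ln\al)(1-\al e^{-1})}$ hits $\approx 8.497$.

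\textbf{The ``two regimes'' discussion is off and the computation is left undone.} You can normalize so that $\sum_{t,i}x^i_{v,t}=1$ for every non-root $v$ (the paper does precisely this), so $G(t)\le 1$ for all $t$ and the regime ``$G(t_j)\geq 1$'' is vacuous; there is no need to appeal to a separate $\exp(-\Omega(j))$-decay regime. What you actually need is a linear recursion $p_{v,j}\leq\gamma\,o'_{v,j}+(1-\gamma)p_{v,j-1}$ with $\gamma=1-e^{-1/\kp}$ (which, as noted, your bound does imply), because that recursion is what lets you relate $\sum_j p_{v,j-1}\Dt_j$ to $\sum_j o'_{v,j-1}\Dt_j=\lat'_v$ via the identity $\sum_j p_{v,j-2}\Dt_j = \al\sum_j p_{v,j-1}\Dt_j$ and the evaluation $\E_\eta[\lat'_v]=\frac{\al-1}{\ln\al}\lat_v$. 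Your proposal says ``yields $\E[\lat(v)]\leq\rho(\al,\e')\cdot L_v$ for an explicit function $\rho$'' and ``pinning down that $\rho^*$ truly equals $8.4965$'' without doing the work; given that your current latency coefficient gives $\approx 10.4$, the claim as written would not close. Inserting the random-direction step, setting up the linear recursion, and evaluating $\E_\eta[\lat'_v]$ are the concrete steps required to make the argument land at $8.4965$.
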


Our algorithm is quite simple to describe. 
Let $(x,z)$ be the feasible solution to $\mltree{1+\e}$ 
returned by Lemma~\ref{lpsolve}, where we fix $\e$ later. 
We then choose time points that form a geometric sequence and do the following for
each time point $t$. For every $i=1,\ldots,k$, we sample a random tree from the
distribution $\{z^i_{Q,t}\}_{Q\in\Tcol^i_{(1+\e)t}}$, double and shortcut it to form a cycle and
traverse this cycle in a random direction to obtain a tour. For every $i$, we
concatenate the tours obtained for $i$ for each of the time points. 
We now describe the rounding procedure in detail and proceed to analyze it.

\newpage
{\small \hrule \vspace{-1pt}
\begin{algorithm} \label{multialg}
Given: a fractional solution $(x,z)$ of cost at most $\OPT_\Pc$ returned by
Lemma~\ref{lpsolve}.  
\end{algorithm}
\vspace{-10pt}
\begin{labellist}[M]
\item Let $\kp=1+\e$, and $1<c<e$ be a constant that we will fix later. Let $h=c^\Gm$ be a
random offset, where $\Gm$ is chosen uniformly at random from $[0,1)$.
For notational convenience, define $z^i_{Q,t}$ for all $t\geq 1$, $i$,
$Q\in\Tcol^i_{\kp t}$ as follows:
set $z^i_{Q,t}=z^i_{Q,\floor{t}}$ if $\floor{t}\leq\Time$ and 
$z^i_{Q,t}=z^i_{Q,\Time}$ otherwise.
Define $t_j=hc^j$ for all $j\geq 0$. 

\item Repeatedly do the following for $j=0,1,2,\ldots$ until 
every non-root node is covered (by some tour).
For every $i=1,\ldots,k$, choose independently a random tree $Q$ from the distribution 
$\bigl\{z^i_{Q,t_j}/\kp\bigr\}_{Q\in\Tcol^i_{\kp t_j}}$. 
Double and shortcut $Q$ to get a cycle, and traverse this cycle clockwise or
counterclockwise with probability $\frac{1}{2}$ to obtain a tour $Z_{i,j}$. 

\item For every $i=1,\ldots,k$, concatenate the tours $Z_{i,0},Z_{i,1},\ldots$ to obtain
the route for vehicle $i$.
\end{labellist}
\vspace{-2pt}\hrule
} 

\vspace{-1ex}
\paragraph{Analysis.}
The analysis hinges on showing that for every iteration $j$ of step M2, the
probability $p_{v,j}$ that a node $v$ is not covered by the end of iteration $j$ can be
bounded (roughly speaking) in terms of the total extent to which $v$ is not 
covered by $(x,z)$ by time $t_j$ (Lemma~\ref{probv}). 
Substituting this into the expression bounding the expected latency of $v$ in terms of the
$p_{v,j}$s (part (iii) of Claim~\ref{helper}),  
we obtain that by suitably choosing the constant $c$, 
the expected latency of $v$ is roughly $8.497\cdot\lat_v$, 
where $\lat_v:=\sum_{t,i}tx^i_{v,t}$ (Lemma~\ref{latv}).   
This proves that the algorithm returns a solution of cost roughly $8.497\cdot\OPT_\Pc$. 
However, it is not clear if the algorithm as stated above has polynomial running time. But   
since Lemma~\ref{probv} implies that $p_{v,j}$ decreases geometrically with $j$,
one can terminate step M2 after a polynomial number of iterations and 
cover the remaining uncovered nodes incurring latency at most $\Time$ for each such node. 
This increases the expected cost by at most $\ve\OPT_\Pc$ but ensures polynomial running
time; see Remark~\ref{truncalg}. 

Let $t_{-1}=0$, and define $\Dt_j:=t_j-t_{j-1}$ for all $j\geq 0$. 
For $q\geq 1$, define $\sg(q)$ to be the smallest $t_j$ that is at least $q$. 
Consider a non-root node $v$. 
We may assume that $\sum_{i,t}x^i_{v,t}=1$.
Define $p_{v,j}=1$ for all $j<0$.
Define $y^i_{v,t}:=\sum_{t'\leq t}x^i_{v,t'}$ 
and $o'_{v,j}:=1-\sum_{i}y^i_{v,t_{j}}$; define $o'_{v,j}=1$ for all $j<0$. 
Define $\lat'_v:=\sum_{j\geq 0}o'_{v,j-1}\Dt_j$. 
Let $L_v$ denote the random latency of node $v$ in the solution constructed. 
Note that the $t_j$s, and hence, $\sg(q)$, the $o'_{v,j}$s and $\lat'_v$ are random
variables depending only on the random offset $h$.
For a fixed offset $h$, we use $\E^h[.]$ to denote the expectation with respect to all
other random choices, while $\E[.]$ denotes the expectation with respect to all random
choices.  

\begin{claim} \label{helper}
For any node $v$, we have: 
(i) $\lat'_v=\sum_{t,i}\sg(t)x^i_{v,t}$;
(ii) $\E[\lat'_v]=\frac{c-1}{\ln c}\cdot\lat_v$; and \linebreak
(iii) $\E^h[L_v]\leq\frac{\kp(c+1)}{c-1}\cdot\sum_{j\geq 0}p_{v,j-1}\Dt_j$ for any fixed $h$. 
\end{claim}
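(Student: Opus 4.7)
The plan is to handle the three parts largely independently, with the bulk of the work going into part~(iii).

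For part~(i), since $\sum_{i,t} x^i_{v,t} = 1$ we can rewrite $o'_{v,j-1} = \sum_{i,\,t' > t_{j-1}} x^i_{v,t'}$ (using $t_{-1}=0$). Substituting into $\lat'_v = \sum_{j \geq 0} o'_{v,j-1} \Dt_j$ and swapping the order of summation, the coefficient of each $x^i_{v,t'}$ becomes the telescoping sum $\sum_{j \geq 0 \,:\, t_{j-1} < t'} \Dt_j = t_J$, where $J$ is the largest index with $t_{J-1} < t'$; this is exactly $\sg(t')$ by definition. For part~(ii), linearity reduces the task to computing $\E[\sg(t)]$ for each fixed $t \geq 1$. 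Writing $\log_c t = N + \beta$ with $N \in \Z_{\geq 0}$, $\beta \in [0,1)$, I would observe that $\sg(t) = hc^j$, where $j$ is the smallest non-negative integer with $\Gm + j \geq \log_c t$; a case split on $\Gm < \beta$ versus $\Gm \geq \beta$ shows $\sg(t) = c^{\Gm+N+1}$ or $c^{\Gm+N}$ respectively. Integrating $c^\Gm$ over $[0,1)$ in the two subintervals and using $c^{N+\beta} = t$ telescopes the two contributions into $\E[\sg(t)] = \tfrac{c-1}{\ln c}\, t$, as required.

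Part~(iii) is the main work. Let $J_v$ be the first iteration in which some $Z_{i,J_v}$ covers $v$, and fix any one such vehicle $i^*$. I would upper-bound $L_v$ by the latency of $v$ along vehicle $i^*$'s concatenated route, which is at most $\sum_{j < J_v} c(Z_{i^*,j}) + X$, where $X$ is $v$'s position within $Z_{i^*,J_v}$. Since each $Z_{i^*,j}$ arises from doubling-and-shortcutting a tree of length $\leq \kp t_j$, we have $c(Z_{i^*,j}) \leq 2\kp t_j$; the geometric-series bound $\sum_{j=0}^{J_v - 1} t_j \leq \tfrac{t_{J_v}}{c-1}$ then bounds the first summand by $\tfrac{2\kp\, t_{J_v}}{c-1}$, and the uniformly random direction of traversal yields $\E^h[X \mid J_v, i^*] \leq \kp t_{J_v}$. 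Combining, $\E^h[L_v \mid J_v] \leq \tfrac{\kp(c+1)}{c-1}\, t_{J_v}$.

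To finish, since $\Pr^h[J_v \geq j] = p_{v,j-1}$ by the definition of $p_{v,j}$, a summation-by-parts computation gives $\E^h[t_{J_v}] = \sum_{j \geq 0} t_j\bigl(p_{v,j-1} - p_{v,j}\bigr) = \sum_{j \geq 0} \Dt_j\, p_{v,j-1}$, which when multiplied by $\tfrac{\kp(c+1)}{c-1}$ yields the claim. The one subtle point to get right in part~(iii) is the conditioning: I must bound $L_v$ via one specific (first-covering) vehicle's route (the min-across-vehicles is only an upper bound on $L_v$, never a lower one), and invoke the direction-randomness only after the identity of the covering tour has been pinned down.
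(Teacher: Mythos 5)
Your proof is correct and takes essentially the same approach as the paper's: part (i) is the same swap-of-summation/telescoping computation (presented in the reverse direction), part (ii) is the same two-piece integral over the random offset, and part (iii) is the same argument of conditioning on the first covering iteration, using the geometric sum $\sum_{j'<j} t_{j'}\le t_j/(c-1)$ together with the half-tour expectation bound, and then summing by parts against $p_{v,j-1}-p_{v,j}$. Your remark about pinning down the covering vehicle before invoking direction-randomness is a correct and worthwhile observation that the paper leaves implicit (the direction of traversal does not affect coverage, only position within the tour, so this conditioning is benign).
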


\begin{proof} 
Part (i) follows from the same kind of algebraic manipulation as used in the proof of 
part (i) of Theorem~\ref{lp2facts}. We have 
\begin{equation*}
\begin{split}
\sum_{t,i}\sg(t)x^i_{v,t} & =\sum_{j\geq 0}t_j\Bigl(\sum_{t=t_{j-1}+1}^{t_j}\sum_ix^i_{v,t}\Bigr)
=\sum_{j\geq 0}\Bigl(\sum_{d=0}^j\Dt_d\Bigr)\Bigl(\sum_{t=t_{j-1}+1}^{t_j}\sum_ix^i_{v,t}\Bigr) \\
& =\sum_{d\geq 0}\Dt_d\Bigl(\sum_{j\geq d}\sum_{t=t_{j-1}+1}^{t_j}\sum_ix^i_{v,t}\Bigr)
=\sum_{d\geq 0}\Dt_do'_{v,d-1}.
\end{split}
\end{equation*}

Part (ii) follows from part (i) since we show that $\E[\sg(q)]=\frac{c-1}{\ln c}\cdot q$
for all $q\geq 1$. Suppose $q\in[c^j,c^{j+1})$ for some integer $j\geq 0$.
Then 
$$
\E[\sg(q)]=\int_0^{\log_c q-j}c^{y+j+1}dy+\int_{\log_c q-j}^1c^{y+j}dy
=\frac{1}{\ln c}\cdot\Bigl(c^{\log_c q+1}-c^{j+1}+c^{j+1}-c^{\log_c q}\Bigr)
=\frac{c-1}{\ln c}\cdot q.
$$

For part (iii), say that node $v$ is covered in iteration $j\geq 0$ if $j$ is the smallest
index such that $v\in\bigcup_i V(Z_{i,j})$.
By definition, the probability of this event is $p_{v,j-1}-p_{v,j}$, and in this case the
latency of $v$ is at most 
$\kp(2t_0+2t_1+\ldots+2t_{j-1}+t_j)\leq\frac{\kp(c+1)}{c-1}\cdot t_j$.
So $\E^h[L_v]\leq\frac{\kp(c+1)}{c-1}\sum_{j\geq 0}(p_{v,j-1}-p_{v,j})t_j
=\frac{\kp(c+1)}{c-1}\sum_{j\geq 0}p_{v,j-1}(t_j-t_{j-1})$.
\end{proof}

\begin{lemma} \label{probv}
$p_{v,j}\leq\bigl(1-e^{-1/\kp}\bigr)o'_{v,j}+e^{-1/\kp}p_{v,j-1}$ for all $j\geq -1$, and
all $v$. 
\end{lemma}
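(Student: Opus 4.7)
The plan is to combine the fact that the tree samplings in distinct iterations of step M2 are independent with an LP feasibility bound and one convexity inequality. First I would dispatch the base case $j=-1$ by direct substitution: since $p_{v,-1}=p_{v,-2}=o'_{v,-1}=1$ by the stated conventions, the RHS equals $(1-e^{-1/\kp})+e^{-1/\kp}=1$, matching $p_{v,-1}$. For $j\geq 0$, let $A_j$ denote the event that none of the $k$ tours produced in iteration $j$ covers $v$. Because the sampling in distinct iterations uses independent random choices, the event ``$v$ is uncovered at the end of iteration $j$'' is exactly $A_0\cap\cdots\cap A_j$, which yields the factorization $p_{v,j}=p_{v,j-1}\cdot\Pr[A_j]$. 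This reduces the lemma to a single-iteration bound on $\Pr[A_j]$.

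Next I would upper-bound $\Pr[A_j]$. Within iteration $j$, the $k$ vehicles sample independently, so $\Pr[A_j]\leq\prod_{i=1}^k(1-q^i_j)$, where $q^i_j$ is the probability that vehicle $i$'s sampled tree contains $v$. Using the definitions $z^i_{Q,t_j}=z^i_{Q,\floor{t_j}}$ and $y^i_{v,t_j}=y^i_{v,\floor{t_j}}$ (the latter holds since $x$ is defined only at integer times), constraint \eqref{jcov} of $\mltree{1+\e}$ applied at the integer time $\floor{t_j}$ gives $q^i_j=\sum_{Q\in\Tcol^i_{\kp t_j}:\,v\in Q}z^i_{Q,t_j}/\kp\geq y^i_{v,t_j}/\kp$. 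Using $1-x\leq e^{-x}$ and the definition of $o'_{v,j}$,
$$\Pr[A_j]\leq\prod_{i=1}^k\Bigl(1-\frac{y^i_{v,t_j}}{\kp}\Bigr)\leq\exp\Bigl(-\frac{1}{\kp}\sum_i y^i_{v,t_j}\Bigr)=e^{-(1-o'_{v,j})/\kp}.$$

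Finally, since $x\mapsto e^{x/\kp}$ is convex on $[0,1]$, it lies below the chord through $(0,1)$ and $(1,e^{1/\kp})$, so $e^{o'_{v,j}/\kp}\leq 1+o'_{v,j}(e^{1/\kp}-1)$ (note $o'_{v,j}\in[0,1]$, since we have assumed $\sum_{i,t}x^i_{v,t}=1$). Factoring $e^{-1/\kp}$ out of the previous display and applying this chord bound converts the exponential into a function linear in $o'_{v,j}$:
$$\Pr[A_j]\leq e^{-1/\kp}\bigl(1+o'_{v,j}(e^{1/\kp}-1)\bigr)=e^{-1/\kp}+(1-e^{-1/\kp})\,o'_{v,j}.$$
Multiplying by $p_{v,j-1}\leq 1$ then yields the claimed inequality. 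The main subtlety I anticipate is this last linearization step: once one notices the target RHS is affine in $o'_{v,j}$, the chord bound for $e^{x/\kp}$ is essentially forced, and everything else is routine (an LP constraint, independence of the sampling, and $1-x\leq e^{-x}$).
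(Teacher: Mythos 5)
Your proof is correct and follows essentially the same route as the paper: bound $p_{v,j}\leq p_{v,j-1}\prod_i\bigl(1-y^i_{v,t_j}/\kp\bigr)$ via independence and constraint \eqref{jcov}, then linearize the product in $o'_{v,j}$ to get the intermediate bound $e^{-1/\kp}+(1-e^{-1/\kp})o'_{v,j}$. The only difference is cosmetic: the paper reaches this via AM--GM followed by the chord bound on the convex function $b\mapsto\bigl(1-\frac{1-b}{\kp k}\bigr)^k$ and the estimate $\bigl(1-\frac{1}{\kp k}\bigr)^k\leq e^{-1/\kp}$, whereas you apply $1-x\leq e^{-x}$ first and then the chord bound on $e^{x/\kp}$, which is slightly more direct but yields the identical intermediate inequality.
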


\begin{proof}
For $j=-1$, the inequality holds since $o'_{v,-1}=1=p_{v,-2}$. Suppose $j\geq 0$.
We have $p_{v,j}\leq p_{v,j-1}\prod_i\bigl(1-\frac{y^i_{v,t_j}}{\kp}\bigr)$
since the probability that $v$ is visited by the $i$-th tour in iteration $j$ is
$\sum_{Q\in\Tcol^i_{\kp t_j}:v\in Q}z^i_{Q,t_j}/\kp\geq y^i_{v,t_j}/\kp$.
We have that $\prod_{i=1}^k\bigl(1-\frac{y^i_{v,t_j}}{\kp}\bigr)$ is at most
$$
\biggl(1-\frac{\sum_i y^i_{v,t_j}}{\kp k}\biggr)^k
=\biggl(1-\frac{1-o'_{v,j}}{\kp k}\biggr)^k
\leq\Bigl(1-\tfrac{1}{\kp k}\Bigr)^k+\Bigl(1-\bigl(1-\tfrac{1}{\kp k}\bigr)^k\Bigr)o'_{v,j}
\leq e^{-1/\kp}+\bigl(1-e^{-1/\kp}\bigr)o'_{v,j}.
$$
The first inequality follows since the geometric mean is at most the arithmetic mean; the
second follows since $f(b)=\bigl(1-\frac{1-b}{\kp k}\bigr)^k$ is a convex function of $b$;
the final inequality follows since $o'_{v,j}\leq 1$ and 
$\bigl(1-\frac{1}{\kp k}\bigr)^k\leq e^{-1/\kp}$. Plugging the above bound into the
inequality for $p_{v,j}$ yields the lemma.
\end{proof}

\begin{lemma} \label{latv}
$\E[L_v]\leq\frac{\kp(c+1)(1-e^{-1/\kp})}{(\ln c)(1-ce^{-1/\kp})}\cdot\lat_v$ for all $v$.
\end{lemma}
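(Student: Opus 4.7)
The plan is to unroll the recurrence provided by Lemma~\ref{probv}, substitute the resulting bound on $p_{v,j-1}$ into Claim~\ref{helper}(iii), interchange the order of summation, and evaluate the inner (geometric) series. Taking expectation over the random offset $h$ and invoking Claim~\ref{helper}(ii) will then yield the stated bound.

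Write $\al=e^{-1/\kp}$ and $\beta=1-e^{-1/\kp}$, so Lemma~\ref{probv} reads $p_{v,j}\leq \beta o'_{v,j}+\al p_{v,j-1}$. Unrolling down to the base case $p_{v,-1}=1$ gives, for each $j\geq 0$,
\[
p_{v,j}\ \leq\ \beta\sum_{i=0}^{j}\al^{j-i}o'_{v,i}\ +\ \al^{j+1}.
\]
First I would plug this into Claim~\ref{helper}(iii), handling the $j=0$ term separately (using $p_{v,-1}=1$ and $\Dt_0=h$). This gives
\[
\E^h[L_v]\ \leq\ \frac{\kp(c+1)}{c-1}\Bigl(h\,+\,\beta\sum_{j\geq 1}\Dt_j\sum_{i=0}^{j-1}\al^{j-1-i}o'_{v,i}\,+\,\sum_{j\geq 1}\al^j\Dt_j\Bigr).
\]

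Next I would swap the order of summation in the double sum. Using $\Dt_j=hc^{j-1}(c-1)$ for $j\geq 1$, the inner sum $\sum_{j\geq i+1}\al^{j-1-i}\Dt_j=h(c-1)c^i\sum_{k\geq 0}(\al c)^k=\frac{h(c-1)c^i}{1-\al c}$, which converges precisely because our choice $1<c<e$ will be made to ensure $\al c=ce^{-1/\kp}<1$. Since $\Dt_{i+1}=hc^i(c-1)$, recognizing $\sum_{i\geq 0}c^io'_{v,i}=\frac{1}{h(c-1)}\sum_{d\geq 1}o'_{v,d-1}\Dt_d=\frac{\lat'_v-h}{h(c-1)}$ (by Claim~\ref{helper}(i), noting $o'_{v,-1}\Dt_0=h$) turns the double sum into $\tfrac{\beta(\lat'_v-h)}{1-\al c}$. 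A similar geometric evaluation gives $\sum_{j\geq 1}\al^j\Dt_j=\tfrac{h(c-1)\al}{1-\al c}$.

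Combining these and using the identity $\beta=1-\al$, the terms not involving $\lat'_v$ collapse: their coefficient of $h$ is $-\beta+(1-\al c)+\al(c-1)=-\beta+1-\al=0$. This gives the clean bound
\[
\E^h[L_v]\ \leq\ \frac{\kp(c+1)}{c-1}\cdot\frac{\beta\,\lat'_v}{1-\al c}.
\]
Since the right-hand side is linear in $\lat'_v$, taking expectation over $h$ and applying Claim~\ref{helper}(ii) ($\E[\lat'_v]=\tfrac{c-1}{\ln c}\lat_v$) yields the desired bound. The main (minor) obstacle is keeping the bookkeeping of the base-case contributions straight so that the ``stray'' $h$-terms vanish via the $\beta=1-\al$ cancellation; once that is in place, the rest is essentially a geometric-series calculation.
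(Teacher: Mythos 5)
Your proposal is correct and follows essentially the same route as the paper: the paper also combines Claim~\ref{helper}(iii) with Lemma~\ref{probv}, but instead of fully unrolling the recurrence it writes $A:=\sum_{j\geq 0}p_{v,j-1}\Dt_j$, applies the recurrence once to get $A\leq(1-e^{-1/\kp})\lat'_v+ce^{-1/\kp}A$ (using $\sum_{j\geq 0}p_{v,j-2}\Dt_j=cA$), and solves for $A$ before taking the expectation over $h$ via Claim~\ref{helper}(ii). Your explicit geometric-series evaluation and the cancellation of the stray $h$-terms check out and yield the identical bound.
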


\begin{proof}
Fix an offset $h$. 
We have $\frac{c-1}{\kp(c+1)}\cdot\E^h[L_v]\leq A:=\sum_{j\geq 0}p_{v,j-1}\Dt_j$ by
Claim~\ref{helper} (iii). 
Using Lemma~\ref{probv}, we obtain that 
$A\leq\sum_{j\geq 0}\bigl(1-e^{-1/\kp}\bigr)o'_{v,j-1}\Dt_j+e^{-1/\kp}\sum_{j\geq 0}p_{v,j-2}\Dt_j
=\bigl(1-e^{-1/\kp}\bigr)\lat'_v+ce^{-1/\kp}A$, 
where the equality follows since $\Dt_0+\Dt_1=c\Dt_0$, and $\Dt_j=c\Dt_{j-1}$ for all
$j\geq 2$, and so $\sum_{j\geq 0}p_{v,j-2}\Dt_j=cA$.
So $A\leq\frac{1-e^{-1/\kp}}{1-ce^{-1/\kp}}\lat'_v$.
Taking expectation with respect to the random offset $h$, and plugging in the bound for
$\E[\lat'_v]$ in part (ii) of Claim~\ref{helper} yields the lemma.
\end{proof}

Taking $c=1.616$, for any $\ve>0$, we can take $\e>0$ suitably small so that 
$\frac{\kp(c+1)(1-e^{-1/\kp})}{(\ln c)(1-ce^{-1/\kp})}\leq
\frac{(c+1)(1-e^{-1})}{(\ln c)(1-ce^{-1})}+\ve\leq 8.4965+\ve$. This completes the proof of
Theorem~\ref{multithm}. 

\begin{remark} \label{truncalg}
If we truncate step M2 to $N=D+\kp\ln\bigl(\frac{n\Time}{\ve}\bigr)$ iterations, 
where $t_D=\sg(\Time)$, then by Lemma~\ref{probv}, 
$p_{v,N}\leq e^{-(N-D)/\kp}\leq\frac{\ve}{n\Time}$ since $o'_{v,J}=0$.
Each remaining uncovered node can be covered incurring latency at most $\Time$ (since
$\Time$ is a certifiable upper bound).
This adds at most $\ve\leq\ve\OPT_\Pc$ to the expected cost of the solution,
but ensures polynomial running time.
\end{remark}

\section{\boldmath A $7.183$-approximation algorithm for (single-depot) \kmlp} 
\label{klat} \label{kmlp}
We now describe algorithms for \kmlp having approximation ratios essentially
$2\mu^*<7.183$. This guarantee can be obtained both by rounding the bidirected LP
\eqref{lp3} and via more combinatorial methods. The LP-rounding algorithm is slightly
easier to describe, and the analysis extends easily to the generalizations considered in
Section~\ref{extn}. But it is likely less efficient than the combinatorial algorithm, and
its guarantee is slightly weaker, $2\mu^*+\ve$, due to the fact that we need to solve the
time-indexed formulation \eqref{lp3} either by ensuring that $\Time$ is polynomially
bounded, or via the alternative method sketched in Section~\ref{extn}, both of which
result in a $(1+\ve)$-factor degradation in the approximation.  
We describe the LP-rounding algorithm first (Section~\ref{lpround}) and then the
combinatorial algorithm (Section~\ref{comb}).  

\subsection{The LP-rounding algorithm} \label{lpround}
We prove the following theorem. Recall that $D=(V,A)$ is the digraph obtained by
bidirecting $G$.

\begin{theorem} \label{kmlpround}
Any solution $(x,z)$ to \eqref{lp3} can be rounded to a \kmlp-solution losing a factor of
at most $2\mu^*<7.1824$. Thus, for any $\ve>0$, we can compute a \kmlp-solution of cost at
most $(2\mu^*+\ve)\optbilp$ in time $\poly\bigl(\text{input size},\ln(\frac{1}{\ve})\bigr)$. 
\end{theorem}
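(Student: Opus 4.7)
The plan is to round a feasible solution $(x,z)$ to \eqref{lp3} in three stages, following the intuition sketched at the end of Section~\ref{bidirect}: (a) for each $t\in[\Time]$, extract a distribution $\D_t$ over $r$-rooted trees from the aggregated arc flow via Theorem~\ref{arbpoly}; (b) convert each tree $T$ in the support of $\D_t$ into a $k$-tuple of $r$-rooted tours spanning $V(T)$; and (c) concatenate these $k$-tuples across the different values of $t$ via the concatenation-graph framework (Theorem~\ref{cgthm} and Corollary~\ref{cgcor}). The bound $2\mu^*$ will decompose as (tree-doubling factor $2$) $\times$ (splitting a tree of length $\approx kt$ into $k$ tours of length $\approx 4t$, factor $2$) $\times$ ($\mu^*/2$ from the concatenation graph). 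The additive $\ve$ absorbs the polynomial-scaling loss needed so that $\Time$ is polynomially bounded and \eqref{lp3} is solvable in polynomial time.

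For stage (a), fix $t$ and set $w^t_a:=\sum_i z^i_{a,t}$. Summing \eqref{path3} and \eqref{jcov3} over $i$ (valid since all vehicles share the common root $r$) shows that the weighted digraph $(D,w^t)$ satisfies $|\dt^{\into}(v)|_{w^t}\ge|\dt^{\out}(v)|_{w^t}$ and $\ld_{D,w^t}(r,v)\ge\sum_i y^i_{v,t}$, where $y^i_{v,t}:=\sum_{t'\le t}x^i_{v,t'}$. Scaling by an integer $K_t$ so that every $K_t w^t_a$ is integral (with $\log K_t$ polynomially bounded), and invoking Theorem~\ref{arbpoly}, yields a weighted arborescence family whose normalization gives the desired $\D_t$: the expected tree length is at most $\sum_a c_a w^t_a\le kt$ by summing \eqref{onep3} over $i$, and each $v$ lies in the random tree with probability at least $\min\{1,\sum_i y^i_{v,t}\}$.

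For stage (b), each tree $T$ is doubled into an Eulerian closed walk of length $2c(T)$, the walk is partitioned into $k$ contiguous pieces, and each piece is completed to an $r$-rooted tour by prepending and appending tree shortest paths from $r$ to its two endpoints; with a judicious choice of partition points this yields $k$ tours each of length essentially $4c(T)/k$, and hence expected length at most $4t$ when $T\sim\D_t$. For stage (c), let $C_\ell$ be the minimum over $k$-tuples of $r$-rooted tours spanning at least $\ell$ nodes of the max constituent tour length $\max_i c(\tau_i)$. An algebraic manipulation similar to the proof of Theorem~\ref{lprelate}(i), applied to the expected tour lengths and coverages produced by stages (a)--(b) across $t\in[\Time]$, yields $\int_1^n f(x)\,dx\le 4\cdot\optbilp$, where $f$ is the lower-envelope curve of $\{(\ell,C_\ell)\}_\ell$. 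Corollary~\ref{cgcor} then supplies a $1\leadsto n$ path in $\cg(C_1,\ldots,C_n)$ of length at most $\tfrac{\mu^*}{2}\cdot 4\cdot\optbilp=2\mu^*\cdot\optbilp$, which encodes a $k$-vehicle solution obtained by running the $k$ vehicles in parallel through the sequence of $k$-tuples indexed by the path's vertices.

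The main obstacle is extending the concatenation-graph analysis of~\cite{GoemansK98} from a sequence of single-vehicle tours to a sequence of $k$-tuples of tours executed in parallel, choosing $C_\ell$ so that it is compatible both with the tree-to-$k$-tuple conversion in stage (b) (so that $C_\ell$ is controlled by the tree cost) and with the latency accounting in stage (c) (so that the shortest path in $\cg(C_1,\ldots,C_n)$ upper-bounds the total latency incurred by the concatenated solution). A related subtlety, noted at the end of Section~\ref{bidirect}, is that $\D_t$ is a \emph{distribution} over trees of expected length $kt$ rather than a single tree of length $\le kt$; we must therefore argue that taking expectations over $T\sim\D_t$ before feeding the resulting random $k$-tuples into the concatenation-graph analysis (or derandomizing appropriately) does not harm the final latency bound.
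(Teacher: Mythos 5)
Your overall architecture coincides with the paper's: aggregate the per-vehicle variables into $x'_{v,t}=\sum_i x^i_{v,t}$, $z'_{a,t}=\sum_i z^i_{a,t}$, apply Theorem~\ref{arbpoly} for each $t$ to extract a weighted arborescence family (i.e., a distribution over rooted trees with expected length at most $\sum_a c_az'_{a,t}\le kt$ and coverage matching the LP's), split each tree into $k$ rooted tours, and stitch via the lower-envelope curve and the concatenation graph, with the ``distribution rather than single tree'' issue absorbed by convexity ($f(\E[N_t])\le\E[Y_t]$, which is exactly the paper's Lemma~\ref{cglpbnd}). Stages (a) and (c) are sound modulo the details you defer, and the two subtleties you flag at the end are real but resolvable exactly along the lines you suggest: the standard inductive latency accounting (each newly covered node pays at most half a tour length in expectation under a random traversal direction; each still-uncovered node pays a full tour length) extends verbatim to $k$ tours run in parallel.

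The genuine gap is in stage (b). It is false that a judicious partition of the doubled Euler walk yields $k$ tours each of length essentially $4c(T)/k$: if $T$ is a single path of length $L$ emanating from $r$, the tour containing the far endpoint has length at least $2L$ no matter how you partition, since reconnecting a segment to $r$ costs the distance from $r$ to its endpoints, which is not $O(c(T)/k)$. The correct per-tour bound is $\frac{2c(T)}{k}+2\max_v c_{rv}$, the max taken over the nodes retained in the segment. To control the second term you must first shortcut past every node $v$ with $\sum_{t'\le t}x'_{v,t'}=0$ (the tour only needs to cover nodes the LP has covered by time $t$) and then invoke the constraint in \eqref{jasgn3} forcing $x^i_{v,t'}=0$ for $t'<c_{rv}$, which guarantees $c_{rv}\le t$ for every retained node. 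This gives tours of length at most $\frac{2c(T)}{k}+2t$, whose expectation over $T\sim\D_t$ is at most $4t$ --- the same constant you claimed, but for a different reason. Without this restriction the bound genuinely fails: the arborescences produced from $z'_{\cdot,t}$ may pass through nodes arbitrarily far from $r$ (arcs can route flow through nodes not yet covered at time $t$), so $\max_{v\in T}c_{rv}$ need not be bounded by $t$, and no choice of partition points rescues the tour-length estimate.
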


The rounding algorithm follows the familiar template of finding a
collection of tours with different node coverages and stitching them together using a
concatenation graph. Let $(x,z)$ be a feasible solution to \eqref{lp3}.
Let $x'_{v,t}=\sum_i x^i_{v,t}$ and $z'_{a,t}=\sum_i z^i_{a,t}$. We will in fact only work
with $(x',z')$. (This also implies that we obtain the same $2\mu^*$-guarantee with respect
to an even weaker bidirected LP where we aggregate the $k$ vehicles' routes and use a
single set of $x_{v,t}$ and $z_{a,t}$ variables for all $v,a,t$.)
For notational convenience, define $x'_{r,0}=1$, $x'_{r,t}=0$ for all $t>0$, and
$x'_{v,0}=0$ for all $v\neq r$. 
To give some intuition behind the proof of Theorem~\ref{kmlpround}, 
the following lemma will be useful. The proof involves simple algebraic manipulation, and    
is deferred to the end of this section.

\begin{lemma} \label{cglpbnd}
Suppose for every time $t=0,1,\ldots,\Time$, we have a random variable
$(N_t,Y_t)\in[1,n]\times\R$ such that $(N_0,Y_0)=(1,0)$ with probability 1, and
$\E[N_t]\geq\sum_{u\in V}\sum_{t'=0}^tx'_{u,t'}$, $\E[Y_t]\leq \al t$, for all 
$t\in[\Time]$. Let $f$ be the lower-envelope curve of 
$\bigcup_{t=0}^{\Time}(\text{ support of }(N_t,Y_t))$. 
Then, $\int_1^nf(x)dx\leq\al\sum_{u\in V,t\in[\Time]}tx'_{u,t}$.
\end{lemma}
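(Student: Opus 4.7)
The plan is to combine three simple observations. First, $(N_0,Y_0)=(1,0)$ deterministically places the point $(1,0)$ in the support $S:=\bigcup_t\mathrm{supp}(N_t,Y_t)$, so $f(1)\leq 0$; together with the defining codomain $\R_+$ of the lower-envelope curve, this forces $f(1)=0$. A convex function that attains its global minimum at the left endpoint of its domain is non-decreasing on that domain, so $f$ is non-decreasing on $[1,n]$. Second, for every $t\in[\Time]$, the expectation $(\mu_t,\E[Y_t])$, with $\mu_t:=\E[N_t]$, is a convex combination of support points, hence lies in $\conv(S)$; thus $f(\mu_t)\leq\E[Y_t]\leq\al t$, and monotonicity combined with the hypothesis $\mu_t\geq M_t:=\sum_{u\in V}\sum_{t'\leq t}x'_{u,t'}$ upgrades this to $f(M_t)\leq\al t$. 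Third, the telescoping identity $M_t-M_{t-1}=\sum_u x'_{u,t}$ is immediate from the definition of $M_t$.

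Before assembling these, I would verify that $[1,n]$ actually lies in the domain of $f$. From $x'_{r,0}=1$ and constraint \eqref{jasgn3}, $M_\Time=\sum_{u,t}x'_{u,t}\geq n$, so $\E[N_\Time]\geq n$; combined with $N_\Time\leq n$, this forces $N_\Time=n$ almost surely, putting $n$ inside the $x$-range of $S$. Since $M_0=x'_{r,0}=1$ and $M_t$ is non-decreasing, $1=M_0\leq M_1\leq\cdots\leq M_\Time$ gives a valid partition $[1,M_\Time]=\bigcup_{t=1}^{\Time}[M_{t-1},M_t]$.

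Putting everything together, on each sub-interval $[M_{t-1},M_t]$ monotonicity of $f$ together with the second observation gives $f(x)\leq f(M_t)\leq\al t$, so
\[ \int_1^n f(x)\,dx\ \leq\ \int_1^{M_\Time}f(x)\,dx\ \leq\ \al\sum_{t=1}^{\Time} t\,(M_t-M_{t-1})\ =\ \al\sum_{u\in V,\,t\in[\Time]} t\,x'_{u,t}, \]
where the first inequality uses $f\geq 0$ and $M_\Time\geq n$, and the final equality applies the telescoping identity. The only genuinely delicate step is the monotonicity claim in the first paragraph; once that is in hand, the rest is bookkeeping, consistent with the author's description of the proof as ``simple algebraic manipulation.''
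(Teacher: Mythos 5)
Your proof is correct and reaches the same bound, but it organizes the Fubini/Abel-summation calculation differently from the paper. The paper writes $\int_1^n f = \int_0^{f(n)/\al}\bigl(n-f^{-1}(\al y)\bigr)\,dy$ and then discretizes the \emph{$y$-axis} into unit intervals $[t-1,t]$, using $\E[N_t]\leq f^{-1}(\al t)$ (which requires working with the inverse of $f$, and leads the paper to assert---somewhat carelessly---that $f$ is \emph{strictly} increasing). You instead discretize the \emph{$x$-axis} into intervals $[M_{t-1},M_t]$ where $M_t=\sum_{u}\sum_{t'\le t}x'_{u,t'}$, and use $f(M_t)\le f(\E[N_t])\le\E[Y_t]\le\al t$ directly. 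Your vertical slicing avoids $f^{-1}$ entirely and needs only that $f$ is non-decreasing (which you justify cleanly from $f(1)=0$, $f\ge 0$, and convexity), so it is arguably the tidier presentation of the same underlying algebra. Both proofs need the implicit fact $M_{\Time}\ge n$---the paper uses it silently in the step $n-M_{t-1}\le\sum_{u,\,t'\ge t}x'_{u,t'}$, and you use it to justify $\int_1^n f\le\int_1^{M_\Time}f$---and you are right to flag and verify it from $x'_{r,0}=1$ and constraint~\eqref{jasgn3}. In short: same idea, different (and slightly cleaner) bookkeeping.
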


Lemma~\ref{cglpbnd} coupled with Corollary~\ref{cgcor} imply that if 
one could efficiently compute for each time $t$, a random collection of $k$ trees rooted
at $r$ such that (a) their union covers in expectation at least 
$\sum_{u\in V}\sum_{t'=0}^tx'_{u,t'}=1+\sum_{v\neq r,t'\in[t]}x'_{v,t'}$ nodes, 
and (b) the expected maximum length of a tree in the collection is at most $t$, then we
would achieve a $\mu^*$-approximation by mimicking the proof of part (i) 
of Theorem~\ref{bnsbnd}. 
We do not quite know how to achieve this. 
However, as in the proof of Theorem~\ref{pcstroll}, applying Theorem~\ref{arbpoly} to (a
scaled version of) $(z'_{a,t})_{a\in A}$, we can efficiently find {\em one} random
$r$-rooted tree that in expectation has cost at most $kt$ and covers at least
$\sum_{u\in V}\sum_{t'=0}^tx'_{u,t'}$ nodes. 
This is the chief source of our improvement over the $8.497$-approximation
in~\cite{FakcharoenpholHR07}: we achieve the target coverage of the $k$ vehicles (which in
our case is determined by an LP) whereas \cite{FakcharoenpholHR07} sequentially find
separate tours for each vehicle, which succeeds in covering only a constant-fraction of
the target number of nodes (which in their case is determined by the integer optimal 
solution).  

The flip side is that we need to do slightly more work to convert the object computed 
into a (random) collection of $k$ low-cost tours containing $r$. 
To convert a rooted tree, we Eulerify it, break the resulting cycle into $k$ segments,
and attach each segment to $r$. Thus, for each time $t$, we obtain a random collection of
$k$ trees rooted at $r$ satisfying property (a) above, and a relaxed form of (b): the
expected maximum length of a tree in the collection is at most $2t$.
Thus, 
we obtain a solution of cost at most $2\mu^*$ times the cost of $(x,z)$.
We now describe the rounding algorithm in more detail and proceed to analyze it.

\vspace{2ex}
{\small \hrule 
\begin{algorithm} \label{kmlprounding}
The input is a feasible solution $(x,z)$ to \eqref{lp3}.
Let $x'_{v,t}=\sum_{i}x^i_{v,t},\ z'_{a,t}=\sum_{i}z^i_{a,t'}$ for all $v, a, t$.
\end{algorithm}
\vspace{-10pt}
\begin{labellist}[R]
\item Initialize $C\assign\{(1,0)\}$, $\Qc\assign\es$. Let $K$ be such $Kz'_{a,t}$ is an
integer for all $a, t$. 
For $t\in[\Time]$, define $S(t)=\{u\in V: \sum_{t'=0}^t x'_{u,t'}>0\}$. 
(Note that $r\in S(t)$ for all $t>0$.)

\item For all $t=1,\ldots,\Time$, do the following. 
Apply Theorem~\ref{arbpoly} on the digraph $D$ with edge weights $\{Kz'_{a,t}\}_{a\in A}$ and
integer $K$ (and root $r$) to obtain a weighted arborescence family
$(\gm_1,Q^t_1),\ldots,(\gm_q,Q^t_q)$. 
For each arboresence $Q^t_\ell$ in the family, which we view as a tree, 
add the point $\bigl(|V(Q^t_\ell)\cap S(t)|,\frac{2c(Q^t_\ell)}{k}+2t\bigr)$ to $C$, and
add the tree $Q^t_\ell$ to $\Qc$. 

\item For all $\ell=1,\ldots,n$, compute $s_\ell=f(\ell)$, where 
$f:[1,n]\mapsto\R_+$ is the lower-envelope curve of $C$.
We show in Lemma~\ref{lpsbound} that for every corner point $\bigl(\ell,f(\ell)\bigr)$ of
$f$, 
there is some tree $Q^*_\ell\in\Qc$ and some time $t^*_\ell$
such that $\ell=|V(Q^*_\ell)\cap S(t^*_\ell)|$, 
$f(\ell)=\frac{2c(Q^*_\ell)}{k}+2t^*_\ell$, and 
$\max_{v\in Q^*_\ell\cap S(t^*_\ell)}c_{rv}\leq t^*_\ell$. 

\item Find a shortest $1\leadsto n$ path $P_{C}$ in the concatenation graph
$\cg(s_1,\ldots,s_n)$. 

\item For every node $\ell>1$ on $P_{C}$, do the following. Double and shortcut $Q^*_\ell$
to obtain a cycle. Remove nodes on this cycle that are not in $S(t^*_\ell)$ by
shortcutting past such nodes.
Break this cycle into $k$ segments, each of length at most $2c(Q^*_\ell)/k$ and add edges
connecting the first and last vertex of each segment to $r$. This yields a collection of
$k$ cycles; traverse each resulting cycle in a random direction to obtain a collection of
$k$ tours $Z_{1,\ell}, \ldots, Z_{k,\ell}$.

\item For every $i=1,\ldots,k$, concatenate the tours $Z_{i,\ell}$ for nodes $\ell$ on
  $P_C$ to obtain vehicle $i$'s route.
\end{labellist}
\hrule
} 

\paragraph{Analysis.}
We first prove Lemma~\ref{cglpbnd}.
Lemma~\ref{lpsbound} utilizes this to bound $\int_1^n f(x)dx$, 
and shows that corner points of $f$ satisfy the properties stated in step R3. 
The latter allows us to argue that the solution returned has cost at most the length of
$P_C$ in the concatenation graph. Combining these facts with Corollary~\ref{cgcor} yields 
the $2\mu^*$ approximation ratio and completes the proof of Theorem~\ref{kmlpround}. 

\begin{proofof}{Lemma~\ref{cglpbnd}}
Note that $f$ is strictly increasing: for $x',x\in[1,n]$ with $x'<x$, since 
$x'=\frac{x'-1}{x-1}\cdot x+\frac{x-x'}{x-1}\cdot 1$ and $f(1)=0$, we have
$f(x')\leq\frac{x'-1}{x-1}f(x)<f(x)$. 
So we can write 
\begin{equation}
\frac{\int_1^nf(x)dx}{\al}=\int_1^n\Bigl(\int_0^{f(x)/\al}dy\Bigr)dx
=\int_0^{f(n)/\al}dy\bigl(\int_{f^{-1}(\al y)}^ndx\bigr)=\int_0^{f(n)/\al}\bigl(n-f^{-1}(\al y)\bigr)dy.
\label{integ}
\end{equation}
Note that $f(n)\leq\al\Time$. For any $t=0,1,\ldots,\Time$, the point
$(\E[N_t],\E[Y_t])=\E[(N_t,Y_t)]$ lies in the convex hull of the support of $(N_t,Y_t)$,
and so $f(\E[N_t])\leq\E[Y_t]\leq \al t$ and hence, 
$\E[N_t]\leq f^{-1}(\al t)$. So we can bound \eqref{integ} by
\begin{equation*}
\begin{split}
\sum_{t=1}^\Time\int_{t-1}^t\bigl(n-f^{-1}(\al y)\bigr)dy
& \leq\sum_{t=1}^\Time\bigl(n-f^{-1}(\al(t-1))\bigr) \\
& \leq\sum_{t=1}^\Time\Bigl(n-\sum_{u\in V}\sum_{t'=0}^{t-1}x'_{u,t'}\Bigr)
\leq\sum_{t=1}^\Time\sum_{u\in V,t'\geq t}x'_{u,t'}
=\sum_{u\in V}\sum_{t'=1}^\Time t'x'_{u,t'}. \qedhere
\end{split} 
\end{equation*}
\end{proofof}

\begin{lemma} \label{lpsbound}
(i) $\int_1^nf(x)dx\leq 4\sum_{u\in V,t\in[\Time]}tx'_{u,t}$. 
(ii) If $\bigl(\ell,f(\ell)\bigr)$ is a corner point of $f$, then there is a tree $Q^*_\ell$
and time $t^*_\ell$ satisfying the properties stated in step S3.
\end{lemma}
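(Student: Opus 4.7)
The plan is to prove (i) by constructing, for each $t\in\{0,1,\ldots,\Time\}$, a random variable $(N_t,Y_t)$ whose support lies in $C$, and then invoking Lemma~\ref{cglpbnd} with $\al=4$; part~(ii) will follow from the fact that corner points of the lower-envelope curve of a finite set must themselves belong to that set. Throughout, we may assume that~\eqref{jasgn3} holds at equality, so $\sum_t x'_{u,t}=1$ for every $u\ne r$.

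Fix $t\in[\Time]$ and let $(\gm_\ell,Q^t_\ell)_{\ell=1}^q$ be the arborescence family produced in step~R2 by applying Theorem~\ref{arbpoly} to $D$ with edge weights $\{Kz'_{a,t}\}_{a\in A}$; the hypothesis $|\dt^\into(u)|\geq|\dt^\out(u)|$ for $u\ne r$ is exactly constraint~\eqref{path3} summed over $i$. Let $Q^t$ be the random $r$-rooted tree obtained by sampling $Q^t_\ell$ with probability $\gm_\ell/K$. The edge-weight bound $\sum_{\ell:a\in Q^t_\ell}\gm_\ell\leq Kz'_{a,t}$, combined with~\eqref{onep3} summed over $i$, gives $\E[c(Q^t)]\leq\sum_a c_a z'_{a,t}\leq kt$. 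For every $u\ne r$, the node-coverage identity $\sum_{\ell:u\in Q^t_\ell}\gm_\ell=\min\{K,\ld_D(r,u)\}$ of Theorem~\ref{arbpoly}, combined with max-flow/min-cut applied to~\eqref{jcov3} summed over $i$, yields
\[
\Pr[u\in Q^t]=\tfrac{\min\{K,\ld_D(r,u)\}}{K}\geq\min\Bigl\{1,\,\sum_{t'\leq t}x'_{u,t'}\Bigr\}=\sum_{t'\leq t}x'_{u,t'}.
\]

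Now set $N_t=|V(Q^t)\cap S(t)|$ and $Y_t=2c(Q^t)/k+2t$, and take $(N_0,Y_0)=(1,0)$ deterministically to match the initialization $C\leftarrow\{(1,0)\}$. Then $\E[Y_t]\leq 4t$, and since $r\in V(Q^t)\cap S(t)$ always, $\E[N_t]\geq 1+\sum_{u\ne r}\sum_{t'\leq t}x'_{u,t'}=\sum_{u\in V}\sum_{t'=0}^t x'_{u,t'}$. By construction, $C$ is precisely the union over $t\in\{0,\ldots,\Time\}$ of the supports of $(N_t,Y_t)$, so $f$ is the lower-envelope curve of $\bigcup_t\mathrm{supp}(N_t,Y_t)$ and Lemma~\ref{cglpbnd} with $\al=4$ establishes part~(i).

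For part~(ii), since $f$ is the lower-envelope curve of the \emph{finite} set $C$, every corner point of $f$ is an extreme point of $\conv(C)$ and therefore lies in $C$. Any such corner point $(\ell,f(\ell))$ with $\ell>1$ equals $\bigl(|V(Q^t_j)\cap S(t)|,\,2c(Q^t_j)/k+2t\bigr)$ for some $t\in[\Time]$ and some $j$, and we take $Q^*_\ell=Q^t_j$ and $t^*_\ell=t$; the case $\ell=1$ involves only the initialization point $(1,0)$, for which we take $Q^*_1=\{r\}$ and $t^*_1=0$. The third property $\max_{v\in V(Q^*_\ell)\cap S(t^*_\ell)}c_{rv}\leq t^*_\ell$ is immediate from the definition of $S(t)$ and~\eqref{jasgn3}: any $v\in S(t^*_\ell)$ has $x^i_{v,t'}>0$ for some $i$ and some $t'\leq t^*_\ell$, which (since $r_i=r$ in single-depot \kmlp) forces $c_{rv}\leq t'\leq t^*_\ell$. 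The main obstacle is conceptual rather than technical: one must verify that the arborescence decomposition of step~R2 simultaneously achieves the desired expected cost bound (via~\eqref{onep3}) and the desired expected coverage bound with respect to $S(t)$ (via~\eqref{jcov3}), and that the very same points of $C$ can be used both to invoke Lemma~\ref{cglpbnd} and to extract the corner-point tree $Q^*_\ell$; once this dual role of $C$ is pinned down, both parts follow essentially by unwinding definitions.
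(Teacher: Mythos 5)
Your proof is correct and follows essentially the same approach as the paper's: you construct the random variables $(N_t,Y_t)$ from the arborescence family, verify the hypotheses of Lemma~\ref{cglpbnd} with $\al=4$ (via constraints \eqref{onep3} and \eqref{jcov3} summed over $i$, plus max-flow/min-cut), and then deduce part (ii) from the observation that corner points of $f$ lie in the finite set $C$. The only differences are cosmetic: you make explicit the role of \eqref{path3} in meeting the hypothesis of Theorem~\ref{arbpoly} and spell out the min-cut step that the paper leaves implicit.
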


\begin{proof}
Consider any $t\in[\Time]$. The weighted arborescence family
$(\gm_1,Q^t_1),\ldots,(\gm_q,Q^t_q)$ yields a distribution over arborescences, where we
pick arborescence $Q^t_\ell$ with probability $\gm_\ell/K$. Let
$N_t$ and $Y_t$ be the random variables denoting $|V(Q^t_\ell)\cap S(t)|$ and
$\frac{2c(Q^t_\ell)}{k}+2t$ respectively. Define $\ld_D(r,r)=K$.
By Theorem~\ref{arbpoly}, we have 
$\E[c(Q^t_\ell)]\leq\sum_a c_az'_{a,t}\leq kt$, so $\E[Y_t]\leq 4t$; also,
$\E[N_t]\geq\sum_{u\in S(t)}\ld_D(r,u)/K\geq\sum_{u\in S(t)}\sum_{t'=0}^tx'_{u,t'}
=\sum_{u\in V}\sum_{t'=0}^tx'_{u,t'}$.
So the random variables $(N_t,Y_t)_{t\in[\Time]}$ and $(N_0,Y_0)=(1,0)$ and the curve $f$
satisfy the conditions of Lemma~\ref{cglpbnd} with $\al=4$, and its conclusion proves part
(i). 

For part (ii), corner points of $f$ are points of $C$. So if
$\bigl(\ell,f(\ell)\bigr)\in C$ is a corner point that was added to $C$ in iteration 
$t$ of step R2, then set $t^*_\ell=t$, and $Q^*_\ell$ to be the tree
added in this iteration. By definition, we have $|V(Q^*_\ell)\cap S(t^*_\ell)|=\ell$,
$f(\ell)=\frac{2c(Q^*_\ell)}{k}+2t^*_\ell$. Also
$\max_{v\in Q^*_\ell\cap S(t^*_\ell)}c_{rv}\leq t^*_\ell$, since $v$ can only lie in
$S(t^*_\ell)$ if $c_{rv}\leq t^*_\ell$, due to constraint \eqref{jasgn3}.
\end{proof}

\begin{proofof}{Theorem~\ref{kmlpround}}
We claim that the solution returned by Algorithm~\ref{kmlprounding} has cost at most the
length of $P_C$ in the concatenation graph $\cg(s_1,\ldots,s_n)$. 
Combining this with Corollary~\ref{cgthm} and part (i) of Lemma~\ref{sbound}, we obtain
that the total latency is at most 
$\frac{\mu^*}{2}\int_1^nf(x)dx\leq 2\mu^*\sum_{v,t}tx'_{v,t}=2\mu^*\sum_{v,i,t}tx^i_{v,t}$.

The proof of the claim is similar to the one in~\cite{GoemansK98} for single-vehicle \mlp.  
Consider an edge $(o,\ell)$ of $P_C$.
By Theorem~\ref{cgthm}, $\bigl(\ell,f(\ell)\bigr)$ is a corner point of $f$, so there
exist $Q^*_\ell$, $t^*_\ell$ satisfying the properties stated in step R3. It follows that
when we transform $Q^*_\ell$ into $k$ cycles containing only nodes of $S(t^*_\ell)$, each
cycle has length at most $f(\ell)=s_\ell$.

Suppose inductively that we have covered at least $o$ nodes by the partial solution
constructed by stitching tours corresponding to the nodes on $P_C$ up to and including
$o$. Consider the additional contribution to the total latency when we concatenate
tour $Z_{i,\ell}$ to vehicle $i$'s current route, for $i=1,\ldots,k$. 
The resulting partial solution covers at least $\ell$ nodes.
A node covered in this step incurs additional latency at most $\frac{s_\ell}{2}$ since we
traverse the cycle containing it (which has cost at most $s_\ell$) in a random direction. 
A node that is still uncovered after this step incurs additional latency at most $s_\ell$.
There are at most $\ell-o$ new nodes covered in this step, and at most $n-\ell$ uncovered
nodes after this step, so the increase in total latency is at most
$\frac{s_\ell}{2}(\ell-o)+s_\ell(n-\ell) = s_\ell\bigl(n-\frac{o+\ell}{2}\bigr)$,
which is exactly the length of $(o,\ell)$ edge in $\cg(s_1,\ldots,s_n)$. Therefore, by
induction the total latency is at most the length of $P_C$ in $\cg(s_1,\ldots,s_n)$. 
\end{proofof}

\begin{corollary} \label{mlpround}
For single-vehicle \mlp, any solution $(x,z)$ to \eqref{lp3} can be rounded losing a
factor of at most $\mu^*<3.5912$. Hence, for any $\ve>0$, we can compute an \mlp-solution
of cost at most $(\mu^*+\ve)\optbilp$ in time 
$\poly\bigl(\text{input size},\frac{1}{\ve}\bigr)$. 
\end{corollary}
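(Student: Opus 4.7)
The plan is to specialize Algorithm~\ref{kmlprounding} to $k=1$ and to observe that the factor-of-$2$ gap between $\mu^*$ and $2\mu^*$ arises entirely from the segmentation step in R5, which is unnecessary when there is only one vehicle. First, I would modify step R2 so that for each $t \in [\Time]$ and each arborescence $Q^t_\ell$ produced by Theorem~\ref{arbpoly}, the point added to $C$ is $\bigl(|V(Q^t_\ell)\cap S(t)|,\, 2c(Q^t_\ell)\bigr)$ rather than $\bigl(|V(Q^t_\ell)\cap S(t)|,\, 2c(Q^t_\ell)/k + 2t\bigr)$. Correspondingly in step R5, I would skip the segmentation: for each corner-point tree $Q^*_\ell$ selected in step R3, simply double-and-shortcut it (removing nodes outside $S(t^*_\ell)$) to produce a single cycle through $r$ of length at most $s_\ell = 2c(Q^*_\ell)$, and traverse this cycle in a random direction to obtain the tour $Z_\ell$. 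Concatenating the $Z_\ell$ over the nodes of $P_C$ yields the vehicle's route.

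Next I would replay the analysis of Theorem~\ref{kmlpround} with the improved constant $\al = 2$ in Lemma~\ref{cglpbnd}. The random variable $Y_t = 2c(Q^t_\ell)$ now satisfies $\E[Y_t] \leq 2\sum_a c_a z'_{a,t} \leq 2t$, using Theorem~\ref{arbpoly} together with constraint \eqref{onep3} (which, summed over the single vehicle, reads $\sum_a c_a z'_{a,t} \leq t$). The random variable $N_t = |V(Q^t_\ell)\cap S(t)|$ is unchanged, and still satisfies the expected-coverage bound derived in Lemma~\ref{lpsbound}(ii). The analogue of Lemma~\ref{lpsbound}(ii) on corner points also goes through verbatim, since the modification only rescales the second coordinate added to $C$. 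Hence Lemma~\ref{cglpbnd} with $\al = 2$ yields $\int_1^n f(x)\,dx \leq 2 \sum_{u,t} t x'_{u,t}$, and the concatenation-graph charging argument from the proof of Theorem~\ref{kmlpround}, combined with Corollary~\ref{cgcor}, bounds the total latency of the returned solution by $\frac{\mu^*}{2} \int_1^n f(x)\,dx \leq \mu^* \sum_{u,t} t x'_{u,t} = \mu^* \cdot \optbilp$.

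For the polynomial-time $(\mu^*+\ve)$-approximation, I would use the standard scaling-and-rounding reduction noted in Section~\ref{lps} to make $\Time$ polynomially bounded in $n/\ve$, solve \eqref{lp3} in polynomial time by separating over the cut constraints \eqref{jcov3} via max-flow, and then apply the rounding above; the $(1+\ve)$-factor loss from the scaling absorbs into the stated ratio. I do not foresee any substantive technical obstacle beyond bookkeeping. The only conceptual point worth checking carefully is that, for $k=1$, no edges need to be added to reconnect segments back to $r$ (the doubled tree is already a single cycle through $r$), which is precisely what halves $\al$ in Lemma~\ref{cglpbnd} and recovers the $\mu^*$ guarantee that matches the current-best combinatorial bound for \mlp.
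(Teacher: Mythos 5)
Your proposal is correct and is essentially identical to the paper's own proof of Corollary~\ref{mlpround}: drop the segmentation in step R5, add the points $\bigl(|V(Q^t_\ell)\cap S(t)|,2c(Q^t_\ell)\bigr)$ to $C$, and rerun Lemma~\ref{cglpbnd} with $\al=2$ (using $\sum_a c_az'_{a,t}\le t$ for $k=1$) to get $\int_1^nf(x)dx\le 2\sum_{u,t}tx'_{u,t}$ and hence the $\mu^*$ bound. The only nit is a citation slip: the expected-coverage bound on $N_t$ comes from the proof of part (i) of Lemma~\ref{lpsbound}, not part (ii).
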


\begin{proof}
This follows from essentially Algorithm~\ref{kmlprounding} and its analysis. The
improvement comes because we no longer need to break up a tree into $k$ tours. So in step
R2, for each tree $Q^t_\ell$ in the weighted arborescence family obtained for time $t$, we
add the point $\bigl(|V(Q^t_\ell)\cap S(t)|,2c(Q^t_\ell)\bigr)$ to $C$, and in part(i) of
Lemma~\ref{lpsbound}, we have the stronger bound 
$\int_1^nf(x)dx\leq 2\sum_{u\in V,t\in[\Time]}tx'_{u,t}$, which yields the $\mu^*$
approximation.  
\end{proof}

\subsection{The combinatorial approximation algorithm} \label{comb}
The combinatorial algorithm for \kmlp follows a similar approach as the LP-rounding
algorithm. The difference is that instead of using an LP to determine the target
coverage of the $k$ vehicles and maximum length of each vehicle's route, we now seek to
match the target coverage and length bound of an optimal $(k,\ell)$-bottleneck-stroll.   
Corollary~\ref{pccor} shows that one can efficiently find a rooted tree or bipoint tree  
that is at least as good (in terms of both total cost and node-coverage) as the optimal 
$(k,\ell)$-bottleneck-stroll solution for all $\ell$, and again this is where we score
over the algorithm in~\cite{FakcharoenpholHR07}. 
Again, we need to convert the object computed into a collection of $k$ tours, and
Theorem~\ref{cgthm} implies that a bipoint tree can be handled by handling the trees
comprising it. 
We convert a tree into $k$ tours as before, 
but to bound the cost of each resulting tour, 
we now need to ``guess'' the node furthest from $r$ covered by an optimal
$(k,\ell)$-bottleneck stroll solution, and apply Corollary~\ref{pccor} with more-distant
nodes removed; this ensures that each resulting tour has cost at most
$4\cdot\optbns{k,\ell}$.   
Hence, mimicking the proof of Theorem~\ref{bnsbnd} (i) shows that
we obtain a solution of cost at most $2\mu^*\cdot\bnslb<7.183\cdot\bnslb$. 
The algorithm and analysis are very similar to that in Section~\ref{lpround}.

\vspace{2ex}
{\small \hrule \vspace{-2pt}
\begin{algorithm} \label{alg_single_depot} \label{kmlpalg}
\end{algorithm}
\vspace{-10pt}
\begin{labellist}[S]
\item Initialize $C\assign\es$, $\Qc\assign\es$.
Let $v_1=r, v_2, \ldots, v_n$ be the nodes of $G$ in order of increasing distance 
from the root. Let $G_j=(V_j,E_j)$ the subgraph of $G$ induced by
$V_j:=\{v_1,\ldots,v_j\}$. 

\item For all $j,\ell=1,\ldots,n$, do the following. Use part (i) of Corollary~\ref{pccor}
with input graph $G_j$ and target $\ell$ to compute a rooted tree $Q_{j\ell}$ or rooted
bipoint tree $(a_{j\ell},Q_{j\ell}^1,b_{j\ell},Q_{j\ell}^2)$. In the former case, 
add the point $\bigl(|V(Q_{j\ell})|,\frac{2c(Q_{j\ell})}{k}+2c_{rv_j}\bigr)$
to $C$, and add the tree $Q_{j\ell}$ to $\Qc$. 
In the latter case, 
add the points 
$\bigl(|V(Q^1_{j\ell})|,\frac{2c(Q^1_{j\ell})}{k}+2c_{rv_j}\bigr)$,
$\bigl(|V(Q^2_{j\ell})|,\frac{2c(Q^2_{j\ell})}{k}+2c_{rv_j}\bigr)$ to $C$, and add the
trees $Q^1_{j\ell}$, $Q^2_{j\ell}$ to $\Qc$.

\item For all $\ell=1,\ldots,n$, compute 
$s_\ell=f(\ell)$, where $f:[1,n]\mapsto\R_+$ is the lower-envelope curve of $C$.
We show in Lemma~\ref{sbound} that for every corner point $\bigl(\ell,f(\ell)\bigr)$ of
$f$, there is some tree $Q^*_\ell\in\Qc$ and some index $j^*_\ell$ such that
$\ell=|V(Q^*_\ell)|$, 
$f(\ell)=\frac{2c(Q^*_\ell)}{k}+2c_{rv_{j^*_\ell}}$, and 
$\max_{v\in Q^*_\ell}c_{rv}\leq c_{rv_{j^*_\ell}}$. 

\item Find a shortest $1\leadsto n$ path $P_{C}$ in the concatenation graph
$\cg(s_1,\ldots,s_n)$. 

\item For every node $\ell>1$ on $P_{C}$, do the following. Double and shortcut $Q^*_\ell$
to obtain a cycle. 
Break this cycle into $k$ segments, each of length at most $2c(Q^*_\ell)/k$ and add edges
connecting the first and last vertex of each segment to $r$. This yields a collection of
$k$ cycles; traverse each resulting cycle in a random direction to obtain a collection of
$k$ tours $Z_{1,\ell}, \ldots, Z_{k,\ell}$.

\item For every $i=1,\ldots,k$, concatenate the tours $Z_{i,\ell}$ for nodes $\ell$ on
  $P_C$ to obtain vehicle $i$'s route.
\end{labellist}
\hrule
} 

\begin{lemma} \label{lem_bns_bound} \label{sbound}
(i) $s_\ell\leq 4\cdot\optbns{k,\ell}$ for all $\ell=1,\ldots,n$. \quad
(ii) If $\bigl(\ell,f(\ell)\bigr)$ is a corner point of $f$, then there is a tree $Q^*_\ell$
and index $j^*_\ell$ satisfying the properties stated in step S3.
\end{lemma}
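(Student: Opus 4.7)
My plan is to prove part (i) by exhibiting one specific iteration of step S2 that contributes a point (or a convex combination of two added points) to $C$ with $x$-coordinate $\ell$ and $y$-coordinate at most $4\cdot\optbns{k,\ell}$, and then to deduce part (ii) from the standard fact that corner points of a lower-envelope curve are actual members of the defining set.

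For part (i), I would fix $\ell$ and start from an optimal $(k,\ell)$-bottleneck-stroll solution, consisting of $r$-rooted paths $P_1,\ldots,P_k$ whose union covers at least $\ell$ nodes and satisfies $\max_i c(P_i)\leq\optbns{k,\ell}$. Every node on any $P_i$ lies within distance $\optbns{k,\ell}$ from $r$, so if I let $j^*$ denote the largest index with $c_{rv_{j^*}}\leq\optbns{k,\ell}$, then every node visited by the optimal solution belongs to $V_{j^*}$; by metric shortcutting I may take each $P_i$ to lie in $G_{j^*}$. Hence $\{P_1,\ldots,P_k\}$ is a collection of rooted paths in $G_{j^*}$ covering at least $\ell$ nodes with total cost $\sum_i c(P_i)\leq k\cdot\optbns{k,\ell}$, so the quantity $\iopt$ of Corollary~\ref{pccor}(i) applied to $(G_{j^*},\ell)$ is at most $k\cdot\optbns{k,\ell}$.

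I would then focus on iteration $(j,\ell)=(j^*,\ell)$ of step S2. In the single-tree case, $|V(Q_{j^*\ell})|=\ell$ and $c(Q_{j^*\ell})\leq k\cdot\optbns{k,\ell}$, so the point added to $C$ is $\bigl(\ell,\frac{2c(Q_{j^*\ell})}{k}+2c_{rv_{j^*}}\bigr)$, whose $y$-coordinate is at most $2\optbns{k,\ell}+2\optbns{k,\ell}=4\optbns{k,\ell}$; in particular $f(\ell)\leq 4\optbns{k,\ell}$. In the bipoint case, the two points added to $C$ have $y$-coordinates $\frac{2c(Q^i_{j^*\ell})}{k}+2c_{rv_{j^*}}$ for $i=1,2$, and their convex combination with weights $(a_{j^*\ell},b_{j^*\ell})$ lies in $\conv(C)$; by the bipoint-tree convention $a_{j^*\ell}|V(Q^1_{j^*\ell})|+b_{j^*\ell}|V(Q^2_{j^*\ell})|=|V(Q_{j^*\ell})|=\ell$ and $a_{j^*\ell}c(Q^1_{j^*\ell})+b_{j^*\ell}c(Q^2_{j^*\ell})=c(Q_{j^*\ell})\leq k\cdot\optbns{k,\ell}$, so this convex combination has $x$-coordinate $\ell$ and $y$-coordinate $\frac{2c(Q_{j^*\ell})}{k}+2c_{rv_{j^*}}\leq 4\optbns{k,\ell}$, again giving $f(\ell)\leq 4\optbns{k,\ell}$.

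For part (ii), I would use that every corner point of $f$ is an extreme point of $\conv(C)$, and extreme points of the convex hull of a finite set are members of the set; so $(\ell,f(\ell))$ is itself a point that step S2 added to $C$. Tracing back to the iteration $(j,\ell')$ that added this point, I set $j^*_\ell=j$ and let $Q^*_\ell\in\Qc$ be the corresponding tree (either $Q_{j\ell'}$, or whichever of $Q^1_{j\ell'}, Q^2_{j\ell'}$ contributed this point in the bipoint case). The three required properties then read off from the $x$- and $y$-coordinates of the added point, together with the observation that $Q^*_\ell\sse G_{j^*_\ell}$ combined with the ordering $c_{rv_1}\leq\cdots\leq c_{rv_n}$ forces $\max_{v\in Q^*_\ell}c_{rv}\leq c_{rv_{j^*_\ell}}$. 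The main obstacle is really in part (i): one must recognize that the enumeration over $j$ in step S2 is exactly what lets us ``guess'' the correct cutoff $j^*$, so that the additive term $2c_{rv_{j^*}}$ stays comparable to $\optbns{k,\ell}$ and the call to Corollary~\ref{pccor} is made on a restricted graph $G_{j^*}$ that still contains the optimal bottleneck-stroll; the bipoint case introduces no real difficulty once one remembers that $f$ sees convex combinations of the points in $C$.
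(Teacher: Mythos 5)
Your proposal is correct and takes essentially the same approach as the paper: pick the iteration $(j,\ell)$ of step S2 with $j$ chosen so that $G_j$ contains the optimal $(k,\ell)$-bottleneck-stroll (the paper takes $v_j$ to be the furthest covered node, you take the largest index with $c_{rv_j}\le\optbns{k,\ell}$; both give $c_{rv_j}\le\optbns{k,\ell}$ and both force the optimum into $G_j$), then invoke Corollary~\ref{pccor}(i) to produce a point (or a convex combination of two points) in $\conv(C)$ at $x$-coordinate $\ell$ with $y$-coordinate at most $4\cdot\optbns{k,\ell}$, and deduce (ii) from the fact that a corner of the lower envelope of a finite point set is one of its members. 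Your write-up merely spells out the bipoint bookkeeping that the paper compresses into a single sentence, and the passing reference to ``metric shortcutting'' is unnecessary (the optimal paths already live entirely in $V_{j^*}$, hence in the induced complete subgraph $G_{j^*}$), but neither point affects correctness.
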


\begin{proof}
For part (i), suppose that $v_j$ is the node furthest from $r$ that is covered by some
optimal $(k,\ell)$-bottleneck-stroll solution, so $c_{rv_j}\leq\optbns{k,\ell}$. 
Then, given part (i) of Corollary~\ref{pccor}, 
in iteration $(j,\ell)$ of step S2, 
we add one or two points to $C$ such that the point 
$\bigl(\ell,\frac{2z}{k}+2c_{rv_{j}}\bigr)$, for some $z\leq k\cdot\optbns{k,\ell}$, lies in
the convex hull of the points added. Therefore, $s_\ell=f(\ell)\leq 4\cdot\optbns{k,\ell}$
since $f$ is the lower-envelope curve of $C$.

The proof of part (ii) is essentially identical to the proof of Lemma~\ref{lpsbound} (ii).
\end{proof}

\begin{theorem} \label{kmlpthm}
Algorithm \ref{alg_single_depot} returns a solution of cost at most
$2\mu^*\cdot\bnslb$. Hence, it is a $2\mu^*$-approximation algorithm for \kmlp.
\end{theorem}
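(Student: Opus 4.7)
The plan is to mirror the analysis of Theorem~\ref{kmlpround}, with Lemma~\ref{sbound} (the combinatorial analogue of Lemma~\ref{lpsbound}) playing the role that the LP-based cost accounting played there. Concretely, I would argue in two stages: first, that the expected total latency of the constructed solution is at most the length of $P_C$ in $\cg(s_1,\ldots,s_n)$; and second, that this length is at most $2\mu^*\cdot\bnslb$.

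For the first stage, I would proceed inductively along $P_C$. Consider an edge $(o,\ell)$ of $P_C$. By Theorem~\ref{cgthm} this is a corner point of $f$, so by Lemma~\ref{sbound}(ii) there exist $Q^*_\ell\in\Qc$ and index $j^*_\ell$ with $|V(Q^*_\ell)|=\ell$, $\max_{v\in Q^*_\ell}c_{rv}\leq c_{rv_{j^*_\ell}}$, and $f(\ell)=\frac{2c(Q^*_\ell)}{k}+2c_{rv_{j^*_\ell}}$. Doubling and shortcutting $Q^*_\ell$ gives a cycle of length $2c(Q^*_\ell)$, which is broken into $k$ segments of length at most $2c(Q^*_\ell)/k$; closing each segment through $r$ uses two edges of cost at most $c_{rv_{j^*_\ell}}$ each, so each resulting cycle has length at most $s_\ell$. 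Assuming inductively that the partial solution up to $o$ covers at least $o$ nodes, after concatenating $Z_{1,\ell},\ldots,Z_{k,\ell}$ the partial solution covers at least $\ell$ nodes; the expected additional latency of each of the at most $\ell-o$ new nodes is at most $s_\ell/2$ (cycle traversed in a random direction), and each of the at most $n-\ell$ still-uncovered nodes incurs an additional $s_\ell$. This yields the bound $\frac{s_\ell}{2}(\ell-o)+s_\ell(n-\ell)=s_\ell(n-\tfrac{o+\ell}{2})$, which is exactly the length of $(o,\ell)$ in $\cg(s_1,\ldots,s_n)$.

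For the second stage, Theorem~\ref{cgthm} bounds the length of $P_C$ by $\frac{\mu^*}{2}\sum_{\ell=1}^n s_\ell$, and Lemma~\ref{sbound}(i) gives $s_\ell\leq 4\cdot\optbns{k,\ell}$, so combining yields
\[
\text{length of }P_C \;\leq\; \frac{\mu^*}{2}\sum_{\ell=1}^n s_\ell
\;\leq\; 2\mu^*\sum_{\ell=1}^n\optbns{k,\ell} \;=\; 2\mu^*\cdot\bnslb.
\]
Since $\bnslb$ is a lower bound on the optimum of any \kmlp instance, the approximation ratio $2\mu^*$ follows immediately.

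The one subtlety I would pay attention to, but do not expect to be a real obstacle, is the handling of the bipoint output of Corollary~\ref{pccor}(i): in step S2 both constituent trees of a bipoint are inserted into $\Qc$ separately and each contributes its own point to $C$, so corner points of $f$ always correspond to a genuine single tree $Q^*_\ell\in\Qc$, as required by Lemma~\ref{sbound}(ii). Beyond that, every remaining step is a direct adaptation of the concatenation-graph argument used in the proof of Theorem~\ref{kmlpround}, only with the LP-based coverage/length guarantees replaced by the combinatorial bound $s_\ell\leq 4\cdot\optbns{k,\ell}$ from Lemma~\ref{sbound}(i).
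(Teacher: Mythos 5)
Your proposal is correct and follows essentially the same route as the paper's proof: the paper also reduces to the claim that the returned solution costs at most the length of $P_C$ in $\cg(s_1,\ldots,s_n)$, establishes it via the same corner-point argument from Lemma~\ref{sbound}(ii) plus the inductive edge-by-edge accounting (which the paper delegates to the proof of Theorem~\ref{kmlpround} and you have spelled out explicitly), and then combines Theorem~\ref{cgthm} with Lemma~\ref{sbound}(i) exactly as you do. Your remark on the bipoint subtlety is accurate and consistent with how step S2 and Lemma~\ref{sbound}(ii) are set up.
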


\begin{proof}
We claim that the solution returned has cost at most the length of $P_C$ in the
concatenation graph $\cg(s_1,\ldots,s_n)$. 
Combining this with Lemma~\ref{sbound} and Theorem~\ref{cgthm}, we obtain that the total
latency is at most 
$\frac{\mu^*}{2}\sum_{\ell=1}^n s_\ell\leq 2\mu^*\cdot\bnslb$, where $\mu^* < 3.5912$.  

Consider an edge $(o,\ell)$ of $P_C$.
By Theorem~\ref{cgthm}, $\bigl(\ell,f(\ell)\bigr)$ is a corner point of $f$, so there
exist $Q^*_\ell$, $j^*_\ell$ satisfying the properties stated in step S3. It follows that
when we transform $Q^*_\ell$ into $k$ cycles, each cycle has length at most
$f(\ell)=s_\ell$. Given this, the rest of the proof proceeds identically as that of
Theorem~\ref{kmlpround}. 
\end{proof}

\section{Extensions} \label{extn}

We now consider some extensions of multi-depot \kmlp and showcase the versatility of our
algorithms by showing that our guarantees extend mostly with little effort to 
these problems. 

\begin{theorem} \label{extn-thm}
For any $\ve>0$, we can compute a $(\rho+\ve)$-approximation for the following
generalizations of multi-depot \kmlp in time $\poly\bigl(\text{input size},\frac{1}{\ve}\bigr)$: 
(i) weighted sum of node latencies: $\rho=8.4965$; (ii) node-depot service constraints:
$\rho=8.4965$; and (iii) node service times: $\rho=8.9965$. 
The approximation ratios for (i) and (iii) improve to $(7.1824+\ve)$ for the single-depot
version.  
\end{theorem}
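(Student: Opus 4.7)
The plan is to treat each extension by a localized modification of the LP-relaxations and rounding procedures of Sections~\ref{multi}--\ref{kmlp}. In each case I will (a) adjust the definition of a configuration and/or the LP objective, (b) verify that the resulting dual-separation orienteering subproblem still admits a $(1+\ve,1+\ve)$-tree approximation so that Lemma~\ref{lpsolve} continues to hold, and (c) re-run the rounding analysis to track the loss. The main non-trivial accounting arises in (iii), where an additional $0.5$ loss shows up in the multi-depot ratio that has to be charged carefully to the LP cost.

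For (i), I would assign each non-root node a weight $w_v\geq 0$ and replace the objective of \eqref{lp1} by $\sum_{v,t,i}w_v t\,x^i_{v,t}$ (and similarly for \eqref{lp3} in the single-depot setting). The dual-separation orienteering subproblem now uses weighted rewards, but weighted orienteering was already handled by scaling-and-rounding in~\cite{ChakrabartyS11,ChaudhuriGRT03}, so the $(1+\ve,1+\ve)$-tree approximation carries over. Algorithm~\ref{multialg} is oblivious to weights, and the per-node bound $\E[L_v]\leq 8.4965\cdot\lat_v$ from Lemma~\ref{latv} multiplies through: summing $w_v\E[L_v]$ over $v$ yields $\E[\sum_v w_v L_v]\leq(8.4965+\ve)\OPT_\Pc$. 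Applying the same weighted objective to the bidirected-LP rounding of Section~\ref{kmlp} gives $(2\mu^*+\ve)$ for single-depot \kmlp.

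For (ii), each non-root $v$ has a set $R_v\subseteq R$ of permissible depots. I would enforce this in \eqref{lp1} by setting $x^i_{v,t}=0$ whenever $r_i\notin R_v$ and by restricting $\Pcol^i_t$ and $\Tcol^i_t$ to paths/trees visiting only nodes $v$ with $r_i\in R_v$. The orienteering subproblem becomes standard rooted orienteering on the subgraph induced by $\{r_i\}\cup\{v:r_i\in R_v\}$, so the $(1+\ve,1+\ve)$-tree approximation applies directly. Sampled trees in step M2 automatically respect the depot restriction, so the rounding and its per-node analysis go through verbatim, yielding $(8.4965+\ve)$.

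For (iii), each non-root $v$ has a service time $s_v\geq 0$, and the completion time at $v$ on $P_i$ becomes $c_{P_i}(v)+\sum_{u\in V(P_i)\sm\{r_i\}:\,u \text{ visited no later than } v}s_u$. I would redefine a length-$t$ configuration to be a rooted path (or tree) $P$ with $c(P)+\sum_{u\in V(P)\sm\{r_i\}}s_u\leq t$. Orienteering with service times reduces to standard metric orienteering by subdividing each non-root $v$ into two arcs of cost $s_v/2$ on either side of $v$, so the $(1+\ve,1+\ve)$-tree approximation extends. In the rounding, doubling a length-$t$ tree $Q$ yields a cycle whose total traversal-time equals $2c(Q)+s(V(Q)\sm\{r_i\})\leq 2t$ (each non-root node is visited once after shortcutting), so Lemma~\ref{probv} still holds with the modified $\lat_v$. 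The key new issue is that, under a uniformly random traversal direction, the completion time of $v$ in the cycle containing it picks up an extra expected $s_v/2$ from $v$'s own service time that was not paid on any earlier tour; propagating this through the proof of Lemma~\ref{latv} gives $\E[L_v]\leq 8.4965\,\lat_v+s_v/2$. Since any length-$t$ configuration covering $v$ must satisfy $t\geq s_v$, we have $\lat_v\geq s_v\sum_{t,i}x^i_{v,t}=s_v$, so summing over $v$ yields $\sum_v\E[L_v]\leq(8.4965+0.5)\OPT_\Pc$, producing the claimed $(8.9965+\ve)$-approximation. For single-depot \kmlp, applying the same configuration-length redefinition inside the bidirected-LP rounding of Section~\ref{lpround} allows the service times to be absorbed directly into the concatenation-graph edge lengths $s_\ell$ via the bound in Lemma~\ref{lpsbound}, so no extra half-unit loss is incurred and the analysis delivers $(2\mu^*+\ve)=(7.1824+\ve)$.
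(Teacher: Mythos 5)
Your treatment of (i), (ii), and the multi-depot half of (iii) follows the paper's route: same modified LPs, same node-splitting metric for service times (your subdivision into two arcs of cost $s_v/2$ is exactly the paper's metric $c''_{uv}=c_{uv}+\frac{d_u+d_v}{2}$), and the same accounting of the extra $0.5$ via $d_v\le\lat_v$. However, there are two genuine gaps. The main one is in the single-depot version of (iii). You assert that the service times "are absorbed directly into the concatenation-graph edge lengths $s_\ell$ \ldots so no extra half-unit loss is incurred," but this is precisely the step that fails with the naive procedure. The expected latency increase of a node $v$ covered by a randomly-oriented cycle $Z_i$ is $d_v+\frac12\bigl(c(Z_i)+d(V(Z_i))-d_v\bigr)\le\frac12\bigl(c(Z_i)+2d(V(Z_i))\bigr)$, so to charge each covered node only $\frac{f(\ell)}{2}$ you need every one of the $k$ cycles to satisfy $c(Z_i)+2d(V(Z_i))\le f(\ell)$. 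Step R5 breaks the doubled tree into $k$ segments of \emph{edge length} at most $2c(Q^*_\ell)/k$ each, which gives no control over $d(V(Z_i))$: a single segment may absorb all the high-service-time nodes, and then that cycle's contribution exceeds $f(\ell)/2$ per covered node. The paper handles this with a separate charging argument (Lemma~\ref{break}) that snips the Eulerian path according to the combined quantity $c(\cdot)+2d(\cdot)$ and shows this produces at most $k$ segments, each obeying $c(Z_i)+2d(V(Z_i))\le 2\frac{c(Q)+d(V(Q))}{k}+2L$. Without some such lemma your claim of $(2\mu^*+\ve)$ for single-depot \kmlp with service times is unsubstantiated.

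The second, smaller gap concerns polynomial running time in the weighted case (i). With arbitrary weights one can no longer scale so that $\Time$ is polynomially bounded (only $\log\Time$ is), so the time-indexed LPs must be restricted to a geometric grid of time points, and in the single-depot rounding the concatenation graph now has $W=\sum_v w_v$ nodes, which is pseudopolynomial. The paper resolves this by observing that the shortest path uses only corner points of the lower envelope, of which there are $O(\log\Time)$, and by re-deriving the bound $\int_1^W f(x)\,dx\le 4\sum_{u,t}w_u\,t\,x'_{u,t}$ over the discretized time set. Your proposal does not address either point, and as stated the single-depot algorithm for (i) does not run in time $\poly(\text{input size},\frac{1}{\ve})$.
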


In some of the settings below, we will only be able to ensure that our certifiable upper
bound $\Time$ on the maximum latency of a node is such that $\log\Time$ (as opposed to
$\Time$) is polynomially bounded. 
This means that the resulting extension of \eqref{lp1} may have exponentially many
variables {\em and} constraints. 
To circumvent this difficulty, we sketch below an approach 
for efficiently computing a $(1+\e)$-approximate solution to \eqref{lp1} that only relies
on $\log\Time$ being polynomially bounded, with a $(1+\e)$-violation in some
constraints in the same sense as in Lemma~\ref{lpsolve}: namely, for each $i$ and any
time point $t$ under consideration, we use $r_i$-rooted {\em trees} of length $(1+\e)t$
and total fractional weight at most $(1+\e)$ (instead of a collection of $r_i$-rooted
paths of length $t$ of total fractional weight at most $1$) to cover nodes to the extent
they are covered by time $t$. 
We call such a solution a {\em multicriteria $(1+\e)$-approximate solution}. 
This approach easily extends to solve
the various LPs encountered below. 

\vspace{-1ex}
\paragraph{\boldmath Solving \eqref{lp1} when $\log\Time$ is polynomially bounded.} 
Borrowing an idea from~\cite{ChakrabartyS11}, we move to a compact version of \eqref{lp1}
where 
we only have variables $\{x^i_{v,t}\}$, $\{z^i_{P,t}\}$, and constraints \eqref{onep},
\eqref{jcov} for $t$s in a polynomially-bounded set $\TS$. 
We set $\TS:=\{\Time_0,\ldots,\Time_D\}$, where $\Time_j=\ceil{(1+\e)^j}$, and
$D=O(\log\Time)=\poly(\text{input size})$ is the smallest integer such that
$\Time_D\geq\Time$. We use $\lppathts$ to denote this LP. The ``tree-version'' of
$\lppathts$ is obtained similarly from $(\text{LP}_\Tcol)$ and denoted $\lptreets$.

Define $\Time_{-1}=0$
Given a solution $(x,z)$ to \eqref{lp1}, where $t$ ranges from $1$ to $\Time$, we can
define $(x',z')$ as follows: set $z'^i_{P,t}=z^i_{P,t}$ for all $i, P\in\Pcol_t$ and
$t\in\TS$; set $x'^i_{v,T_j}=\sum_{t\in T_{j-1}+1}^{T_j}x^i_{v,t}$ for all $i, v$, and
$T_j\in\TS$. 
It is not hard to see that $(x',z')$ is feasible to $\lppathts$ and that its
cost is at most $(1+\e)$ times the cost of $(x,z)$. Thus, the optimal value of $\lppathts$
is at most $(1+\e)\OPT_\Pc$. 
Conversely, given a solution $(x',z')$ to $\lppathts$, setting $x^i_{v,t}$ equal to
$x'^i_{v,t}$ if $t\in\TS$ and 0 otherwise, and $z^i_{P,t}=z'^i_{P,\Time_j}$ for all
$t\in[\Time_j,\Time_{j+1})$ and all $j$, yields a feasible solution to \eqref{lp1} of the
same cost. 

Since $\lppathts$ is an LP of the same form as \eqref{lp1} but
with polynomially many variables, we can approximately solve it in the sense of
Lemma~\ref{lpsolve}: for any $\e>0$, we can obtain in time 
$\poly\bigl(\text{input size},\frac{1}{\e}\bigr)$ a solution to $\lptreets$ of cost at
most $\OPT_{\lppathts}\leq(1+\e)\OPT_\Pc$ with a $(1+\e)$-violation in some 
constraints. This in turn yields a solution to \eqref{lp1} of no greater cost and with the 
same $(1+\e)$-violation in some constraints.

Observe that the above idea of restricting time points to the polynomially-bounded set
$\{\Time_0,\ldots,\Time_D\}$ also applies to \eqref{lp3} and shows that we can obtain a
feasible solution to \eqref{lp3} of cost at most $(1+\e)\optbilp$ in time
$\poly\bigl(\text{input size},\frac{1}{\e}\bigr)$ while only assuming that
$\log\Time=\poly(\text{input size})$.

\subsection{Weighted sum of node latencies} 
Here, we have nonnegative node weights
$\{w_v\}$ and want to minimize the weighted sum $\sum_vw_v(\text{latency of }v)$ of node
latencies. We again have the upper bound $\Time=2n\lb$ on the maximum latency of a node.
We cannot use scaling and rounding to ensure that $\Time=\poly(\text{input size})$, but
note that $\log\Time=\poly(\text{input size})$.

For multi-depot \kmlp, we consider \eqref{lp1} with the objective modified to take into
account the node weights. We can obtain a multicriteria $(1+\e)$-approximate solution to
the resulting LP as described above. We round this as before; this works since
Lemma~\ref{latv} remains unchanged and bounds the expected latency of each node in terms
of the latency it incurs under the LP solution. The only minor change is that in the
truncated version (Remark~\ref{truncalg}), we set 
$N=D+\kp\ln\bigl(\frac{(\max_v w_v)n\Time}{\ve}\bigr)$  
since covering an uncovered node at the end incurs weighted latency at most 
$(\max_v w_v)\Time$.  

\paragraph{\boldmath A $7.183$-approximation for \kmlp.} 
Both the LP-rounding and the combinatorial algorithms in Section~\ref{kmlp} can be
extended to this setting. We describe the LP-rounding algorithm here; the extension of the 
combinatorial algorithm is descibed in Appendix~\ref{combwtextn}.
We consider \eqref{lp3} with the weighted-latency objective and obtain a
$(1+\e)$-approximate solution $(x,z)$ to this LP. We round this losing a  
$2\mu^*$-factor in a very similar fashion to Algorithm~\ref{kmlprounding} in
Section~\ref{lpround}. 
We may assume via scaling that all weights are integers, and $w_r=1$. 
Let $W=\sum_{u\in V} w_u$. 
A naive extension of the algorithm in Section~\ref{kmlp} would be to create $w_v$ nodes
co-located at $v$ and include a node in the concatenation graph for every possible weight
value from $0$ to $\sum_v w_v$. But this only yields pseudopolynomial running time. 
Instead, we proceed as follows.

Let $\TS:=\{\Time_0,\ldots,\Time_D\}$ be the time points that we consider when solving
\eqref{lp3} approximately, where $\Time_j=\ceil{(1+\e)^j}$ for all $j\geq 0$, and 
$D=O(\log\Time)=\poly(\text{input size})$ is the smallest integer such 
that $\Time_D\geq\Time$. 
In Algorithm~\ref{kmlprounding}, we always only consider time points in $\TS$. 
Step R1 is unchanged. 
In step R2, for each time $t\in\TS$ and each arborescence $Q^t_\ell$ of the weighted
arborescence family obtained for time $t$
we now add the point $\bigl(w(V(Q^t_\ell)\cap S(t)),\frac{2c(Q^t_\ell)}{k}+2t\bigr)$ to
$C$, and as before, add $Q^t_\ell$ to $\Qc$.

Let $f:[1,W]\mapsto\R_+$ be the lower-envelope curve of $C$. 
We claim that the shortest path $P_C$ in the concatenation graph
$\cg\bigl(f(1),\ldots,f(W)\bigr)$ can be computed efficiently. 
This is because by Theorem~\ref{cgthm}, the shortest
path only uses nodes corresponding to corner points of $f$. So the shortest path remains
unchanged if we only consider edges in the concatenation graph incident to such
nodes. This subgraph of the concatenation graph has polynomial size (and can be computed)
since all corner points of $f$ must be in $C$ and $|C|=O(D)$. Moreover, as in part (ii) of
Lemma~\ref{lpsbound}, every corner point $\bigl(\ell,f(\ell)\bigr)$ of $f$ corresponds to
some tree $Q^*_\ell\in\Qc$ and some $t^*\ell\in\TS$ such that 
$\ell=w\bigl(V(Q^*_\ell)\cap S(t^*_\ell)\bigr)$,
$f(\ell)=\frac{2c(Q^*_\ell)}{k}+2t^*_\ell$, and 
$\max_{v\in Q^*_\ell\cap S(t^*_\ell)}c_{rv}\leq t^*_\ell$. Given this, steps R5, R6 are
unchanged. 

The analysis also proceeds as before. Mimicking the proof of Theorem~\ref{kmlpround}, we
can again argue that the solution returned has cost at most the length of $P_C$ in
$\cg\bigl(f(1),\ldots,f(W)\bigr)$. To complete the analysis, utilizing
Corollary~\ref{cgcor}, we need to bound $\int_1^Wf(x)dx$. 
Recall that $x'_{v,t}=\sum_i x_{v,t}$ for all $v\neq r, t\in\TS$. As before, 
define $x'_{r,0}=1$, $x'_{r,t}=0$ for all $t>0$. 
Also, define $x'_{v,t}=0$ for all $v\neq r$ and all $t<\Time_D, t\notin\TS$.  
Generalizing part (i) of Lemma~\ref{lpsbound}, we show that $\int_1^Wf(x)dx$ is
at most $4\sum_{u\in V,t\in\TS}w_u\cdot tx'_{u,t}$. 

Define $\Time_{-1}:=0$.
Dovetailing the proof of Lemma~\ref{cglpbnd}, we have that 
$\bigl(\int_1^Wf(x)dx\bigr)/4=\int_0^{f(W)/4}\bigl(W-f^{-1}(4y)\bigr)dy$.
Note that $f(W)\leq 4\Time_D$.
For any $t=\Time_j\in\TS$, we include all points generated by arborescences in the
weighted arborescence family for $t$ in $C$. 
So we ensure that some point $(a,b)$, where $a\geq\sum_{u\in V}\sum_{t'=0}^tw_ux'_{u,t'}$, 
$b\leq 4t$ lies in the convex hull of $C$. So 
$f^{-1}(4t)\geq\sum_{u\in V}\sum_{t'=0}^tw_ux'_{u,t'}$; this also holds for $t=0$. 
So as in the proof of Lemma~\ref{cglpbnd}, we have
\begin{equation*}
\begin{split}
\int_0^{f(W)/4}\bigl(W-f^{-1}(4y)\bigr)dy
& \leq\sum_{j=0}^D(\Time_j-\Time_{j-1})\bigl(W-f^{-1}(4\Time_{j-1})\bigr)
=\sum_{j=0}^D(\Time_j-\Time_{j-1})\bigl(W-\sum_{u\in V}\sum_{t'=0}^{\Time_{j-1}}w_ux'_{u,t'}\bigr) \\
& =\sum_{j=0}^D(\Time_j-\Time_{j-1})\sum_{u\in V}w_u\sum_{t'\geq\Time_j}x'_{u,t'}
=\sum_{u\in V}\sum_{t'\in\TS}w_u\cdot t'x'_{u,t'}
\end{split}
\end{equation*}

\subsection{Node-depot service constraints}
In this setting, we are given a set $S_v\sse R$ of depots for each node $v$, and 
$v$ must be served by a vehicle originating at a depot in $S_v$. 
The $8.497$-approximation algorithm extends in a straightforward manner.
We now define $\lb:=\max_v\min_{i\in S_v}c_{r_iv}$, and can again ensure that $\lb$, and
hence $\Time=2n\lb$ is polynomially bounded. We modify constraint \eqref{jasgn} of
\eqref{lp1} to $\sum_{t,i\in S_v}x^i_{v,t}\geq 1$, obtain a solution to the resulting LP via
Lemma~\ref{lpsolve}, and round it as before.

\subsection{Node service times} \label{ndsrv}
Here, each non-root node $v$ has a service time $d_v$ that is added to the latency of node
$v$, and every node visited after $v$ on the path of the vehicle serving $v$. 
Set $d_r=0$ for $r\in R$ for notational convenience.
We can set $\Time=\sum_v d_v+2n\lb$ as an upper bound on the maximum latency of a node. 

Let $c''_{uv}=c_{uv}+\frac{d_u+d_v}{2}$ for all $u,v$. Observe that the $c''_e$s form a
metric. 
We obtain a multicriteria $(1+\e)$-approximate solution $(x,z)$ to \eqref{lp1} with 
the $c''$-metric. Note that this LP is a valid relaxation since if $P$ is the
portion of a vehicle's route up to and including node $v$ then $c''(P)$ is at most the
latency of $v$.
We round $(x,z)$ as in Algorithm~\ref{multialg}. 
The additive 0.5 increase in
approximation comes from the fact that when we convert a tree $Q$ of $c''$-cost $t$ 
into a cycle $Z$, the expected contribution to the latency of a node $v\in Z$ 
is now at most $d_v+\frac{1}{2}\bigl(2c''(Q)-d_v\bigr)\leq t+\frac{d_v}{2}$.
Thus, we obtain an $8.997$-approximation.

\paragraph{\boldmath A $7.183$-approximation for \kmlp.} 
Define the {\em mixed length} of a path or tree $Q$ to be $c(Q)+d(V(Q))$. 
Defining the {\em directed} metric $c'_{u,v}=c_{uv}+d_v$ for all $u, v$, note that if we
have a rooted tree and we direct its edges away from $r$, then its $c'$-cost 
is exactly its mixed length (since $d_r=0$).
Again, both the LP-rounding and combinatorial algorithms in Section~\ref{kmlp} extend with
small changes. Essentially, the change is that we work with the $c'$-metric, which works
out in the LP-rounding algorithm since Theorem~\ref{arbpoly} does not depend in any way on
the edge costs, and works out in the combinatorial algorithm since Theorems~\ref{pcstroll}
and Corollary~\ref{pccor} also apply with the $c'$-metric and yield analogous statements
where the $c$-cost is replaced by the mixed-length objective. The only thing to verify is
that the procedure for converting a tree $Q$ of mixed length (i.e., $c'$-cost) at most
$kt$ into $k$ tours ensures that the expected contribution to the latency of a node 
$v\in Q$ is at most $t$. 
We describe the LP-rounding algorithm here and the combinatorial algorithm in
Appendix~\ref{combndwtextn}. 

Let $(x,z)$ be a $(1+\e)$-approximate solution $(x,z)$ to \eqref{lp3} with arc-costs
$\{c'_a\}_{a\in A}$ (instead of the $c$-metric), obtained by considering time points in
$\TS:=\{\Time_0,\ldots,\Time_D\}$. Here $\Time_j=\ceil{(1+\e)^j}$ for all $j\geq 0$, and 
$D=O(\log\Time)=\poly(\text{input size})$ is the smallest integer such 
that $\Time_D\geq\Time$. 
As before, let $x'_{v,t}=\sum_i x^i_{v,t}$ and $z'_{a,t}=\sum_i z^i_{a,t}$. 
Define $x'_{r,0}=1$, $x'_{r,t}=0$ for all $t>0$, and $x'_{v,0}=0$ for all $v\neq r$. 
In Algorithm~\ref{kmlprounding}, we always only consider time points in $\TS$. 
Steps R1, R4 are unchanged. The only change in steps R2, R3 is that the $c$-cost is
replaced by the mixed-length objective (i.e., the $c'$-cost of the out-tree rooted at
$r$). 

The main change is in step R5, where we need to be more careful in obtaining the
collection of $k$ tours $Z_{1,\ell},\ldots,Z_{k,\ell}$ that cover 
$V(Q^*_\ell)\cap S(t^*_\ell)$ from the rooted tree $Q^*_\ell$. 
We show that we can obtain these $k$ tours so that we have
$$
c(Z_{i,\ell})+2d(V(Z_{i,\ell}))\leq
2\cdot\frac{c(Q^*_\ell)+d\bigl(V(Q^*_\ell)\bigr)}{k}+2t^*_\ell
\qquad \frall i=1,\ldots,k.
$$ 
This follows from Lemma~\ref{break}, where we prove that we can obtain $k$ cycles
satisfying the above inequality; traversing each cycle in a random direction, clockwise or
counterclockwise, yields the desired $k$ tours.

\begin{lemma} \label{break}
Let $Q$ be a rooted tree. Let $S\sse V(Q)$ and $L=\max_{u\in S}(c_{ru}+d_u)$. 
We can obtain $k$ cycles $Z_1,\ldots,Z_k$ that together cover $S$ such that  
$c(Z_{i})+2d(V(Z_{i}))\leq 2\frac{c(Q)+d(V(Q))}{k}+2L$ for all $i=1,\ldots,k$.
\end{lemma}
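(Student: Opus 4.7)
The plan is to reduce to a cycle-partitioning argument by absorbing the service times into a modified metric $\tilde c_{uv}:=c_{uv}+d_u+d_v$, which is a metric since $c$ is metric and each $d_w\ge 0$. The crucial identity is that for any cycle $Z$ containing $r$, every vertex has degree two in $Z$, so summing $\tilde c$ over the edges of $Z$ counts each $d_v$ exactly twice, giving $\tilde c(Z)=c(Z)+2d(V(Z))$ (using $d_r=0$). Thus the quantity the lemma asks to bound is simply the $\tilde c$-length of the cycle.

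I first double $Q$ and shortcut the resulting Euler tour past nodes outside $S$ and past repeated visits to obtain a Hamiltonian cycle $H$ on $S\cup\{r\}$ with $c(H)\le 2c(Q)$. Under the modified metric, $\tilde c(H)=c(H)+2d(S)\le 2c(Q)+2d(V(Q))=2M$, where $M:=c(Q)+d(V(Q))$, and $\tilde c(r,u)=c_{ru}+d_u\le L$ for every $u\in S$.

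Next I partition $H$ by cumulative $\tilde c$-length. Walking around $H$ from $r$ and letting $\tilde C_u$ be the cumulative $\tilde c$-cost from $r$ to $u$, for each $j=0,\ldots,k-1$ I let piece $j$ consist of the vertices with $\tilde C_u\in\bigl(j\tilde c(H)/k,\,(j+1)\tilde c(H)/k\bigr]$. Since $\tilde C$ is monotone along $H$, each piece's vertices form a consecutive run $u_{a_j},\ldots,u_{b_j}$, and the corresponding sub-path of $H$ has $\tilde c$-cost $\tilde C_{u_{b_j}}-\tilde C_{u_{a_j}}\le \tilde c(H)/k\le 2M/k$. Every $u\in S$ has $\tilde C_u\in(0,\tilde c(H)]$ and thus lies in exactly one piece, so the pieces together cover $S$.

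Finally, I convert each non-empty piece into a cycle $Z_j$ through $r$ by attaching $r$ to both endpoints $u_{a_j}$ and $u_{b_j}$, and set $Z_j=\{r\}$ for any empty piece. Each attachment edge has $\tilde c$-cost at most $L$, so $\tilde c(Z_j)\le 2M/k+2L$, which by the key identity equals $c(Z_j)+2d(V(Z_j))$, yielding the bound. The only step requiring care is spotting the right modified metric $\tilde c$ that turns the target quantity $c(Z)+2d(V(Z))$ into an additive cycle cost in a metric; once that observation is in hand, the cumulative-length partition of $H$ into $k$ contiguous arcs is entirely routine.
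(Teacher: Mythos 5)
Your proof is correct and takes a genuinely different — and cleaner — route than the paper's. The paper works with an $r$-$w$ Eulerian path and a bespoke greedy snipping process tracking two charges $\charge^1$ and $\charge^2$ that sandwich the segment quantity $c(\cdot)+2d(\cdot)-d(\text{endpoints})$; proving that this produces at most $k$ segments requires a careful accounting of which endpoints contribute to which charge. Your modified-metric observation collapses all of that: once you note that $\tilde c_{uv}=c_{uv}+d_u+d_v$ satisfies the triangle inequality, that $\tilde c(Z)=c(Z)+2d(V(Z))$ for any closed walk through $r$ (each vertex has even degree, so each $d_v$ is counted exactly $\deg_Z(v)$ times, and $d_r=0$), and that $\tilde c_{ru}\le L$ for $u\in S$, the lemma becomes the standard cumulative-length partition of a Hamiltonian cycle of $\tilde c$-length at most $\tilde c(H)\le 2c(Q)+2d(V(Q))$ into $k$ arcs, each closed off through $r$. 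The contiguity of each piece is immediate from monotonicity of $\tilde C$, and the arc-cost bound $\tilde C_{u_{b_j}}-\tilde C_{u_{a_j}}\le \tilde c(H)/k$ is exact, so no charge bookkeeping is needed. The only (trivial) omission is the degenerate case $S\subseteq\{r\}$, which the paper dispatches with the all-$\{r\}$ cycles; your construction also handles it since all pieces come out empty. Your approach also makes transparent why the bound is tight: the $2M/k$ term is just the pigeonhole partition of $\tilde c(H)$, and the $2L$ term is the stitching cost, both under the same metric. This is a nicer proof than the one in the paper.
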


\begin{proof}
Assume that $S$ contains a node other than the root $r$, otherwise, we can
take $Z_1,\ldots,Z_k$ to be the trivial ones consisting of only $r$.
Let $M=\frac{c(Q)+d(V(Q))}{k}$.
Pick an arbitrary node $w\in S$, $w\neq r$.
First, we double and shortcut $Q$ to remove nodes not in $S$ and repeat occurrences of
nodes, to obtain an $r$-$w$ path $P$ satisfying $c(P)\leq 2c(Q)$, 
and therefore, $c(P)+2d(v(P))\leq 2Mk$. Note that $c_{rv}+d_v\leq L$ for every $v\in P$.  
We obtain the cycles by snipping $P$ at appropriate places and joining the resulting
segments to the root. To bound the cost of each resulting cycle and argue that the
snipping process creates at most $k$ segments, we define two charges $\charge^1$ and
$\charge^2$ for each segment that, roughly speaking, sandwich the quantity of interest
$c(.)+2d(.)$ for each segment.
For $u, v\in P$, let $P_{uv}$ denote the portion of $P$ between (and including) nodes $u$ 
and $v$.  

Set $u_1=r$. We repeatedly do the following. 
Let $v$ be the first node after $u_i$ on $P$ 
such that one of the following holds. If there is no such node, then we terminate the
loop, and set $v_i=w$.

\begin{list}{(\roman{enumi})}{\usecounter{enumi} \topsep=0.5ex \itemsep=0ex
    \addtolength{\leftmargin}{-2ex} \addtolength{\labelwidth}{\widthof{(ii)}}} 
\item $c(P_{u_iv})+2d(V(P_{u_iv}))-d_{u_i}-d_{v}>2M$.
Note that in this case $P_{u_iv}$ consists of at least two edges, since otherwise we
have 
$$
c(P_{u_iv})+2d(V(P_{u_iv}))-d_{u_i}-d_{v}=c_{u_iv}+d_{u_i}+d_{v}\leq
c_{ru_i}+d_{u_i}+c_{rv}+d_{v}\leq 2L.
$$
We set set $u_{i+1}=v$, and $v_i$ to be the node immediately before $v$ on $Z$. Define   
$\charge^1_i=c(P_{u_iv_i})+2d(V(P_{u_iv_i}))-d_{u_i}-d_{v_i}$ and
$\charge^2_i=c(P_{u_iv})+2d(V(P_{u_iv}))-d_{u_i}-d_{v}$. We say that $P_{u_iv_i}$ is of
type (i).

\item $c(P_{u_iv})+2d(V(P_{u_iv}))-d_{u_i}-d_{v}\leq 2M<c(P_{u_iv})+2d(V(P_{u_iv}))-d_{u_i}$. 
We set $v_i=v$. Define
$\charge^1_i=c(P_{u_iv})+2d(V(P_{u_iv}))-d_{u_i}-d_{v}$ and 
$\charge^2_i=c(P_{u_iv})+2d(V(P_{u_iv}))-d_{u_i}$.
We say that $P_{u_iv_i}$ is of type (ii).
If $v=w$, then we terminate the loop; otherwise, we set $u_{i+1}$ to be the node
immediately after $v$ on $P$.  
\end{list}
We increment $i$ and repeat the above process.

\medskip
Suppose we create $q$ segments in the above process, i.e., $q$ is the value of the counter 
$i$ at termination. We first argue that $q\leq k$. To see this, note that $\charge^2_i$ is
certainly well defined for all $i=1,\ldots,q-1$, and by definition, $\charge^2_i>2M$ for
all $i=1,\ldots,q-1$. So $\sum_{i=1}^{q-1}\charge^2_i>2M(q-1)$. 
We claim that $\sum_{i=1}^{q-1}\charge^2_i\leq c(P)+2d(V(P))\leq 2Mk$, and therefore
$q-1<k$. To see the claim, notice
that every edge of $P$ contributes to at most one $\charge^2_i$. Also, every node $v$
contributes in total at most $2d_v$ to $\sum_{i=1}^{q-1}\charge^2_i$. This is certainly
true if $v\notin\{u_1,\ldots,u_q\}$; otherwise if $v=u_i$, then it contributes
$d_v$ to $\charge^2_i$, and possibly $d_v$ to $\charge^2_{i-1}$, if $P_{u_{i-1}v_{i-1}}$ is
of type (i).

Each segment $P_{u_iv_i}$ yields a cycle $Z_i$ by joining $u_i$ and $v_i$ to $r$; the
remaining $k-q$ cycles are the trivial ones consisting of only $\{r\}$. 
If $\charge^1_q$ has not been defined (which could happen if no node $v$ satisfies
(i) or (ii) when $i=q$), define it to be $c(P_{u_qv_q})+2d(V(P_{u_qv_q}))-d_{u_q}-d_{v_q}$. 
For $i=1,\ldots,q$, note that by definition, we have 
$c(P_{u_iv_i})+2d(V(P_{u_iv_i}))-d_{u_i}-d_{v_i}=\charge^1_i\leq 2M$.
So $c(Z_i)+2d(V(Z_i))=(c_{ru_i}+d_{u_i})+(c_{rv_i}+d_{v_i})+\charge^1_i\leq 2M+2t$.
\end{proof}

The remainder of the analysis dovetails the one in Section~\ref{lpround}.
Analogous to Lemma~\ref{lpsbound} (and as in the weighted-latency setting), we have that
the lower-envelope curve $f$ satisfies:
(i) $\int_1^nf(x)dx\leq 4\sum_{u\in V,t\in\TS}tx'_{u,t}$, and
and (ii) every corner point $\bigl(\ell,f(\ell)\bigr)$ satisfies the properties stated in
the modified step R3.
Finally, we argue that the cost of the solution returned is at most the length of the
shortest path $P_C$ in $\cg\bigl(f(1),\ldots,f(n)\bigr)$, which yields an approximation
guarantee of $2\mu^*(1+\e)$ (where $\mu^*<3.5912$).

As before, consider an edge $(o,\ell)$ of $P_C$. Assume inductively that we have covered
at least $o$ nodes by the partial solution obtained by concatenating tours corresponding
to the portion of $P_c$ up to and including $o$. 
By Lemma~\ref{break}, and since $\bigl(\ell,f(\ell)\bigr)$ is a corner point of $f$, each
$Z_{i,\ell}$ has mixed length at most $f(\ell)$. Also, concatenating $Z_{i,\ell}$
to vehicle $i$'s route, for all $i=1,\ldots,k$, we end up covering at least $\ell$ nodes. 
The increase in latency due to this step for a node that remains uncovered after this step
is at most $f(\ell)$. Consider a node $v$ that is covered in this step, and say 
$v\in Z_{i,\ell}$. Since $Z_{i,\ell}$ is obtained by traversing the corresponding cycle in
a random direction, the increase in latency of $v$ is at most
$$
d_v+\frac{1}{2}\cdot\Bigl(c(Z_{i,\ell})+d\bigl(V(Z_{i,\ell})\bigr)-d_v\Bigr)
\leq\frac{1}{2}\cdot\Bigl(c(Z_{i,\ell})+d\bigl(V(Z_{i,\ell})\bigr)+d_v\Bigr)
\leq\frac{c(Z_{i,\ell})+2d\bigl(V(Z_{i,\ell})\bigr)}{2}\leq\frac{f(\ell)}{2}.
$$
The last inequality follows from Lemma~\ref{break}.
Therefore, as before, the increase in total latency due to this step is at most
the length of the $(o,\ell)$ edge in $\cg\bigl(f(1),\ldots,f(n)\bigr)$. So by induction,
the total latency is at most the length of $P_c$ in $\cg\bigl(f(1),\ldots,f(n)\bigr)$.

\section{Proof of Theorem~\ref{lp2facts}} \label{altlp}
Part (iii) is simply a restatement of Lemma~\ref{lpsolve}, so we focus on parts (i) and (ii).

\medskip \noindent
{\em Proof of part (i).\ } The proof follows from some simple algebraic manipulations.
We express both the objective value of \eqref{lp2} and $\bnslb$ equivalently as the
sum over all time units $t$ of the number of uncovered nodes at time $t$ (i.e., after time
$t-1$), where for the LP, by ``number'' we mean the total extent to which nodes are not
covered. We then observe that this ``number'' for an LP solution is at least the
corresponding value in the expression for $\bnslb$.

Let $b^*_\ell=\optbns{k,\ell}$ for $\ell=1,\ldots,n$.
Note that $b^*_\ell$ is an integer for all $\ell$ since all $c_e$s are integers, and
$b^*_\ell=0$ for all $\ell\leq|R|$.
Let $(x,z)$ be an optimal solution to \eqref{lp2}. 
For convenience, we set $x_{v,t}=0=z_{\vP,t}$ for all $t>\Time$ and all
$v,\vP\in\vpcol_t$. Also set $x_{r_i,t}=0$ for all $t\geq 1$. 
Let $u$ index nodes in $V$.
Define $n^*_t=\max\{\ell: b^*_\ell\leq t\}$ for all $t\geq 1$, and 
$N_t=\sum_{u,t'\leq t}x_{u,t'}$ for all $t\geq 1$. 
Note that 
$N_t\leq\sum_u\sum_{\vP\in\vpcol_t: u\in\vP}z_{\vP,t}
=\sum_{\vP\in\vpcol_t}z_{\vP,t}|\{u: u\in\vP\}|\leq n^*_t$ for all $t$.

We now express both the objective value of \eqref{lp2} and $\bnslb$ equivalently as the
sum over all time units $t$ of the number of uncovered nodes at time $t$ (i.e., after time
$t-1$). This coupled with the fact that $N_t\leq n^*_t$ for all $t$ completes the
proof. We have 
$$
\bnslb=\sum_{\ell=1}^n b^*_\ell
=\sum_{\ell=1}^n\sum_{t=1}^{b^*_\ell}1=\sum_{t=1}^{b^*_n}\sum_{\ell: b^*_\ell\geq t}1
=\sum_{t=1}^{b^*_n}(n-n^*_{t-1})=\sum_{t\geq 1}(n-n^*_{t-1}).
$$
We also have 
\begin{equation*}
\begin{split}
\noptlp &= \sum_{t\geq 1,u}tx_{u,t}=\sum_t\bigl(\sum_{t'=1}^t 1\bigr)\sum_u x_{u,t} \\
& =\sum_{t'\geq 1}\sum_{t\geq t',u}x_{u,t}\geq\sum_{t'\geq 1}(n-N_{t'-1})\geq
\sum_{t'\geq 1}(n-n^*_{t'-1})\geq\bnslb. 
\end{split}
\end{equation*}

\medskip \noindent
{\em Proof of part (ii).\ } We prove the second statement, which immediately implies the
first. 
Let $\kp=1+\e$.
Let $(x,z)$ be a solution to $\configtree{\kp}$. 
The rounding procedure and its analysis are very similar to the one in Section~\ref{multi}.
Let $h=c^\Gm$, where $\Gm\sim U[0,1)$. At each time
$t_j:=hc^j$, for $j=0,1,\ldots$, we sample a tree configuration $\vQ=(Q_1,\ldots,Q_k)$
from the distribution $\bigl\{z_{\vQ,t_j}/\kp\bigr\}_{\vQ\in\vtcol_{\kp t}}$; we
convert each $Q_i$ into a cycle and traverse this cycle in a random direction to obtain a
tour $Z_{i,j}$. We then concatenate the tours $Z_{i,0},Z_{i,1},\ldots$ for all
$i=1\ldots,k$. 

Define $t_{-1}$, $\Dt_j$, $\lat_v$, $p_{v,j}$, $L_v$ for all $v, j$ as in the analysis in
Section~\ref{multi}. Define $y_{v,t}:=\sum_{t'\leq t}x_{v,t'}$ and $o'_{v,j}:=1-y_{v,t_j}$,
for all $v, t, j$; let $o'_{v,j}=1$ for all $v$ and $j<0$.
Let $\lat'_v:=\sum_{j\geq 0}o'_{v,j-1}\Dt_j$. Let $\E^h[.]$ and $\E[.]$ denote the
same quantities as in the analysis in Section~\ref{multi}.
Parts (ii) and (iii) of Claim~\ref{helper} continue to hold. They key difference is that
we obtain an improved expression for $p_{v,j}$ compared to the one in Lemma~\ref{probv}.
We now have that $p_{v,j}$ is almost $o'_{v,j}$ since instead of sampling $k$ trees
independently as in Algorithm~\ref{multialg} 
(which incurs a loss since $\prod_i(1-a_i)$ is smaller than $\sum_i a_i$), we now sample a 
{\em single tree configuration}; this improved bound also results in the improved
approximation.    
More precisely, mimicking the proof of Lemma~\ref{probv}, we obtain that 
$p_{v,j}\leq\frac{o'_{v,j}}{\kp}+\bigl(1-\frac{1}{\kp}\bigr)p_{v,j-1}$ for all $v$ and $j\geq -1$. 
Plugging this in the proof of Lemma~\ref{latv} gives
$\E[L_v]\leq\frac{c+1}{(\ln c)(1-c(1-1/\kp))}\cdot\lat_v\leq\frac{c+1}{(\ln c)(1-c\e)}\cdot\lat_v$ 
for all $v$. 

The expression $\frac{c+1}{\ln c}$ achieves its minimum value of $\mu^*$ at $c=\mu^*$
(i.e., when $c+1=c\ln c$), so the approximation factor is at most $\frac{\mu^*}{1-\mu^*\e}$.

\appendix

\section{Proofs from Section~\ref{prelim}} \label{append-prelim} \label{append-bns}

\begin{proofof}{Corollary~\ref{cgcor}}
We argue that the length $L$ of the shortest $1\leadsto n$ path in
$\cg\bigl(f(1),\ldots,f(n)\bigr)$ is at most the claimed bound, which implies the claimed
statement. 

For $x\in[1,1+k(n-1)]$, define $f_k(x):=f\bigl(1+\frac{x-1}{k}\bigr)$. 
Note that for any $x\in[1,1+k(n-1)]$, $(x,f_k(x))$ is a corner point of $f_k$ iff
$(x',f(x'))$, where $x'=1+\frac{x-1}{k}$ is a corner point of $f$.
Also, if $(x',f(x'))$ is a corner point of $f$, then $x'$ must be an integer and
$f(x')=C_{x'}$. Hence, if $(x,f_k(x))$ is a corner point of $f_k$, then $1+\frac{x-1}{k}$
must be an integer.

Now consider the shortest $1\leadsto N:=1+k(n-1)$ path $P_k$ in
$\cg\bigl(f_k(1),f_k(2),\ldots,f_k(N)\bigr)$. Let $L_k$ be the length of $P_k$.
Consider an edge $(o,\ell)$ of $P_k$. Let
$o'=1+\frac{o-1}{k}$, $\ell'=1+\frac{\ell-1}{k}$. By Theorem~\ref{cgthm} and the above
discussion, $o'$ and $\ell'$ must be integers. The cost of $(o,\ell)$ is
\begin{equation*}
\begin{split}
f_k(\ell)\Bigl(N-\frac{o+\ell}{2}\Bigr) & 
= f(\ell')\Bigl(k(n-1)-\frac{o-1+\ell-1}{2}\Bigr) \\
& = k\cdot f(\ell')\Bigl(n-1-\frac{(o-1)/k+(\ell-1)/k}{2}\Bigr)
= k\cdot f(\ell')\Bigl(n-\frac{o'+\ell'}{2}\Bigr)
\end{split}
\end{equation*}
which is the $k$ times the cost of the $(o',\ell')$ edge in
$\cg\bigl(f(1),\ldots,f(n)\bigr)$. 
Thus, $L\leq L_k/k$ for all $k\geq 1$.

Moreover, 
$$
\frac{L_k}{\mu^*/2}\leq\sum_{x=1}^Nf_k(x)\leq\sum_{x=2}^N\Bigl(\int_{x-1}^xf_k(t)dt+f_k(x)-f_k(x-1)\Bigr)
=\int_1^Nf_k(t)dt+f_k(N)=k\int_1^nf(x)dx+f(n).
$$
Therefore, $\frac{L}{\mu^*/2}\leq\int_1^nf(x)dx+\frac{f(n)}{k}$ for all $k\geq 1$, which
implies that $L\leq\frac{\mu^*}{2}\int_1^nf(x)dx$.
\end{proofof}

\begin{proofof}{Theorem~\ref{bnsbnd}}
Part (ii) for \mlp follows from the analysis in~\cite{ArcherB10}, so we focus on part
(i). We use a concatenation-graph argument similar to the one used for single-vehicle \mlp
in~\cite{GoemansK98}.  
Let $b^*_\ell=\optbns{k,\ell}$ for $\ell=1,\ldots,n$. 
Consider any sequence $0<\ell_1<\ell_2<\ldots<\ell_h=n$ of indices. 
We can obtain a solution from this as follows. For each index $\ell=\ell_p$ and each
$i=1,\ldots,k$, we double and shortcut the $r_i$-rooted path in the optimal solution to
the $(k,\ell)$-bottleneck stroll problem and traverse the resulting cycle in a random
direction to obtain a tour $Z_{i,\ell}$ (which contains $r_i$).
For every $i=1,\ldots,k$, we then concatenate the tours
$Z_{i,\ell_1},\ldots,Z_{i,\ell_h}$.
Since $\ell_h=n$, this covers all nodes, so we obtain a feasible solution to the
multi-depot \kmlp instance. 

We show that the cost of this solution is at most the length of the
$1\rightarrow\ell_1\rightarrow\ldots\rightarrow\ell_h$ path in the concatenation graph
$\cg(2b^*_1,\ldots,2b^*_n)$, so the statement follows from Theorem~\ref{cgthm}.

To bound the cost, consider an edge $(o,\ell)$ of the path. 
Suppose inductively that we have covered at least $o$ nodes by the partial solution
constructed by concatenating tours corresponding to the nodes on the path up to and
including $o$.
Consider the additional contribution to the total latency when we concatenate tour
$Z_{i,\ell}$ to vehicle $i$'s current route, for $i=1,\ldots,k$. 
The resulting partial solution covers at least $\ell$ nodes (since 
$\bigcup_i V(Z_{i,\ell})\geq\ell$). Suppose 
that in this step we cover $B$ additional nodes, and 
there are $A$ uncovered nodes remaining after this step.
Then, $B\leq\ell-o$ and $A\leq n-\ell$.
The latency of the uncovered increases by at most $\max_i c(Z_{i,\ell})\leq 2b^*_\ell$. 
The latency of each node $u$ that got covered by, say, $Z_{i,\ell}$ increases by at most
$\frac{c(Z_{i,\ell})}{2}\leq b^*_\ell$ since we choose a random direction for traversing
the cycle $Z_{i,\ell}$.
Therefore, the total increase in latency is at most 
$$
2b^*_{\ell}\Bigl(A+\frac{B}{2}\Bigr)\leq 2b^*_{\ell}\Bigl(n-\ell+\frac{\ell-o}{2}\Bigr)
=2b^*_{\ell}\Bigl(n-\frac{o+\ell}{2}\Bigr).
$$
This is precisely the length of the $(o,\ell)$ edge in $\cg(2b^*_1,\ldots,2b^*_n)$, and so
by induction, the total latency of the solution is at most the length of the
$1\rightarrow\ell_1\rightarrow\ldots\rightarrow\ell_h$ path in
$\cg(2b^*_1,\ldots,2b^*_n)$. 
\end{proofof}

\section{Proof of Theorem~\ref{arbpoly}} \label{append-arbpoly}

We restate the theorem for easy reference.

\newtheorem*{thm_arbpoly_restated}{Theorem~\ref{arbpoly}}
\begin{thm_arbpoly_restated}
Let $D=(U+r,A)$ be a digraph with nonnegative integer edge weights $\{w_e\}$, 
where $r\notin U$ is a root node, such that $|\delta^{\into}(u)| \ge |\delta^{\out}(u)|$
for all $u \in U$. 
For any integer $K\geq 0$, one can find out-arborescences $F_1,\ldots,F_q$
rooted at $r$ and integer weights $\gm_1,\ldots,\gm_q$ in polynomial time such that  
$\sum_{i=1}^q\gm_i=K$, $\sum_{i:e\in F_i}\gm_i\leq w_e$ for all $e\in A$, 
and $\sum_{i:u\in F_i}\gm_i \ge\min\{K,\ld_D(r,u)\}$ for all $u\in U$.
\end{thm_arbpoly_restated}

We require some notation and lemmas proved by~\cite{BangjensenFJ95}. 
To avoid confusion we use the superscript ${}^*$ when referring to statements in~\cite{BangjensenFJ95}.
Theorem~\ref{arbpoly} is a polytime version of Corollary$^*$ 2.1 in~\cite{BangjensenFJ95},
and our proof closely follows that of Theorem$^*$ 2.6 in Bang-Jensen et al.~\cite{BangjensenFJ95}.
Let $\Lambda_D(u,v) = \min\{K,\lambda_D(u,v)\}$ be the required connectivity.

\begin{definition} 
Let $e=(t,u)$ and $f=(u,v)$ be edges. \emph{Splitting off} $e$ and $f$ means removing $e$
and $f$ and adding a new edge $(t,v)$, or, in a weighted graph, subtracting some amount
$x>0$ from the weights $w_e$ and $w_f$ and increasing $w_{(t,v)}$ by $x$. We denote the new
digraph by $D^{ef}$. Edges $e$ and $f$ are \emph{splittable} if $\lambda_{D^{ef}}(x,y) \ge
\Lambda_D(x,y)$ for all $x,y \neq u$. 
\end{definition}

Say that $u$ and $v$ are {\em separated} by $X$ is $|X\cap\{u,v\}|=1=|\{u,v\}\sm X|$.
We call a set of nodes $X$ \emph{tight} if 
$\min\{|\delta^\into(X)|,|\delta^\out(X)|\} =\max_{u, v\text{ separated by }X} \Lambda_D(u,v)$,
that is, $X$ is a minimum cut for some $u, v$ maximum flow, 
and we say that $X$ is tight for $u, v$ if $u, v$ are separated by $X$ and
$\min\{|\delta^\into(X)|,|\delta^\out(X)|\} = \Lambda_D(u,v)$. 
If $t,v \in X$ and $u \notin X$, splitting edges $(t,u),(u,v)$ reduces $|\delta^\into(X)|$
and $|\delta^\out(X)|$, so there is a close relationship between splittable edges and
tight sets. 
Note that for an Eulerian digraph $D=(V,A)$ (i.e., $|\delta^\into(u)| = |\delta^\out(u)|$ for all 
$u\in V$), we have $|\dt^\into(X)|=|\dt^\out(X)|$ for all $X\sse V$, and
$\Ld_D(u,v)=\Ld_D(v,u)$ for all $u,v\in V$.

\begin{lemma}[Claim$^*$ 2.1 in~\cite{BangjensenFJ95}]
\label{lem_tightset}
Edges $e= (t,u)$ and $f= (u,v)$ are splittable if and only if there is no set $X$ such
that $t,v \in X$, $u \notin X$, and $X$ is tight for some $x,y \neq u$. 
\end{lemma}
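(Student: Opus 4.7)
The plan is to analyze, for a subset $X \subseteq V$, how splitting off $e=(t,u)$ and $f=(u,v)$ changes $|\dt^\into(X)|$ and $|\dt^\out(X)|$, and to match the resulting configurations with the tight-set condition. Both directions reduce to local case analysis plus a short max-flow/min-cut chain.

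For the forward direction, I would assume $e,f$ are splittable and, for contradiction, suppose there is a set $X$ with $t,v\in X$, $u\notin X$, tight for some pair $x,y\neq u$. A direct inspection shows that in $D$ the edge $e=(t,u)\in\dt^\out(X)$, the edge $f=(u,v)\in\dt^\into(X)$, and the replacement edge $(t,v)$ lies entirely inside $X$; hence both $|\dt^\into(X)|$ and $|\dt^\out(X)|$ drop by exactly one in $D^{ef}$. Since $x,y\neq u$ are still separated by $X$ in $D^{ef}$, this forces $\lambda_{D^{ef}}(x,y)\le\min\{|\dt^\into_{D^{ef}}(X)|,|\dt^\out_{D^{ef}}(X)|\}<\Lambda_D(x,y)$, contradicting splittability.

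For the reverse direction I would proceed by contrapositive. Assume $e,f$ are not splittable: choose $x,y\neq u$ with $\lambda_{D^{ef}}(x,y)<\Lambda_D(x,y)$ and let $X$ be a minimum $(x,y)$-cut in $D^{ef}$ (with $x\notin X$, $y\in X$). A case analysis over the eight placements of $t,u,v$ relative to $X$ shows that the only configurations in which $|\dt^\into(X)|$ strictly decreases under splitting are $\{t,v\in X,\,u\notin X\}$ and its complement $\{t,v\notin X,\,u\in X\}$. Replacing $X$ by $V\setminus X$ in the latter case (and correspondingly reversing the roles of $x,y$), I may assume the former configuration. The inequality chain
\[
|\dt^\into_D(X)|-1 \;=\; |\dt^\into_{D^{ef}}(X)| \;=\; \lambda_{D^{ef}}(x,y) \;<\; \Lambda_D(x,y) \;\le\; \lambda_D(x,y) \;\le\; |\dt^\into_D(X)|
\]
then forces $\Lambda_D(x,y)=|\dt^\into_D(X)|$, so $X$ attains the $(x,y)$-threshold in $D$.

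The main obstacle is to upgrade this equality to genuine tightness, i.e.\ to exhibit a pair $x',y'\neq u$ for which $\min\{|\dt^\into_D(X)|,|\dt^\out_D(X)|\}=\Lambda_D(x',y')$. If $|\dt^\out_D(X)|\ge|\dt^\into_D(X)|$ the pair $(x,y)$ already works. The delicate case is $|\dt^\out_D(X)|<|\dt^\into_D(X)|$, for which I would invoke submodular uncrossing: intersect/union $X$ with a minimum $(y',x')$-cut (choosing $x'\in X\setminus\{u\}$, $y'\notin X\cup\{u\}$) and use the standard fact that crossing tight cuts in a digraph can be replaced by their intersection and union, both of which remain tight. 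The goal is to obtain, after finitely many uncrossings, a tight set that still contains $t,v$ and avoids $u$, thereby producing the forbidden $X$ and completing the contrapositive. This uncrossing step is where the bulk of the technical work lies.
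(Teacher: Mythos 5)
The paper itself does not prove this lemma; it cites it directly from Bang-Jensen, Frank, and Jackson (Claim$^*$~2.1), so there is no in-paper proof to compare against. Your reconstruction is therefore judged on its own merits.

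Your forward direction is correct: with $t,v\in X$, $u\notin X$, the edges $e$ and $f$ leave and enter $X$ respectively while the replacement edge $(t,v)$ is internal to $X$, so both $|\dt^\into(X)|$ and $|\dt^\out(X)|$ drop by exactly one, and since $x,y\neq u$ remain separated by $X$ the cut bound kills splittability. One small imprecision: you wrote $\lambda_{D^{ef}}(x,y)\le\min\{|\dt^\into_{D^{ef}}(X)|,|\dt^\out_{D^{ef}}(X)|\}$, whereas the max-flow/min-cut bound only gives the one-sided cut determined by which side $x$ and $y$ lie on. Since the relevant digraphs here are Eulerian (the paper applies this lemma only after making $D'$ Eulerian), $|\dt^\into(X)|=|\dt^\out(X)|$ and the distinction evaporates, but it would be cleaner to say so.

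The reverse direction is also essentially right, and your case analysis of the eight placements of $t,u,v$ correctly isolates $\{t,v\in X,\,u\notin X\}$ and its complement as the only configurations where $|\dt^\into(X)|$ strictly decreases. The inequality chain pinning $\Lambda_D(x,y)=|\dt^\into_D(X)|$ is correct. However, you have mislocated the remaining work: you flag the case $|\dt^\out_D(X)|<|\dt^\into_D(X)|$ as ``the delicate case'' requiring a submodular uncrossing, but in the Eulerian setting this case simply does not arise, since $|\dt^\into(X)|=|\dt^\out(X)|$ for every $X$. So the uncrossing paragraph is superfluous for the version of the claim the paper needs, and as written it is only a sketch (you do not verify that uncrossing preserves the property ``contains $t,v$, excludes $u$''). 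If you intend to prove the non-Eulerian form, this step is a genuine gap; note also that the non-Eulerian form would already require patching your forward direction, where you implicitly use $|\dt^\into(X)|=|\dt^\out(X)|$. The cleanest fix is to state up front that you work in an Eulerian digraph (as all applications in the paper do), after which your proof, minus the uncrossing paragraph, is complete.
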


Lemma \ref{lem_tightset} can be used to prove that splittable edges always exist.

\begin{lemma}[Theorem$^*$ 2.2 in~\cite{BangjensenFJ95}]
\label{lem_splittable}
Let $D = (V,A)$ be an Eulerian digraph 
and $v \in V$ with $|\delta^\out(v)| \neq 0$. Then for every edge $f=(u,v)$ there
is an edge $e=(t,u)$ such that $e$ and $f$ are splittable. 
\end{lemma}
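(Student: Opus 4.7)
The plan is a proof by contradiction that combines Lemma~\ref{lem_tightset} (the tight-set characterization of non-splittable pairs) with a standard uncrossing argument, then finishes with a short degree-counting contradiction that exploits the Eulerian property at $u$.

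Suppose no edge $e=(t,u)$ entering $u$ is splittable with $f=(u,v)$. By Lemma~\ref{lem_tightset}, for each in-neighbor $t$ of $u$ there exists a ``dangerous'' tight set $X_{t}$ with $\{t,v\}\subseteq X_{t}$, $u\notin X_{t}$, and $X_{t}$ tight for some pair $x_{t},y_{t}\neq u$. I would then select a dangerous set $X^{*}$ of maximum cardinality over the family of all such sets (i.e., those containing $v$, avoiding $u$, and tight for some pair avoiding $u$), and show that $X^{*}$ must contain every in-neighbor of $u$. If it did not, some in-edge $(t,u)$ with $t\notin X^{*}$ would give another dangerous set $X'$ with $t\in X'$; then $X^{*}\cup X'$ strictly contains $X^{*}$, still contains $v$, and still avoids $u$, and by the standard uncrossing argument it is dangerous, contradicting maximality. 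The uncrossing uses submodularity of $d^{\into}$ and $d^{\out}$ together with the Eulerian identity $d^{\into}(Y)=d^{\out}(Y)$: with $X^{*}$ tight for $(x_{1},y_{1})$ and $X'$ tight for $(x_{2},y_{2})$, one splits into cases by where $x_{i}, y_{i}$ land relative to $X^{*}\cap X'$ and $X^{*}\cup X'$, and submodularity together with the inequalities $d(\cdot)\geq \Lambda(x_{i},y_{i})$ forces tightness (for some pair still avoiding $u$) of either the intersection or the union; in the latter case we conclude directly, and in the former we replace $X^{*}$ by $X^{*}\cup X'$ and argue analogously.

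Once $X^{*}$ is known to contain all in-neighbors of $u$, the rest is a clean degree count. Fix $(x,y)$, $x\in X^{*}$, $y\notin X^{*}$, $x,y\neq u$, such that $d^{\out}(X^{*})=\Lambda_{D}(x,y)$. Form $X^{*}+u$. Since $y\neq u$ and $y\notin X^{*}$, the set $X^{*}+u$ still separates $x$ from $y$, so $d^{\out}(X^{*}+u)\geq \Lambda_{D}(x,y) = d^{\out}(X^{*})$. On the other hand, adding $u$ to $X^{*}$ removes from $\delta^{\out}(X^{*})$ exactly the edges entering $u$ from $X^{*}$, which (since all in-neighbors of $u$ lie in $X^{*}$) is all $d^{\into}(u)$ in-edges of $u$; and it adds the edges from $u$ out of $X^{*}+u$, which is $d^{\out}(u)-m$ edges, where $m\geq 1$ counts edges from $u$ into $X^{*}$ and is at least $1$ because $f=(u,v)$ is such an edge. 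Using the Eulerian identity $d^{\out}(u)=d^{\into}(u)$ gives $d^{\out}(X^{*}+u)=d^{\out}(X^{*})-m<d^{\out}(X^{*})$, contradicting $d^{\out}(X^{*}+u)\geq d^{\out}(X^{*})$.

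The main obstacle is the uncrossing step, which is where one must carefully use both submodular inequalities ($d^{\into}$ and $d^{\out}$) and the Eulerian identity to ensure the new tight set is still witnessed by a pair avoiding $u$; the tightness witnesses for $X^{*}$ and $X'$ may well be distinct pairs, so one has to do a short case analysis to avoid the degenerate situation in which all ``good'' witnesses involve $u$. The role of the hypothesis $|\delta^{\out}(v)|\neq 0$ is essentially a well-definedness condition: together with the Eulerian identity it guarantees $|\delta^{\into}(v)|\neq 0$, so that an edge $f=(u,v)$ can exist, and guarantees that $u$ has positive in-degree (so that a candidate $e=(t,u)$ exists at all).
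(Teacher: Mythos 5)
Your final degree count at $u$ is fine, and the overall skeleton (a maximum dangerous set $X^*$ plus the Eulerian identity at $u$) is a reasonable shape; note, though, that the paper itself does not prove this lemma at all---it is imported as Theorem~2.2 of~\cite{BangjensenFJ95}---so you are attempting a new proof of the cited result, and the attempt has a real gap at its crux.

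The gap is the uncrossing step. Write $d(X)$ for the common value $|\delta^{\into}(X)|=|\delta^{\out}(X)|$. For dangerous sets $X^*$ and $X'$ with witness pairs $\{x_1,y_1\}$ and $\{x_2,y_2\}$ (both avoiding $u$), the two inequalities available are submodularity, $d(X^*)+d(X')\ge d(X^*\cap X')+d(X^*\cup X')$, and the Eulerian ``co-submodular'' inequality $d(X^*)+d(X')\ge d(X^*\setminus X')+d(X'\setminus X^*)$. To force tightness of the union you need one witness pair separated by $X^*\cap X'$ and the other by $X^*\cup X'$. But nothing in the hypotheses prevents both witness pairs from straddling the two differences, i.e.\ $x_1\in X^*\setminus X'$, $y_1\in X'\setminus X^*$ and likewise for $\{x_2,y_2\}$. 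In that configuration neither pair is separated by the intersection or by the union; the two inequalities then force only $d(X^*\setminus X')=d(X'\setminus X^*)=d(X^*)=d(X')$, i.e.\ the two \emph{difference} sets are tight, and you obtain no lower bound whatsoever on $d(X^*\cup X')$ (or on $d(X^*\cap X')$). So the union is not shown to be dangerous and the maximality of $X^*$ is not contradicted. Your own summary of the dichotomy is also internally inconsistent: you say that either the intersection or the union is forced tight and that in the intersection case you ``replace $X^*$ by $X^*\cup X'$''---but tightness of the intersection gives no license to do that, since a dangerous union is exactly what is missing. This straddling-witness configuration is the classical hard case in splitting-off theorems (it is why Mader-type results need careful witness selection or arguments over several maximal tight sets rather than a one-line union argument), and it is precisely the content of the cited theorem of Bang-Jensen, Frank and Jackson; as written, your proof does not get past it. A minor further point: the hypothesis $|\delta^{\out}(v)|\neq 0$ only makes the statement non-vacuous; the positive in-degree of $u$ that you need follows from the existence of the edge $f=(u,v)$ together with the Eulerian identity at $u$, not from the hypothesis on $v$.
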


Bang-Jensen et al.'s exponential version of Theorem \ref{arbpoly} repeatedly splits off unweighted pairs of edges and recurses on the new graph. We follow the same procedure but always split the same pair as many times/as much weight as possible at once.
The following simple observation allows us to prove this runs in polynomial time.

\begin{lemma}
\label{lem_tightset_monotonicity}
Let $D = (V,A)$ be an Eulerian digraph, $e=(t,u)$ and $f=(u,v)$ be splittable edges, $f'=(u,v')$ be an edge leaving $u$ (possibly $f=f'$), and $X_{f'}$ be a tight set for some $x,y\neq u$ with $u \notin X_{f'}$, $v' \in X_{f'}$. 
Then $X_{f'}$ is still tight for $x,y$ in $D^{ef}$ after splitting off $e$ and $f$.
\end{lemma}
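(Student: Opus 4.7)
The plan is a short case analysis on the positions of $t$ and $v$ relative to $X_{f'}$, combined with the characterization of splittable pairs from Lemma~\ref{lem_tightset}. Since $u\notin X_{f'}$ and $X_{f'}$ is tight for some pair $x,y\neq u$, the splittability of $e=(t,u)$ and $f=(u,v)$ immediately rules out, via Lemma~\ref{lem_tightset}, the possibility that both $t,v\in X_{f'}$. Thus only three cases for the pair $(t,v)$ remain: both lie outside $X_{f'}$; $t\in X_{f'},\ v\notin X_{f'}$; or $t\notin X_{f'},\ v\in X_{f'}$.

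In each of the three cases I would just track how the splitting of weight $\alpha>0$ on $e$ and $f$ affects the boundary of $X_{f'}$: the only edges whose weights change are $e$, $f$, and the new edge $(t,v)$, and using $u\notin X_{f'}$ a direct tabulation shows that the contributions to $|\delta^{\into}(X_{f'})|$ and $|\delta^{\out}(X_{f'})|$ exactly cancel. For instance, when $t\in X_{f'}$ and $v\notin X_{f'}$, the edge $e$ leaves $X_{f'}$ and contributes $-\alpha$ to $|\delta^{\out}(X_{f'})|$, the new edge $(t,v)$ also leaves $X_{f'}$ and contributes $+\alpha$, and $f$ lies entirely in $V\setminus X_{f'}$ and contributes $0$; the case $t\notin X_{f'},\ v\in X_{f'}$ is symmetric and affects only $|\delta^{\into}(X_{f'})|$; and when neither $t$ nor $v$ lies in $X_{f'}$ all three edges are internal to $V\setminus X_{f'}$ and contribute nothing. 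So $|\delta^{\into}(X_{f'})|$ and $|\delta^{\out}(X_{f'})|$ are preserved by the splitting.

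To conclude, I would invoke the defining property of splittability, $\lambda_{D^{ef}}(x,y)\geq \Lambda_D(x,y)$ for all $x,y\neq u$, together with the trivial reverse inequality $\lambda_{D^{ef}}(x,y)\leq \lambda_D(x,y)$ (any $x$-$y$ flow in $D^{ef}$ lifts to one in $D$ of the same value by rerouting flow on $(t,v)$ through the path $t\to u\to v$). This yields $\Lambda_{D^{ef}}(x,y)=\Lambda_D(x,y)$. Combined with the cut-value equalities above, the relation $\min\{|\delta^{\into}(X_{f'})|,|\delta^{\out}(X_{f'})|\}=\Lambda_D(x,y)$ that witnesses tightness of $X_{f'}$ in $D$ carries over verbatim to $D^{ef}$, so $X_{f'}$ is still tight for $x,y$ there. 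The only conceptual step is the invocation of Lemma~\ref{lem_tightset} to eliminate the $\{t,v\}\subseteq X_{f'}$ case; the remainder is mechanical bookkeeping, so I do not expect any real obstacle.
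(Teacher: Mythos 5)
Your proof is correct and follows essentially the same route as the paper's: rule out $t,v\in X_{f'}$ via Lemma~\ref{lem_tightset}, observe that the remaining splittings leave $|\delta^{\into}(X_{f'})|$ and $|\delta^{\out}(X_{f'})|$ unchanged, and combine $\lambda_{D^{ef}}(x,y)\geq\Lambda_D(x,y)$ (splittability) with $\lambda_{D^{ef}}(x,y)\leq\lambda_D(x,y)$ to get $\Lambda_{D^{ef}}(x,y)=\Lambda_D(x,y)$. The only difference is that you spell out the cancellation in the mixed cases ($t\in X_{f'},v\notin X_{f'}$ and vice versa), which the paper's phrase ``does not affect \ldots unless $t,v\in X_{f'}$'' leaves implicit.
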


\begin{proof}
We have that $\lambda_{D^{ef}}(x,y) \geq \Lambda_D(x,y)$, since $e$ and $f$ are splittable
and $x,y\neq u$. Splitting off cannot increase $\ld(x,y)$, so $\Ld_{D^{ef}}(x,y)=\Ld_D(x,y)$.
Splitting off does not affect $|\delta^\into(X_{f'})|$ or $|\delta^\out(X_{f'})|$
unless $t,v \in X_{f'}$, 
and by Lemma \ref{lem_tightset} this
cannot be the case since $X_{f'}$ is tight and $e,f$ are splittable. Therefore
$\Lambda_{D^{ef}}(x,y) = \min\{|\delta^\into(X_{f'})|,|\delta^\out(X_{f'})|\}$. 
\end{proof}

As a consequence, $O(n^2)$ splittings suffice to remove a node.

\begin{lemma}
\label{lem_poly_splittings}
Let $u$ be a node in an Eulerian digraph $D = (V,A)$, and suppose we repeatedly choose
$t,v$ such that $(t,u),(u,v)$ are splittable and split them off to the maximum extent
possible (i.e., maximum splittable weight). Then after $O(n^2)$ such splittings
$|\delta^\into(u)|$ and $|\delta^\out(u)|$ will be reduced to $0$. 
\end{lemma}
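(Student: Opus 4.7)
My plan is a case analysis on the reason the chosen splitting weight is maximal. Given a splittable pair $e=(t,u), f=(u,v)$ with maximum splittable weight $x>0$, exactly one of the following holds:
(A) $x=\min(w_e,w_f)$, in which case at least one of the edges $e,f$ has its weight reduced to zero; or
(B) $x<\min(w_e,w_f)$, in which case the pair $(t,u),(u,v)$ fails to remain splittable in $D^{ef}$ (otherwise a larger $x$ would have been admissible), so by Lemma~\ref{lem_tightset} applied in $D^{ef}$ there exists a set $X$ with $t,v\in X$, $u\notin X$ that is tight for some pair $x',y'\neq u$.

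For case (A), the number of positive-weight edges incident to $u$ decreases strictly, and no splitting ever creates a new edge at $u$ (a splitting of $(t,u),(u,v)$ introduces the arc $(t,v)$, which avoids $u$). Since we start with at most $2(n-1)$ such edges, case (A) occurs at most $O(n)$ times. For case (B), I claim the pair $(t,u),(u,v)$ is permanently retired: it can never again be split with positive weight. Indeed, Lemma~\ref{lem_tightset_monotonicity} (applied inductively with $f'=(u,v)$, $v'=v$) guarantees that the cut $X$ remains tight for the same $x',y'$ after every subsequent splitting. Any subsequent positive-weight splitting of $(t,u),(u,v)$ would then decrease both $|\dt^\into(X)|$ and $|\dt^\out(X)|$ by the split amount, pushing their minimum below $\Ld_D(x',y')$ and violating splittability. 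Since there are only $(n-1)^2=O(n^2)$ ordered pairs $(t,v)$ with $t,v\neq u$, case (B) occurs at most $O(n^2)$ times. Summing, the total number of splittings is $O(n^2)$.

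After $O(n^2)$ splittings, no splittable pair at $u$ remains. The digraph stays Eulerian throughout (a splitting decreases both the in- and out-degree of $u$ by $x$ and leaves all other degrees unchanged), so by Lemma~\ref{lem_splittable}, if any out-edge at $u$ survived, some splittable pair would still exist at $u$, a contradiction. Hence $|\dt^\out(u)|=0$, and by the Eulerian property also $|\dt^\into(u)|=0$.

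The hard part is justifying the claim in case (B) that the blocking tight set $X$ persists forever, which is precisely the content of Lemma~\ref{lem_tightset_monotonicity}. A subtlety worth checking when applying that lemma inductively is that its hypothesis (that $v$ be an out-neighbor of $u$ with $v\in X$) remains valid even if the edge $(u,v)$ itself is removed by some later splitting; fortunately the underlying reason the lemma holds, namely that a splitting can affect a tight set's cut sizes only when both endpoints of the split lie inside it (which is precluded by Lemma~\ref{lem_tightset}), depends only on $X$ being tight with $u\notin X$, and so the argument propagates uniformly through the entire sequence of splittings.
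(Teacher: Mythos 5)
Your proof is correct and takes essentially the same approach as the paper's: argue that each ordered pair $(t,v)$ can be maximally split at most once (because afterwards it is retired, either because an edge weight dropped to zero or because a persistent tight set blocks it), then count pairs. Your version is slightly more careful than the paper's in one respect: you explicitly separate the case where the maximum is forced by $\min(w_e,w_f)$ from the case where it is forced by a newly-tight cut, whereas the paper's phrase ``splitting $e,f$ as much as possible creates a tight set $X_f$'' silently folds the first case in. Your closing observation is also apt --- the inductive reuse of Lemma~\ref{lem_tightset_monotonicity} formally requires $f'=(u,v)$ to still be a positive-weight edge, but as you note the proof of that lemma only uses $u\notin X$ and $X$ tight (the condition $v'\in X$ is never invoked in the argument), so the persistence of $X$ is unaffected by later deletion of $(u,v)$; this is a gap the paper also relies on but does not flag.
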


\begin{proof}
If $|\delta^\out(u)| > 0$, then by Lemma \ref{lem_splittable} there is a splittable pair $e=(t,u),f=(u,v)$. Splitting $e,f$ as much as possible creates a tight set $X_{f}$, and this set remains tight after additional splittings centered at node $u$ by Lemmas \ref{lem_tightset} and \ref{lem_tightset_monotonicity}, so $e,f$ will not become splittable again.
After $O(n)$ splittings $e',f$ for all possible $e'$, $w_f$ must be 0, and since there are
$O(n)$ choices for $f$ and $|\delta^\into(u)|=|\delta^\out(u)|$, $O(n^2)$ splittings
suffice to remove all edges incident to $u$. 
\end{proof}

\begin{proofof}{Theorem \ref{arbpoly}}
Note that $r \notin U$. If $|U| \le 1$ the theorem is trivial, so assume $|U| \ge 2$. For
every $u \in U$, add an edge $(u,r)$ of weight $|\delta^\into(u)|-|\delta^\out(u)|$, and let
$D'$ the resulting Eulerian graph (with $\lambda_{D'}(u,v) \ge \lambda_D(u,v)$).  

Let $u = \argmin_{v \in U} \Lambda(r,v)$, and $f=(u,v)$ be an edge leaving $u$ with $w_f > 0$.
By Lemma \ref{lem_splittable} there exists an edge $e=(t,u)$ such that $e$ and $f$ are splittable.
Let $x$ be the maximum amount $e$ and $f$ are splittable, which can be found by binary search. Split off $e$ and $f$ to an extent $x$ (subtract $x$ from $w_e,w_f$, add $x$ to $w_{(t,v)}$) and recurse on $D'^{ef}$.

Note that only $\Lambda_{D'^{ef}}(r,u)$ can decrease in the split, so in the recursive
call we will choose the same $u$ if $\Lambda_{D'^{ef}}(r,u) > 0$. By Lemma
\ref{lem_poly_splittings} $O(n^2)$ iterations suffice to remove all edges incident to
$u$. Future splittings after this centered on other nodes may create new edges but will
never add an edge incident to a node of degree $0$, so $O(n^3)$ splittings suffice. 

We now undo the splitting to construct the arborescence packing on the original $D$.
By induction we can find arborescences $F_1,\ldots, F_q$ in $D^{ef}$ and weights
$\gamma_1,\ldots,\gamma_q$ such that $\sum_{i=1}^q\gm_i=K$, $\sum_{i:h\in F_i}\gm_i\leq
w_{h,D^{ef}}$ for all $h\in A(D^{ef})$,  $\sum_{i:u'\in F_i}\gm_i \ge \Lambda_{D}(r,u')$
for all $u' \neq u$, and $\sum_{i:u\in F_i}\gm_i\ge \Lambda_{D^{ef}}(r,u) \ge
\Lambda_{D}(r,u) - x$.  
First, we need to ensure that the $F_i$ do not use the added arc $g=(t,v)$ above its weight in $D$, and second we need to update the arborescences to cover $u$ to an additional $x$ extent.

If $w_{g,D^{ef}} \ge \sum_{i:g\in F_i}\gm_i > w_{g,D} = w_{g,D^{ef}} - x$ we need to decrease the use of $g$ by $\sum_{i:g\in F_i}\gm_i - w_{g,D}$, which is at most $x$. We can replace $g$ with the pair $e,f$ since $w_{h,D} - x = w_{h,D^{ef}} \ge \sum_{i:h\in F_i}\gm_i$ for $h =e$ or $f$.
Repeatedly choose $F_i$ containing $g$ until we have a set $S$ with total weight at least $x$. Break the last $F_i$ added to $S$ into two identical arborescences with weights summing to $\gamma_i$, so that the set $S$ has weight exactly $x$. This increases $q$ by $1$ (or $O(n^3)$, summing over all graphs in the induction). 

For each $F_i \in S$, if $u \notin V(F_i)$, define $F_i' = F_i - g+e + f$. If $u \in V(F_i)$, let $h$ be the last edge on the path $P$ from $r$ to $u$ in $F_i$. If $g \notin P$, define $F_i' = F_i -g + f$, and if $a \in P$, define $F_i' = F_i - g -h +e + f$ (note $F_i'$ is connected). Replace each $F_i$ with $F_i'$ in the arborescence packing. Over all $F_i \in S$, we remove $g$ from trees with weight $x$ and add $e$ and $f$ to trees with weight at most $x$.
The updated $F'_1,\ldots, F'_{q+1}$ now satisfy $\sum_{i:h\in F_i}\gm_i\leq w_{h,D}$.

If $\Lambda_{D}(r,u) > \sum_{i:u\in F'_i}\gm_i$ we need to increase the weight of some $F'_i$ containing $u$. 
By assumption $\sum_{i:u\in F_i}\gm_i \ge \Lambda_{D^{ef}}(r,u) \ge \Lambda_{D}(r,u) - x$.
We chose $u$ such that $\Lambda_D(r,u) \le \lambda_D(r,v)$ for all $v$, so $\Lambda_{D^{ef}}(r,u) \le \lambda_{D^{ef}}(r,v) - x$. Therefore for every $v \neq u$ there are $F_i$ containing $v$ but not $u$ with total weight at least $x$ (or $x - (\sum_{i:u\in F'_i}\gm_i - \Lambda_{D^{ef}}(r,u))$ if $u$ is in more $F_i$ than required).

Let $S_2$ be a set of $F'_i$ with weight at least $x$ containing $t$ but not $u$. Add the edge $e$ to $F'_i \in S_2$. In the process we may exhaust the budget $w_e - \sum_{i:u\in F'_i}\gm_i$ for $e$ due to adding $e$ to some $F_i \in S$ in the previous step.
This can happen for two reasons. In the first case, there was some $F_i \in S$ with $u \notin V(F_i)$, and we defined $F_i' = F_i - g+e + f$. But that change also increased the weight of $F'_i$ containing $u$ beyond $\Lambda_{D^{ef}}(r,u)$ and is not a problem. 

The second case is that $u \in F_i$, and we set $F_i' = F_i - g -h +e + f$, which uses budget for $e$ even though $u$ is already in $F'_i$. However, at the same time we lost $y$ budget for $e$, we freed up $y$ budget for some $h=(s,u)$, and we can find a set of $F_i$ with weight $y$ containing $s$ but not $u$ and add the edge $h$.
This step may require breaking some $F_i$ into two trees with total weight $\gamma_i$ that are identical except that one contains $u$ and the other does not, which may increase $q$ by $O(n)$, but it remains polynomially bounded ($O(n^4)$ added over the entire induction).
\end{proofof}

\section{Proof of Corollary~\ref{pccor}} \label{append-pccor}

\begin{proofof}{part (i)}
We utilize part (ii) of Theorem~\ref{pcstroll}. We may assume that when $\ld=L=0$, the
tree $T_\ld$ returned is the trivial tree consisting only of $\{r\}$, and when $\ld$ is
very large, say $H=nc_{\max}$, then $T_\ld$ spans all nodes. So if $B=0$ or $n$,
then we are done, so assume otherwise. 
Let $\C^*_B$ be an optimal collection of rooted paths, so $\iopt=\sum_{P\in\C^*_B}c(P)$.
Let $n^*=|\bigcup_{P\in\C^*_B}V(P)|\geq B$.
We preform binary search in $[L,H]$ to find a value $\ld$ such that $|V(T_\ld)|=B$. 
We maintain the invariant that we have trees $T_1, T_2$ for the endpoints $\ld_1<\ld_2$
of our search interval respectively such that $|V(T_1)|<B<|V(T_2)|$ and 
$c(T_i)+\ld_i|V\sm V(T_i)|\leq\iopt+\ld_i(n-n^*)$ for $i=1,2$.
Let $\ld=(\ld_1+\ld_2)/2$, and let $T=T_\ld$ be the tree returned by Theorem~\ref{pcstroll}
(ii). If $|V(T)|=B$, then we are done and we return the rooted tree $T$. 
Otherwise, we update $\ld_2\assign\ld$ if $|V(T)|>B$, and update $\ld_1\assign\ld$
otherwise. 

We terminate the binary search when $\ld_2-\ld_1$ is suitably small.
To specify this precisely, consider the parametric LP \eqref{bns_lp} where $\pi_v=\ld$ for
all $v\in V$ and $\ld$ is a parameter. 
We say that $\ld$ is a {\em breakpoint} if there are two optimal solutions $(x^1,z^1)$,
$(x^2,z^2)$ to \eqref{bns_lp} with $\sum_vz^1_v\neq\sum_vz^2_v$. (This is equivalent to
saying that the slope of the optimal-value function is discontinuous at $\ld$.)
We may assume that $(x^1,z^1)$, $(x^2,z^2)$ are vertex solutions and so their non-zero
values are multiplies of $\frac{1}{M}$ for some $M$ (that can be estimated) with 
$\log M=\poly(\text{input size})$. But then $\sum_e c_ex_e$ and $\sum_vz_v$ are also
multiples of $\frac{1}{M}$ for both solutions (since the $c_e$s are integers), and hence
the breakpoint $\ld$ is a multiple of $\frac{1}{M'}$, for some $M'\leq nM$. We terminate the
binary search when $\ld_2-\ld_1<\frac{1}{2n^2M}$. Observe that the binary search takes
polynomial time. 

So if we do not find $\ld$ such that $|V(T_\ld)|=B$, at termination, we have that 
$c(T_i)+\ld_i(n-|V(T_i)|)\leq\iopt+\ld_i(n-n^*)$ for $i=1,2$. There must be exactly one
breakpoint $\ld\in[\ld_1,\ld_2]$. There must be at least one breakpoint since 
$T_1\neq T_2$, which can only happen if the optimal solutions to \eqref{bns_lp} differ for
$\ld_1$ and $\ld_2$, and there cannot be more than one breakpoint since any two
breakpoints must be separated by at least $\frac{1}{nM}$ as reasoned above. 

We claim that we have $c(T_i)+\ld(n-|V(T_1)|)\leq\iopt+\ld(n-n^*)$ for $i=1,2$. If we
show this, then taking $a, b$ so that $a|V(T_1)|+b|V(T_2)|=B$, $a+b=1$, and
taking the $(a,b)$-weighted combination of the two inequalities, we obtain that
$ac(T_1)+bc(T_2)+\ld(n-B)\leq\iopt+\ld(n-n^*)$, and so we are done. (Note that we do
{\em not} actually need to find the breakpoint $\ld$.)

To prove the claim observe that 
\begin{equation*}
\begin{split}
 c(T_1)+\ld(n-|V(T_1)|) & \leq c(T_1)+\ld_1(n-|V(T_1)|)+\frac{1}{2nM} \\
& \leq\iopt+\ld_1(n-n^*)+\frac{1}{2nM}
\leq\iopt+\ld(n-n^*)+\frac{1}{2nM}. 
\end{split}
\end{equation*}
So $\bigl[c(T_1)+\ld(n-|V(T_1)|)\bigr]-\bigl[\iopt+\ld(n-n^*)\bigr]\leq\frac{1}{2nM}$, but
the LHS is a multiple of $\frac{1}{M'}$, so the LHS must be nonpositive. A similar argument
shows that $c(T_2)+\ld(n-|V(T_2)|)\leq\iopt+\ld(n-n^*)$.
\end{proofof}

\begin{proofof}{part (ii)}
We mimic the proof of part (i), and only discuss the changes. Assume that
$0<C<$(cost of MST of $\{v: w_v>0\}$) to avoid trivialities.
Let $\W^*_C$ be an optimal collection of rooted paths, so
$n^*=w(\bigcup_{P\in\W^*_C}V(P))$. Let $\iopt=\sum_{P\in\W^*_C}c(P)\leq C$.
Let $K$ be such that all $w_v$s are multiples of $\frac{1}{K}$; note that  
$\log K=\poly(\text{input size})$. 
Let $W=\sum_v w_v$.
For a given parameter $\ld$, we now consider \eqref{bns_lp} with penalties $\ld w_v$ for 
all $v$. We perform binary search in $[L=0,H=KWc_{\max}]$; we may again assume
that $T_L$ is the trivial tree, and $T_H$ spans all nodes with positive weight.
Given the interval $[\ld_1,\ld_2]$, we maintain that the trees $T_1,T_2$ for $\ld_1,\ld_2$
satisfy $c(T_1)<C<c(T_2)$ and $c(T_i)+\ld_iw(V\sm V(T_i))\leq\iopt+\ld_i(W-n^*)$ for
$i=1,2$. As before, we find tree $T_\ld$ for $\ld=(\ld_1+\ld_2)/2$.
If $c(T_\ld)=C$ then $w(V(T_\ld))\geq n^*$ and we are done and return $T_\ld$.
Otherwise, we update $\ld_2\assign\ld$ if $c(T)>C$, and $\ld_1\assign\ld$.
We terminate when $\ld_2-\ld_1\leq\frac{1}{2W^2K^2M}$.

Similar to before, one can argue that every breakpoint of the parametric LP with penalties
$\{\ld w_v\}$ must be a multiple of $\frac{1}{M'}$ for some $M'\leq MKW$. So at
termination (without returning a tree), there is a breakpoint $\ld\in[\ld_1,\ld_2]$.
We have that for $i=1,2$,
$$
\Bigl[c(T_i)+\ld\bigl(W-w(V(T_i))\bigr)\Bigr]-
\Bigl[\iopt+\ld(W-n^*)\Bigr]\leq\frac{1}{2WK^2M}
$$
but is also a multiple of $\frac{1}{KM'}$, so the above quantity must be nonpositive for
$i=1,2$.
Let $a, b$ be such that $a+b=1$ and $ac(T_1)+bc(T_2)=C$. Then, we have
$$
a\Bigl[c(T_1)+\ld\Bigl(W-w\bigl(V(T_1)\bigr)\Bigr)\Bigr]+
b\Bigl[c(T_2)+\ld\Bigl(W-w\bigl(V(T_2)\bigr)\Bigr)\Bigr]
\leq\iopt+\ld(W-n^*).
$$
So $aT_1+bT_2$ yields the desired bipoint tree.
\end{proofof}

\section{Weighted sum of latencies: a combinatorial $7.183$-approximation for
  \kmlp} \label{combwtextn} 
We may assume via scaling that all weights are integers, and $w_r=1$. 
Let $W=\sum_{u\in V} w_u$. 
As before, let $\TS:=\{\Time_{-1}=0,\Time_0,\ldots,\Time_D\}$, where
$\Time_j=\ceil{(1+\e)^j}$ for all $j\geq 0$, and $D=O(\log\Time)=\poly(\text{input size})$
is the smallest integer such that $\Time_D\geq\Time$. 
We think of applying the concatenation-graph argument by considering time points 
in the polynomially-bounded set $\TS$. We obtain a suitable $k$-tuple of
tours for each of these time points, and concatenate the tours corresponding to the
shortest path in the corresponding concatenation graph.

As in Algorithm~\ref{kmlpalg}, we initialize $C\assign\{(1,0)\}$ and $\Qc\assign\es$.
For each time $t=0,1,\ldots$, let $\wt_t$ denote the maximum node weight of a collection
of $k$ rooted paths, each of length at most $t$. Note that $\wt_0=1$.
By part (ii) of Corollary~\ref{pccor}, given $t$, we can efficiently compute a rooted tree
$Q_t$ or rooted bipoint tree $(a_t,Q_t^1,b_t,Q_t^2)$ of cost $kt$ and node weight at least
$\wt_t$. 
We compute these objects 
for all $t\in\TS$. Say that $t$ is single or bipoint 
depending on whether we compute a tree or a bipoint tree 
for $t$ respectively.
Observe that given a rooted tree $Q$, one can obtain a collection of $k$ cycles, each of 
length at most $\frac{2c(Q)}{k}+2\max_{v\in V(Q)}c_{rv}$, that together cover $V(Q)$.

For each $t\in\TS$, if $t$ is single, we include the point
$\bigl(w(V(Q_t)),\frac{2c(Q_t)}{k}+2t\bigr)$ if $t$ is single and add $Q_t$ to $\Qc$; if
$t$ is bipoint, we add the points 
$\bigl(w(V(Q_t^1)),\frac{2c(Q_t^1)}{k}+2t\bigr)$,
$\bigl(w(V(Q_t^2)),\frac{2c(Q_t^2)}{k}+2t\bigr)$ to $C$ and add the trees $Q_t^1$, $Q_t^2$ 
to $\Qc$. 
Let $f:[1,W]\mapsto\R_+$ be the lower-envelope curve of $C$. 
Note that $f(1)=0$ and $f(W)\leq 4\Time_D$.
By Theorem~\ref{cgthm}, the shortest path $P_C$ in the concatenation graph
$\cg\bigl(f(0),f(1),\ldots,f(W)\bigr)$ only uses nodes corresponding to corner points of
$f$, all of which must be in $C$. So $P_C$ can be computed by finding the
shortest path in the polynomial-size (and polytime-computable) subgraph of the
concatenation graph consisting of edges incident to nodes corresponding to corner points
of $f$.

We now argue that we can obtain a solution of cost at most the length of $P_C$ in the 
concatenation graph. Suppose that $(o,\ell)$ is an edge of the shortest path. 
Since $\bigl(\ell,f(\ell)\bigr)$ is a corner point, we have a tree $Q_\ell\in\Qc$ and time 
$t_\ell\in\TS$ such that: 
$w(V(Q_\ell))=\ell$, $f(\ell)=\frac{2c(Q_\ell)}{k}+2t_\ell$, and 
$\max_{v\in V(Q_\ell)}c_{rv}\leq t_\ell$.
Thus, as noted earlier, we can obtain from $Q_\ell$ a collection of $k$ cycles, each of
length at most $f(\ell)$ that together cover nodes of total weight $\ell$. 
One can then mimic the inductive argument in Theorem~\ref{kmlpround} to prove the claimed
result. 

By Theorem~\ref{cgthm}, the length of $P_C$ is at most
$\frac{\mu^*}{2}\sum_{\ell=1}^Wf(\ell)$. Define $b^*_\ell=\min\{t:\wt_t\geq\ell\}$,
similar to the definition of $\optbns{k,\ell}$.
So $\sum_{\ell=1}^Wb^*_\ell$ is a lower bound on the optimal value.
For an integer $x\geq 0$, recall that $[x]$ denotes $\{1,\ldots,x\}$; 
let $\bb{x}$ denote $\{0\}\cup[x]$.
Define $\Time_{-1}:=0$. 
Dovetailing the proof of Lemma~\ref{cglpbnd}, we have 
\begin{equation}
\frac{\sum_{\ell=1}^W f(\ell)}{4} = \int_{t=0}^f(W)dt\bigl|\{\ell\in[W]: f(\ell)\geq 4t\}\bigr| 
\leq \sum_{j=0}^D(\Time_j-\Time_{j-1})\bigl|\{\ell\in[W]: f(\ell)>4\Time_{j-1}\}\bigr|.
\label{ineq2}
\end{equation}
Consider some $t\in\TS$.
Recalling how the point-set $C$ is produced, we observe that:
\begin{list}{(\alph{enumi})}{\usecounter{enumi} \topsep=0ex \itemsep=-0.2ex}
\item if $t$ is single, then $(w,4t)\in C$ for some $w\geq\wt_t$;
\item if $t$ is bipoint, then $(w,4t)$ lies in the convex hull of $C$ for
  some $w\geq\wt_t$.  
\end{list}
In both cases, this implies that $f(\wt_t)\leq 4t$.
So $\bigl|\{\ell\in[W]: f(\ell)>4\Time_{j-1}\}\bigr|\leq W-\wt_{\Time_{j-1}}$ for 
$j\geq 1$; this also holds when $j=0$ since $f(1)=0,\ \wt_0=1$.
Substituting this in \eqref{ineq2}, gives that $\frac{\sum_{\ell=1}^Wf(\ell)}{4}$ is at most
\begin{equation*}
\sum_{j=0}^D(\Time_j-\Time_{j-1})\bigl(W-\wt_{\Time_{j-1}}\bigr)
=\sum_{j=0}^D(\Time_j-\Time_{j-1})\sum_{\ell=\wt_{\Time_{j-1}}+1}^W 1
=\sum_{\ell=2}^W\sum_{j\in\bb{D}: \wt_{\Time_{j-1}}+1\leq\ell}(\Time_j-\Time_{j-1})
=\sum_{\ell=2}^W\Time_{B_\ell}
\end{equation*}
where $B_\ell=\min\{j\in\bb{D}:\wt_{\Time_j}\geq\ell\}$.
Clearly $\Time_{B_\ell}\leq(1+\e)b^*_\ell$, so  
$\frac{\sum_{\ell=1}^Wf(\ell)}{4}\leq(1+\e)\sum_{\ell=1}^Wb^*_\ell$.

\section{Node service times: a combinatorial $7.183$-approximation for
  \kmlp} \label{combndwtextn} 
Recall that we define the mixed length of a path or tree $Q$ to be $c(Q)+d(V(Q))$, and   
$c'$ is the directed metric $c'_{u,v}=c_{uv}+d_v$ for all $u, v$. So the $c'$-cost of an
out-tree rooted at $r$ is exactly its mixed length (since $d_r=0$). 
The changes to Algorithm~\ref{kmlpalg} are as follows.

\begin{list}{$\bullet$}{\itemsep=0ex \addtolength{\leftmargin}{-3ex}}
\item First, we sort nodes in increasing order of $c_{rv}+d_v$; let $u_1=r,u_2,\ldots,u_n$
be the nodes in this sorted order, and let $H_j=(U_j,E_j)$ be the subgraph induced by
$U_j:=\{u_1,\ldots,u_j\}$. 

\item In step S2, we use part (i) of Corollary~\ref{pccor} with the graph $H_j$ and the
mixed-length objective. 
If we get a rooted out-tree $Q_{j\ell}$, we add the point 
$\bigl(|V(Q_{j\ell})|,\frac{2c'(Q_{j\ell})}{k}+2c'_{r,u_j}\bigr)$ to $C$.
If we get a rooted bipoint out-tree, we 
add the points 
$\bigl(|V(Q^1_{j\ell})|,2\frac{c'(Q^1_{j\ell})}{k}+2c'_{r,u_j}\bigr)$ and
$\bigl(|V(Q^2_{j\ell})|,2\frac{c'(Q^2_{j\ell})}{k}+2c'_{r,u_j}\bigr)$ to $C$.
As before, we add the rooted tree or the constituents of the bipoint tree to $\Qc$. 

\item In step S5, we obtain a collection of $k$ tours $Z_{1,\ell},\ldots,Z_{k,\ell}$
from the rooted tree $Q^*_\ell$ by applying Lemma~\ref{break} on $Q^*_\ell$ with
$S=V(Q^*_\ell)$ to obtain $k$ cycles, and traverse each cycle in a random direction.
The resulting tours satisfy 
\begin{equation}
c(Z_{i,\ell})+2d(V(Z_{i,\ell}))\leq
2\cdot\frac{c(Q^*_\ell)+d\bigl(V(Q^*_\ell)\bigr)}{k}+
2\bigl(c_{ru_{j^*_\ell}}+d_{u_{j^*_\ell}}\bigr) \qquad \frall i=1,\ldots,k.
\end{equation}
\end{list}
The remaining steps of Algorithm~\ref{kmlpalg} (i.e., S1, S3, S4, S6) are unchanged.

The analysis follows the one in Section~\ref{comb}.
The bottleneck-$(k,\ell)$-stroll problem is now defined with respect to the mixed-length
objective; so $\optbns{k,\ell}$ is the smallest $L$ such that there are $k$ rooted paths
$P_1,\ldots,P_k$, each of mixed length at most $L$ that together cover at least $\ell$
nodes. As before, $\bnslb:=\sum_{\ell=1}^n\optbns{k,\ell}$ is a lower bound on the 
optimum. 

Lemma~\ref{sbound} holds as is. The argument for part (ii) is unchanged.
For part (i), suppose that $v_j$ is the node furthest from $r$ that is covered by some
optimal $(k,\ell)$-bottleneck-stroll solution, so $c_{ru_j}+d_{u_j}\leq\optbns{k,\ell}$. 
Then, in iteration $(j,\ell)$ of step S2, 
we add one or two points to $C$ such that the point 
$\bigl(\ell,\frac{2z}{k}+2(c_{ru_{j}}+d_{u_j})\bigr)$,
for some $z\leq k\cdot\optbns{k,\ell}$, lies in
the convex hull of the points added. Therefore, $s_\ell=f(\ell)\leq 4\cdot\optbns{k,\ell}$ 
since $f$ is the lower-envelope curve of $C$.
Finally, the proof that the cost of the solution returned is at most the length of the
shortest path in $\cg(s_1,\ldots,s_n)$ is essntially identical to the proof in the
LP-rounding $7.183$-approximation algorithm for \kmlp in Section~\ref{ndsrv}.

\end{document}